\newtheorem{lemma}{Lemma}
\newtheorem{theorem}[lemma]{Theorem}
\newtheorem{corollary}[lemma]{Corollary}
\newtheorem{observation}[lemma]{Observation}
\DeclarePairedDelimiter\ceil{\lceil}{\rceil}
\DeclarePairedDelimiter\floor{\lfloor}{\rfloor}
\title{An Implementation and Experimental Comparison of Dynamic Ordered Sets}
\author{Jordan Malek \\ jmalek@cs.toronto.edu}
\definecolor{codegreen}{rgb}{0,0.6,0}
\definecolor{codegray}{rgb}{0.5,0.5,0.5}
\definecolor{codepurple}{rgb}{0.58,0,0.82}
\definecolor{backcolour}{rgb}{0.95,0.95,0.92}
\lstdefinestyle{mystyle}{
	backgroundcolor=\color{backcolour},
	commentstyle=\color{codegreen},
	keywordstyle=\color{magenta},
	numberstyle=\tiny\color{codegray},
	stringstyle=\color{codepurple},
	basicstyle=\ttfamily\scriptsize,
	breakatwhitespace=false,
	breaklines=true,
	captionpos=b,
	keepspaces=true,
	numbers=left,
	escapeinside={@}{@}, 
	numbersep=5pt,
	showspaces=false,
	showstringspaces=false,
	showtabs=false,
	morekeywords={CAS, return},
	tabsize=4,
	emph={addr, break, while},
	emphstyle={\color{blue}}
}
\tikzset{subtree/.style = {isosceles triangle, isosceles triangle apex angle=60, text width=1em, draw=black, shape border rotate=90, align=center, anchor=north}}
\tikzset{vertex/.style = {circle,draw=black, scale=0.7}}
\tikzset{edge/.style = {->, thick}}
\tikzset{uEdge/.style = {-,thick}}
\tikzset{dottedEdge/.style = {->,thin,dotted}}
\tikzset{dottedVertex/.style = {circle,draw=black,dotted}}
\tikzset{tail/.style = {rectangle, very thick, inner sep=5pt, color=black, draw, text=black, fill=blue!20}}
\tikzset{head/.style = {very thick, rectangle split,rectangle split parts=3, draw, rectangle split, minimum size=15pt, inner sep=5pt, text=black, rectangle split part fill={blue!20, white!20, blue!20}}}
\tikzset{updateNode/.style = {very thick, rectangle split,rectangle split parts=4, draw, rectangle split, minimum size=15pt, inner sep=5pt, text=black, rectangle split part fill={blue!20, white!20, blue!20}}}
\tikzset{markedNode/.style = {very thick, rectangle split,rectangle split parts=4, draw, rectangle split, minimum size=15pt, inner sep=5pt, text=black, rectangle split part fill={blue!20, red!20, blue!20}}}
\tikzset{delFlagNode/.style = {very thick, rectangle split,rectangle split parts=4, draw, rectangle split, minimum size=15pt, inner sep=5pt, text=black, rectangle split part fill={blue!20, orange!20, blue!20}}}
\tikzset{insFlagNode/.style = {very thick, rectangle split,rectangle split parts=4, draw, rectangle split, minimum size=15pt, inner sep=5pt, text=black, rectangle split part fill={blue!20, violet!20, blue!20}}}
\tikzset{descNode/.style = {very thick, rectangle split,rectangle split parts=3, draw, rectangle split, minimum size=18pt, inner sep=5pt, text=black, rectangle split part fill={green!20, green!20}}}
\newcommand{\twoField}[2]{{\langle}#1,#2{\rangle}}
\newcommand{\threeField}[3]{{\langle}#1,#2,#3{\rangle}}
\newcommand{\algorithmicbreak}{\textbf{break}}
\newcommand{\Break}{\State \algorithmicbreak}
\newcommand{\alglinenoNew}[1]{\newcounter{ALG@line@#1}}
\newcommand{\alglinenoPop}[1]{\setcounter{ALG@line}{\value{ALG@line@#1}}}
\newcommand{\alglinenoPush}[1]{\setcounter{ALG@line@#1}{\value{ALG@line}}}
\algrenewcommand\algorithmicindent{1.0em}
\begin{document}

\maketitle

\section*{Acknowledgements}
I would like to thank my supervisor, Faith Ellen. 
She has been enormously generous with her time and effort
towards this report. Her guidance and insights have 
left an immeasurable impact on me, both as a computer scientist and as a writer.
It has been a pleasure working and learning from not only her, but also her graduate students, Jeremy Ko and Jason Liu.
They have provided feedback and encouragement throughout this project.
I have adored the experience of working with Jeremy and exchanging ideas 
with him as he refined his theoretical implementation
and I refined my practical one of his work. 

I am grateful to Eric Ruppert, as without his encouragement and mentorship during my time at York University, 
I would not have pursued graduate studies.
The research project that I worked on with him 
during the Summer of 2021 was my first 
experience working with shared data structures.

I would also like to thank Trevor Brown for the helpful advice
he provided as I worked on this project.
He graciously allowed me to run experiments for this project on the Multicore Lab machines at the University of Waterloo.
This was vital for comparing my implementations and 
verifying their correctness.
\tableofcontents
\chapter{Introduction}

A \textit{dynamic ordered set} stores a dynamic subset, $S$, of keys 
from an ordered universe, \textit{U}.
It supports the following operations, where $x \in U$:
\begin{itemize}
    \item insert($x$): If $x \notin S$ then $x$ is inserted into $S$, otherwise this operation has no effect. 
    \item remove($x$): If $x \in S$ then $x$ is removed from $S$, otherwise this operation has no effect.
    \item predecessor($x$):  Returns the largest key in $S$ that is less than $x$, or a key that is not in $U$
    if there is no key smaller than $x$ in $S$.
    \item search($x$): Returns $True$ if $x \in S$, $False$ otherwise.
\end{itemize}

A binary trie is a sequential data structure which implements this abstract data type,
with $U = \{0, \dots, 2^k - 1\}$, for some $k \ge 0$.
It supports search with $O(1)$ worst-case step complexity and insert, remove and predecessor 
with $O(k)$ worst-case step complexity.
It consists of a perfect binary tree of height $k$ in which the leaves
are associated with distinct keys in $U$, in increasing order from left to right.
Each node in the binary tree stores a bit.
The bit stored in leaf $i$ is 1 if and only if 
key $i$ is in the set.
The bit of any internal node is 1 if and only if there is a leaf in the node's subtree whose bit is 1.
\begin{figure}[H]
        \centering
        \begin{tikzpicture}
            \node[vertex](root) at (0,3){1};
            \node[vertex](a0) at (-1.5,2){1};
            \node[vertex](a1) at (1.5,2){1};
            \node[vertex](b0) at (-2.2,1){0};
            \node(b0Label) at (-2.2,0.4){0};
            \node(b1Label) at (-0.8,0.4){1};
            \node(b2Label) at (0.8,0.4){2};
            \node(b3Label) at (2.2,0.4){3};
            \node[vertex](b1) at (-0.8,1){1};
            \node[vertex](b2) at (0.8,1){0};
            \node[vertex](b3) at (2.2,1){1};

            \draw[dottedEdge](root) to (a0);
            \draw[dottedEdge](root) to (a1);
            \draw[dottedEdge](a0) to (b0);
            \draw[dottedEdge](a0) to (b1);
            \draw[dottedEdge](a1) to (b2);
            \draw[dottedEdge](a1) to (b3);
        \end{tikzpicture}
        \caption*{A sequential binary trie representing the set $\{1, 3\}$}
\end{figure}
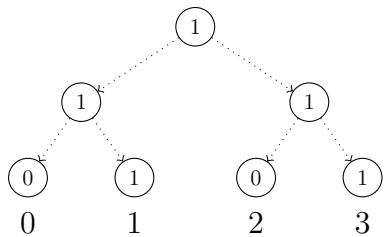

A $search(x)$ operation reads the leaf corresponding to key $x$ and returns \textit{True} 
if its bit is 1, otherwise it returns \textit{False}.
An $insert(x)$ operation sets the bits of the leaf corresponding to key $x$ 
and every ancestor of this leaf to 1.
A $delete(x)$ operation first sets the bit of the leaf corresponding to key $x$ to 0.
Then it traverses up the tree and sets the bit of every ancestor of this leaf to 0, provided 
both of its children have their bits equal to 0.
A $predecessor(x)$ operation first traverses up the tree from the leaf corresponding to key $x$.
If it encounters a node whose left child, $L$, has bit 1 and is not on this path, it 
returns the key corresponding to the right-most leaf with bit 1 in $L$'s subtree.
Otherwise it returns $-1$.

We provide an implementation of Jeremy Ko's lock-free binary trie\cite{jeremy_ko_trie,DBLP:journals/corr/abs-2405-06208}.
The implementation uses \textit{CAS} objects, \textit{writeable CAS} objects, \textit{SWAP} objects, \textit{fetch-and-add} objects, 
bounded min registers and lock-free linked lists.
We consider wait-free implementations of bounded min registers in Chapter \ref{minRegChapter}.
Chapter \ref{trieImplementationChapter} discusses our implementation of 
the lock-free binary trie.
In Chapter \ref{experimentChapter}, we discuss experiments that we used to evaluate 
our implementation.
We compare its performance with that of other implementations of lock-free dynamic 
ordered sets and discuss the results of our experiments.

One interesting component of our implementation of Ko's trie is a lock-free linked 
list that allows many processes to try to insert the same node.
This contrasts with existing implementations of lock-free linked lists, 
in which any particular node may be inserted by at most one process.
The implementation of this linked list is presented in Section \ref{lockFreeLinkedLists}.
In Section \ref{memoryReclamationSection}, we first generalize an epoch-based reclamation scheme\cite{10.1145/2767386.2767436,DBLP:journals/corr/abs-1712-01044} 
and show it can be used to reclaim memory under certain conditions.
Then we show that our implementation of Ko's trie, which uses this variant, 
satisfies these conditions.


\chapter{Model}
We consider an asynchronous shared memory model\cite{doi:https://doi.org/10.1002/0471478210.ch4}.  
The system consists of $N \ge 2$ processes, which communicate by performing operations on 
shared objects in shared memory.
Every process is given an id from $0$ to $N - 1$, and we denote the process with id $i$ as $p_i$.
Every shared object has a set of possible values and a set of 
operations that it supports.
When a process performs an operation instance on a shared object, it 
receives a response, modifies the object, or both.
A \textit{base object} is a shared object that is provided by the system.
A \textit{configuration} contains the values of all base objects in shared memory and the state of every process.
A \textit{step} consists of an operation instance that is performed by a process on a base object, 
the process that performs the instance, the base object the instance is applied to 
and the response of the instance.
A step, $s$, by process $p$ in configuration $C$ leads to a new configuration, $C'$, which is the same as $C$, 
except that the contents of the shared object that $p$ 
performed $s$ on and $p$'s state may have changed.
There is an initial configuration, $C_0$, consisting of the initial state of all processes and initial value of all shared objects.
An \textit{execution} is a sequence beginning from $C_0$ which alternates between steps and configurations.
If the execution is finite then it ends in a configuration.

The base objects offered by the system include \textit{registers}, \textit{AND} objects, 
\textit{fetch-and-add} objects, \textit{SWAP} objects, \textit{CAS} (\textit{compare-and-swap}) objects,
and \textit{writeable CAS} objects. 

A \textit{register}, $r$, stores a value and supports \textit{read()} and \textit{write(v)} operations.
A \textit{read()} operation on a shared object retrieves the value stored by the shared object.
A \textit{write(v)} operation on a shared object overwrites its value with $v$.

A $b$-bit \textit{AND} object, $r$, stores a value of $b$ bits. 
It supports \textit{read()} and \textit{AND(v)} operations, where $v$ is a string of $b$ bits.
An $\mathit{AND(v)}$ operation on $r$ atomically 
computes the bitwise-\textit{AND} of $r$ and $v$ and stores the result in $r$.

A \textit{fetch-and-add} object, $r$, stores an integer. 
It supports \textit{read()} and \textit{fetch-and-add(v)} operations, where $v$ is an integer.
A \textit{fetch-and-add(v)} operation on $r$ atomically stores the sum of $r$'s value and $v$ in $r$, 
while returning the value of $r$ prior to the operation.
An increment of $r$ may be performed using a \textit{fetch-and-add(1)} operation,
whereas a decrement may be performed using a \textit{fetch-and-add(-1)} operation.

A \textit{SWAP} object, $r$, stores a value. It supports \textit{read()} and \textit{swap(v)} operations.
A $\mathit{swap(v)}$ operation on $r$ atomically overwrites the value of 
$r$ with $v$ and returns the value of $r$ prior to the operation.

A \textit{CAS} object, $r$, stores a value from an arbitrary domain. 
It supports \textit{read()} and \textit{CAS(v,k)} operations.
A \textit{CAS(v,k)} operation on $r$ atomically checks if $r$'s value is equal to $v$ and if so,
updates the value of $r$ to $k$. 
We say that the operation was successful if and only if $r$'s value was updated by the 
operation to $k$.
The value that $r$ held prior to the operation is returned. 
This return value can be used to verify if the operation was successful.

A \textit{writeable CAS} object, $r$, is the same as a \textit{CAS} object except that $r$ also supports \textit{write(v)} operations.

Consider two processes, $p$ and $q$, and a \textit{CAS} (or \textit{writeable CAS}) 
object, $r$, whose initial value is $A$.
Suppose both $p$ and $q$ would like to change the value of $r$ to $B$,
provided its value has never previously been equal to $B$, 
before changing the value of $r$ back to $A$.
To accomplish this, each process will first perform a $\mathit{CAS(A, B)}$
operation on $r$ and, if this \textit{CAS} is successful, they
will perform a $\mathit{CAS(B, A)}$ operation on $r$.
Suppose $p$ performs its \textit{CAS} first, which is successful,
then $p$ performs the \textit{CAS} to update the value of $r$ 
back from $B$ to $A$.
Now when $q$ performs its first \textit{CAS}, it is successful 
even though $r$ had previously been updated to $B$,
violating our requirements.
Process $q$ expected its \textit{CAS} would only succeed if 
$r$ had not previously been changed, but since $p$ 
updated $r$ back to $A$, the \textit{CAS} by $q$ was successful.
This is known as the \textit{ABA problem} since it occurs when the 
value of a \textit{CAS} object is updated back to a value it held 
previously (in this case, $A$) and at least one process that is unaware that 
its value has changed is about to perform a \textit{CAS} to update it from this value.
We will discuss other examples of the \textit{ABA problem} in 
this report.

Some shared objects, such as \textit{LL/SC} objects, are not available in the system.
An \textit{LL/SC} object, \textit{x}, stores a value and supports two operations,
\textit{LL (load-linked)} and \textit{SC (store conditional)}.
An \textit{x.LL()} operation returns the value of \textit{x}. Any process, $p$, must have performed an 
\textit{x.LL()} instance before it may perform an instance of \textit{SC} on $x$.
An \textit{x.SC(v)} operation on \textit{x} by $p$ will update the value of \textit{x} to $v$ and return \textit{True},
provided no process has performed an \textit{SC} on \textit{x} since $p$ performed its most recent \textit{x.LL()} instance.
Otherwise, the value of \textit{x} is not changed and the operation returns \textit{False}.
Since an \textit{SC} instance by a process only succeeds if the value of the \textit{LL/SC} object
has not changed since the last \textit{LL} instance by the process, 
\textit{LL/SC} objects are not susceptible to the \textit{ABA problem}.

We discuss implementations of shared objects which are not available in the system.
An implementation provides a representation of the implemented object in shared memory using base objects and provides algorithms for the operations that the implemented object supports.
We consider algorithms in which at least one operation on a base object is performed.
An instance of an operation by a process on an implemented object starts when 
the instance performs the first operation on a base object in the algorithm for the operation. 
This operation instance ends when it has performed 
the last operation on a base object in this algorithm.
Two operation instances, $A$ and $B$, \emph{overlap} if $A$ does not end before $B$ starts and $B$ does not end before $A$ starts.
An implemented object is \textit{linearizable} if, for any execution, $e$, there exists a sequence, $o$, 
of all completed instances of operations
on the object in $e$, and a subset of all incomplete instances, such that:
\begin{itemize}
    \item if an instance, $A$, ends in $e$ before the start of another instance, $B$, then $A$ precedes $B$ in $o$, and,
    \item every complete instance in $e$ receives the same response as 
    it would in an execution in which the instances in $o$ 
    are performed sequentially.
\end{itemize}

An implemented object is \textit{wait-free} if, whenever any process is performing an operation 
instance on the object and takes a sufficient finite number of steps, 
this operation instance ends.
An implemented object is \textit{lock-free} if, whenever any process is performing an operation instance on the object and takes a sufficient finite number of steps, 
some operation instance on the object ends. 
However, this may not be the operation instance that this process was performing.
Every wait-free implemented object is also lock-free.
An operation instance by some process on an object that is lock-free but not wait-free 
could take unbounded numbers of steps without ending, provided operation instances by other processes continue to finish.

\chapter{Related Work}
This section introduces some related work which is relevant to our implementation of Ko's Trie.

Valois\cite{DBLP:conf/podc/Valois95}, Harris\cite{10.5555/645958.676105}, and Fomitchev and Ruppert\cite{10.1145/1011767.1011776,fomitchevThesis} 
produced implementations of shared singly linked lists from \textit{CAS} objects, 
which they used to implement dynamic ordered sets.

Valois implemented the first shared linked list based on \textit{CAS}
objects that did not use locks and was not based on a universal construction.
His linked list consists of normal nodes and auxiliary nodes.
An auxiliary node only has a \textit{next} field. 
There is at least one auxiliary node between every two normal nodes in the list.
Every normal node stores a \textit{key} from a totally ordered universe, a \textit{next} field and a \textit{backlink}.
For any normal node that is in the list, its \textit{next} field points to the auxiliary node
following it in the list.
The \textit{backlink} of a normal node is set to point to a normal node which occurs earlier in the list.
This allows a process to resume from a normal node that occurs earlier in the list 
when the normal node it is currently visiting is removed.
Two sentinel normal nodes, $first$ and $last$, are always at the beginning and end of the list respectively.
At any time, the set represented by the data structure consists of the keys of normal nodes which are between $first$ and $last$.
The normal nodes in the list are in strictly ascending order: if there is a path 
of auxiliary nodes from a normal node $A$ to a normal node $B$, then $A.key < B.key$.

If a key $k$ is not in the set, a process may insert it into the set.
First, the process traverses to the earliest normal node, $v$, whose key is greater than $k$.
Next, the process allocates an auxiliary node, which points to $v$,
and a normal node, $u$, which has \textit{key} $k$ and whose \textit{next} field points to the auxiliary node.
Finally, the process performs a \textit{CAS} on the auxiliary node 
preceding $v$, to update its \textit{next} field from pointing to $v$ to pointing to $u$.

If a key is in the set, a process may remove it from the set by performing a \textit{CAS} on the auxiliary 
node that precedes the normal node containing the key.
The preceding auxiliary node is updated to point to the auxiliary node which follows the normal node with the key.
After this, the process will try to 
remove the auxiliary node which followed the normal node from the list using a \textit{CAS}.
It may fail to do so if another process is performing an operation instance,
leaving this extra auxiliary node in the list.
Executions exist in which, despite only two processes performing operation instances, 
the list becomes filled with an arbitrary number of consecutive auxiliary nodes.
This can make operation instances very expensive.
However, in any configuration in which no process is performing an operation instance, 
there are no extra auxiliary nodes in the list.
\medskip

Harris implemented a linearizable lock-free linked list.
His linked list consists of nodes which contain a \textit{key} and a \textit{next} field.
If a node is in the list, its \textit{next} field, which is a \textit{CAS} object, stores a pointer to the following node in the list,
along with a \textit{mark} bit which is stored in the lowest order bit of its \textit{next} field.
Before the node is removed from the list, this \textit{mark} bit becomes permanently set to 1.
Once this happens, the node's \textit{next} field cannot be changed for the rest of the execution
and the node is considered to be deleted.
We say a node is \textit{marked} if and only if the \textit{mark} bit of its \textit{next} field
is 1.

The elements in the linked list are sorted in strictly ascending order, that is,
for every two consecutive nodes, $A$ and $B$,
$A.key < B.key$.
Two sentinel nodes, \textit{head} and \textit{tail}, are always at the start and end 
of the list, respectively. 
At any time, the set represented by the data structure contains a key if and only if there 
exists an unmarked node containing the key that is reachable from \textit{head}.

When searching through the list, sequences of one or more marked nodes can be physically removed via a \textit{CAS} on 
the \textit{next} field of the unmarked node immediately preceding the sequence.
If the node a process is currently visiting becomes marked, it could be necessary for the process 
to restart its traversal from \textit{head}.
For example, a process trying to insert a node $B$ may first traverse the list to find two consecutive 
nodes $A$ and $C$, such that $A.key < B.key < C.key$.
Next, the process will update $B.next$ to point to $C$ and use a \textit{CAS} to attempt 
to update $A.next$ to point to $B$, thus inserting $B$ between $A$ and $C$.
If $A$ is marked, this \textit{CAS} will fail.
Since the process no longer has a pointer to an earlier unmarked node in the list, 
it can no longer reach the latest unmarked node whose key is less than $B.key$.
Therefore, the process will have to restart its traversal from \textit{head} to insert $B$ into the list.
A process trying to remove a node $B$ from the list could experience the same problem.
It may traverse to the node, $A$, which immediately precedes $B$ and use a \textit{CAS}
to attempt to remove $B$. However, if $A$ is marked, this \textit{CAS} will 
fail.
Since $A$ is marked, the process must restart its traversal from \textit{head} to 
locate an unmarked node which precedes $B$ so that it can remove $B$ from the list.
An insert or delete instance could be forced to restart from \textit{head} for every delete 
that overlaps with it.
The worst-case amortized step complexity of Harris's list is 
$\Omega(mc)$, where $m$ is 
the number of elements in the list at the beginning of the execution and $c$ is the maximum number of processes 
that perform operation instances at the same time as each other in the execution.

\medskip
Fomitchev and Ruppert implemented a linearizable lock-free linked list 
based on \textit{CAS} with amortized step complexity $O(c + m)$.
The design of this structure combines some new ideas with the \textit{backlinks} of Valois and \textit{mark} bits of Harris.
It ensures that processes may recover gracefully upon failing a \textit{CAS} step on a node and discovering the node is now \textit{marked}.
Every node in the list stores a \textit{key}, a \textit{successor} field and a \textit{backlink} field. 
The \textit{successor} field of a node is a \textit{CAS} object that stores three subfields:
\begin{itemize}
    \item $next$, which stores a pointer to a node,
    \item $mark$, a bit which is initially 0, and
    \item $flag$, a bit which is initially 0.
\end{itemize}
If a node is in the linked list, its \textit{next} field stores a pointer to the node which follows it in the linked list.
The \textit{mark} and \textit{flag} bits are used to help coordinate the removal of nodes from the list.
We say that a node is \textit{marked} if its \textit{mark} bit is 1
and it is \textit{unmarked} otherwise.
Similarly, we say that a node is \textit{flagged} if its \textit{flag} bit is 1 
and it is \textit{unflagged} otherwise.
A node cannot be marked and flagged at the same time.
If a node, $A$, is flagged, this indicates that the following node, $B$, is being removed from the linked list.
When a process encounters $A$ in this state, it will help remove $B$ from the linked list.
In the following figures, \textcolor{orange}{F} denotes a flagged node and \textcolor{red}{M} denotes a marked node.
\begin{figure}[H]
        \centering
        \begin{tikzpicture}
            \node[vertex](head) at (0,0){head};
            \node[vertex](A) at (2,0){$A$};
            \node[vertex](B) at (3,0){$B$};
            \node[vertex](C) at (4,0){$C$};
            \node[vertex](tail) at (6,0){tail};

            \draw[dottedEdge](head) to (A);
            \draw[edge](A) to (B);
            \draw[edge](B) to (C);
            \draw[dottedEdge](C) to (tail);

            \node[color=orange](flag) at (2.4,0.2){$F$}; 
        \end{tikzpicture}
    \end{figure}
\noindent First, the process will set the backlink of $B$ to point to $A$.
\begin{figure}[H]
        \centering
        \begin{tikzpicture}
            \node[vertex](head) at (0,0){head};
            \node[vertex](A) at (2,0){$A$};
            \node[vertex](B) at (3,0){$B$};
            \node[vertex](C) at (4,0){$C$};
            \node[vertex](tail) at (6,0){tail};

            \draw[dottedEdge](head) to (A);
            \draw[edge](A) to (B);
            \draw[edge](B) to (C);
            \draw[dottedEdge](C) to (tail);

            \node[color=orange](flag) at (2.4,0.2){$F$}; 
            \draw[edge](B) to [bend right=60] (A) node[node font=\tiny, above=4pt]{backlink}; 
        \end{tikzpicture}
    \end{figure}
    \noindent If $B$ is unmarked and unflagged, the process performs a \textit{CAS} to try to mark $B$
    leaving $B.next$ unchanged.
    If $B$ is flagged, the process will try to help to remove the node following $B$ 
    and update $B$ to being unflagged.
    The process continues performing these steps until $B$ is marked.
    In the following figure, because $B$ is unflagged and unmarked, the process performs a \textit{CAS} to mark $B$. 
    \begin{figure}[H]
        \centering
        \begin{tikzpicture}
            \node[vertex](head) at (0,0){head};
            \node[vertex](A) at (2,0){$A$};
            \node[vertex](B) at (3,0){$B$};
            \node[vertex](C) at (4,0){$C$};
            \node[vertex](tail) at (6,0){tail};

            \draw[dottedEdge](head) to (A);
            \draw[edge](A) to (B);
            \draw[edge](B) to (C);
            \draw[dottedEdge](C) to (tail);

            \node[color=orange](flag) at (2.4,0.2){$F$}; 
            \draw[edge](B) to [bend right=60] (A) node[node font=\tiny, above=4pt]{backlink}; 
            \node[color=red](mark) at (3.5,0.2){$M$}; 
        \end{tikzpicture}
    \end{figure}
\noindent Finally, the \textit{successor} field of $A$ is updated via \textit{CAS} so that $A$ is unflagged and pointing to
    the node $C$ that immediately follows $B$.
    This completes the removal of $B$ from the list.
    \begin{figure}[H]
        \centering
        \begin{tikzpicture}
            \node[vertex](head) at (0,0){head};
            \node[vertex](A) at (2,0){$A$};
            \node[vertex](B) at (3,0.6){$B$};
            \node[vertex](C) at (4,0){$C$};
            \node[vertex](tail) at (6,0){tail};

            \draw[dottedEdge](head) to (A);
            \draw[edge](A) to (C);
            \draw[edge](B) to (C);
            \draw[dottedEdge](C) to (tail);
            \draw[edge](B) to [bend right=60] (A) node[node font=\tiny, rotate=30, above=9pt]{backlink}; 
            \node[color=red,rotate=-25](mark) at (3.6,0.5){$M$}; 
        \end{tikzpicture}
    \end{figure} 
    
Since the \textit{backlink} of a node is always set to a preceding node before the node is marked,
processes that are currently visiting a marked node can always follow a path backwards to an earlier unmarked node in the list.
This means they do not need to restart their traversal from \textit{head},
helping reduce the step complexity as compared to Harris's list.
In addition to helping when encountering a node that is flagged, a process 
will also help when it encounters a node, $mNode$, that is marked and still in the list.
First, the process will read $mNode$'s \textit{backlink} to obtain a 
pointer to the node, \textit{pNode}, that precedes $mNode$ in the list.
Then the process will perform a \textit{CAS} on \textit{pNode}'s \textit{successor},
updating it so that it is unflagged and pointing to the node which follows $mNode$.
This completes the removal of $mNode$ from the list.

Two sentinel nodes, \textit{head} and \textit{tail}, are always at the start and end 
of the list respectively. 
At any time, the set represented by the data structure contains a key if and only if there 
exists an unmarked node containing the key that is reachable from \textit{head}.
The elements in the linked list are sorted in strictly ascending order, that is,
for every two consecutive nodes, $A$ and $B$,
$A.key < B.key$.

To search the set for a key, $x$, a process, \textit{p}, will traverse to find two consecutive unmarked 
nodes, $A$ and $C$, such that $A.key \le x < C.key$, returning whether $A.key = x$.

The steps to insert a key are virtually the same as those for Harris's list.
To insert a key $x$ into the set, a process, \textit{p}, will create a node $B$ which stores key $x$,
and then search for two consecutive unmarked nodes, $A$ and $C$, such that $A.key \le x < C.key$.
\begin{figure}[H]
        \centering
        \begin{tikzpicture}
            \node[vertex](head) at (0,0){head};
            \node[vertex](A) at (2,0){$A$};
            \node[vertex](B) at (3,0.8){$B$};
            \node[vertex](C) at (4,0){$C$};
            \node[vertex](tail) at (6,0){tail};

            \draw[dottedEdge](head) to (A);
            \draw[edge](A) to (C);
            \draw[dottedEdge](C) to (tail);
        \end{tikzpicture}
\end{figure}
\noindent If $A.key = x$ then $x$ is already in the set, so $p$ ends its operation instance.
If $A.key < x$, $p$ first updates $B$'s \textit{successor} field to point to $C$, as shown in the figure.
    \begin{figure}[H]
        \centering
        \begin{tikzpicture}
            \node[vertex](head) at (0,0){head};
            \node[vertex](A) at (2,0){$A$};
            \node[vertex](B) at (3,0.8){$B$};
            \node[vertex](C) at (4,0){$C$};
            \node[vertex](tail) at (6,0){tail};

            \draw[dottedEdge](head) to (A);
            \draw[edge](A) to (C);
            \draw[edge](B) to (C);
            \draw[dottedEdge](C) to (tail);
        
        \end{tikzpicture}
    \end{figure}
\noindent Then $p$ performs a \textit{CAS} on $A$'s \textit{successor}, updating it from pointing to $C$ to pointing to $B$,
provided $A$ is unflagged and unmarked.
\begin{figure}[H]
    \centering
    \begin{tikzpicture}
        \node[vertex](head) at (0,0){head};
        \node[vertex](A) at (2,0){$A$};
        \node[vertex](B) at (3,0){$B$};
        \node[vertex](C) at (4,0){$C$};
        \node[vertex](tail) at (6,0){tail};

        \draw[dottedEdge](head) to (A);
        \draw[edge](A) to (B);
        \draw[edge](B) to (C);
        \draw[dottedEdge](C) to (tail);
    \end{tikzpicture}
\end{figure}
\noindent If this \textit{CAS} is unsuccessful, $p$ will resume its search from $A$ 
to find two consecutive nodes whose keys are at most $x$ and greater than $x$, respectively,
so that it may find the correct place to insert $B$.

To remove a key $x$ from the set, a process, \textit{p}, will search for two consecutive unmarked nodes, 
$A$ and $B$, such that $A.key < x \le B.key$.
If $B.key \neq x$, then $x$ is not in the set, so $p$ ends its operation instance.
If $B.key = x$, $p$ will perform a \textit{CAS} on $A$'s \textit{successor} field, trying to update 
it so that $A$ is flagged and still pointing to $B$.
Assuming this \textit{CAS} is successful, $p$ performs helping to remove $B$ as we previously described.
If this \textit{CAS} is not successful, $p$ will resume its search 
from $A$ find two consecutive nodes whose keys are less than $x$ and at least $x$, respectively,
so that it may find and remove a node containing $x$.

\medskip

A \textit{Destination} object stores a value and supports \textit{read()}, \textit{write(v)} and \textit{copy(ptr)} operations.
A \textit{copy(ptr)} operation atomically copies the contents of a shared object into the \textit{Destination} object,
where $ptr$ is a pointer to the shared object whose contents will be copied.
Blelloch and Wei\cite{DBLP:conf/wdag/BlellochW20} give a wait-free implementation of single-writer \textit{Destination} objects 
from \textit{CAS} objects with $O(1)$ step complexity.
For every single-writer \textit{Destination} object,
only one process may perform \textit{write} and \textit{copy} operation instances on the object,
whereas all processes may perform \textit{read} instances on it.
They implement a single-writer \textit{Destination} object from a weak version of an \textit{LL/SC} object and a register.
The weak \textit{LL/SC} object supports a \textit{wLL} operation instead of an \textit{LL} operation.
A \textit{wLL} operation is the same as an \textit{LL} operation, except that an instance of \textit{wLL}
may return $\bot$ if it overlaps with a successful \textit{SC} instance,
in which case we say this instance has failed.
A process that performs a \textit{wLL} that fails must perform a successful \textit{wLL}
before it may perform a successful \textit{SC} instance.

In their implementation, every single-writer \textit{Destination} object, $O$, 
is represented by a weak \textit{LL/SC} object, \textit{data}, and a register, \textit{old}.
Let $p$ be the process that can perform \textit{copy} and \textit{write} instances on $O$.
Two fields, \textit{val} and \textit{copying}, are stored in \textit{data}.
The \textit{copying} field stores a single bit, which is initially 0.
When \textit{copying} is equal to 0, \textit{val} stores the 
current value of $O$.
When \textit{copying} is equal to 1, \textit{val} stores a pointer 
to a shared object whose contents are being copied into $O$ by $p$.
When $p$ performs an instance of \textit{write} or \textit{copy} on $O$,
it first writes the current value of $O$ into \textit{old}
before the value of $O$ is changed.

When any process, $q$, wants to read $O$, it first performs an instance of \textit{wLL} on \textit{data}.
If this \textit{wLL} is unsuccessful, $q$ will perform a second instance of \textit{wLL} on \textit{data},
reading and returning the value of \textit{old} if this second \textit{wLL} is unsuccessful.
Suppose either $q$'s first \textit{wLL} is successful, or it fails but $q$'s second \textit{wLL} is 
successful.
Let $v'$ and $c'$ be the values of \textit{val} and \textit{copying}, respectively,
that $q$ obtained from \textit{data}.
If $c'$ is equal to 0, then $v'$ was the value of $O$ when $q$ performed 
its successful \textit{wLL} of \textit{data}, so $q$ returns $v'$.
Suppose $c'$ is equal to 1, so $v'$ holds a pointer to a shared
object whose value is being copied into $O$ by $p$.
In this case, $q$ will first read the value, $z$, of 
this shared object.
Next, $q$ performs an \textit{SC} on \textit{data}, 
trying to update \textit{val} to $z$ and \textit{copying} to 0.
If this \textit{SC} is successful, $q$ returns $z$.
Suppose this \textit{SC} is not successful. 
In this case, some other process finished copying the value of the shared object into $O$.
Therefore $q$ will perform another \textit{wLL} on \textit{data}.
If this \textit{wLL} is unsuccessful or it is successful but \textit{copying} was equal to 1, $q$
reads and returns the value of \textit{old}.
If this \textit{wLL} is successful but \textit{copying} is equal to 0, 
$q$ returns the value of \textit{val} it obtained during this \textit{wLL} instance.

To copy the contents of a shared object pointed to by $ptr$ into $O$, 
process $p$ will first perform an instance of \textit{wLL} on \textit{data}.
Next, $p$ performs an \textit{SC} instance on \textit{data} to update \textit{val} to 
\textit{ptr} and \textit{copying} to 1.
Then $p$ reads the contents, $v$, of the shared object pointed to by \textit{ptr}.
Now $p$ performs another \textit{wLL} instance on \textit{data} and, 
if successful, $p$ performs an \textit{SC} instance on \textit{data} to 
try to update \textit{val} to $v$ and \textit{copying} to 0.
If $p$'s second \textit{wLL} instance is unsuccessful, or it succeeds but $p$'s second \textit{SC} instance 
is unsuccessful, some other process finished copying the contents of the shared object 
into $O$, so $p$ returns.
Otherwise, $p$ finishes copying $v$ into $O$.
The first instance of \textit{wLL} and \textit{SC} by $p$ are guaranteed to succeed.
This is because \textit{copying} is equal to 1 until $p$ has finished performing 
these instances and other processes may only perform successful \textit{SC} instances 
when \textit{copying} is equal to 1.

To write a value, $v$, into $O$, process $p$ will first perform an instance of \textit{wLL} on \textit{data}.
Then $p$ performs an \textit{SC} instance on \textit{data} to update \textit{val} to $v$, 
leaving \textit{copying} still equal to 0.
Both the \textit{wLL} and \textit{SC} instances by $p$ are guaranteed to succeed,
since \textit{copying} remains equal to 0 while $p$ is writing $v$ to $O$.

\chapter{Wait-Free Bounded Min Registers} \label{minRegChapter}
An $X$-bounded min register, $m$, is a shared object that stores a non-negative integer value in the range 0 to $X-1$.
Initially the value of $m$ is $X-1$.
It supports $\mathit{minRead()}$ and $\mathit{minWrite(v)}$ operations, where $v \in \{0, \dots, X-1\}$.
A $\mathit{minWrite(v)}$ of \textit{m} will overwrite the value of $m$ with $v$ if $v$ is less than $m$'s current value.
A $\mathit{minRead()}$ of \textit{m} returns the current value of $m$.
We say that $m$ is $X$-bounded since it can hold $X$ distinct values during an execution.

In this section, we present an existing implementation of bounded min registers due to Aspnes, Attiya, and Censor-Hillel\cite{10.1145/2108242.2108244}.
We then give an implementation of a wait-free $(b+1)$-bounded min register from a $b$-bit AND object.
Finally, we describe an implementation of bounded min registers which combines both these approaches.

\section{Aspnes, Attiya and Censor-Hillel's Bounded Min Register} \label{aspnesMinReg}
Aspnes, Attiya and Censor-Hillel\cite{10.1145/2108242.2108244} show inductively how to produce a linearizable, wait-free $X$-bounded max register
from $X-1$ single-bit registers. We present an analogous version of their implementation for min registers instead.

A 2-bounded min register $m$ may be implemented from a single bit register $r$, which has initial value 1.
To perform an instance of a \textit{minRead()} operation on $m$, a process reads the value of $r$ and returns it.
To perform an instance of a \textit{minWrite(0)} operation on $m$, a process writes 0 to $r$.
A \textit{minWrite(1)} operation on $m$ does nothing. Every operation instance performs at most a single step, 
thus this min register is wait-free. Instances of operations which perform a step may be linearized 
in the order their steps are performed, and those that do not perform a step may be ordered arbitrarily.

Suppose we have two wait-free and linearizable min registers $\mathit{left}$ and $\mathit{right}$, 
which are $L$-bounded and $R$-bounded respectively, and a 2-bounded min register $\mathit{switch}$.
Then we can implement an $(L+R)$-bounded min register, $m$, as follows.

To perform a $\mathit{minRead()}$ of $m$, a process, \textit{p}, performs a \textit{minRead()} of \textit{switch}.
If the value was 1, $p$ performs a $\mathit{minRead()}$ on $\mathit{right}$ and returns the result plus $L$. 
If the value was 0, $p$ performs a $\mathit{minRead()}$ on $\mathit{left}$ and returns the result.

To perform a $\mathit{minWrite(v)}$ of $m$ when $v$ is at least $L$, 
$p$ performs a \textit{minRead()} of $\mathit{switch}$. 
If the value was 1, then $p$ performs an instance of $\mathit{minWrite(v - L)}$ on $\mathit{right}$.
If the value was 0, then $p$ returns.
To perform a $\mathit{minWrite(v)}$ instance on $m$ when $v$ is less than $L$,
$p$ performs a $\mathit{minWrite(v)}$ instance on $\mathit{left}$ and then performs a \textit{minWrite(0)} instance on 
\textit{switch}.

This figure shows their implementation of a 16-bounded min register, $m$, 
which is implemented from two 8-bounded min registers, \textit{left} and \textit{right}.
\begin{figure}[H]            
    \centering
    \begin{tikzpicture}
        \node(leftLabel) at (-2,-1){left};
        \node(rightLabel) at (2,-1){right};
        \node(switchLabel) at (-1,0){switch};
        \node[vertex](switch) at (0,0){1};
        \node[subtree](left) at (-1.5,-1){7};
        \node[subtree](right) at (1.5,-1){7};

        \draw[dottedEdge](switch) -- (left);
        \draw[edge](switch) -- (right);
    \end{tikzpicture}
\end{figure}
\noindent In this figure, a process performs a \textit{minWrite(12)} instance on $m$. 
It performs a \textit{minRead} of \textit{switch} and since \textit{switch=1}, 
it performs a \textit{minWrite(4)} instance on \textit{right}.
\begin{figure}[H]
    \centering
    \begin{tikzpicture}
        \node(leftLabel) at (-2,-1){left};
        \node(rightLabel) at (2,-1){right};
        \node(switchLabel) at (-1,0){switch};
        \node[vertex](switch) at (0,0){1};
        \node[subtree](left) at (-1.5,-1){7};
        \node[subtree](right) at (1.5,-1){4};

        \draw[dottedEdge](switch) -- (left);
        \draw[edge](switch) -- (right);
    \end{tikzpicture}
\end{figure}
\noindent In this next figure, another process performs a \textit{minWrite(2)} instance on $m$.
It performs a \textit{minWrite(2)} instance on \textit{left}, then performs a \textit{minWrite(0)} instance on \textit{switch}.
\begin{figure}[H]
    \centering
    \begin{tikzpicture}
        \node(leftLabel) at (-2,-1){left};
        \node(rightLabel) at (2,-1){right};
        \node(switchLabel) at (-1,0){switch};
        \node[vertex](switch) at (0,0){0};
        \node[subtree](left) at (-1.5,-1){2};
        \node[subtree](right) at (1.5,-1){4};

        \draw[edge](switch) -- (left);
        \draw[dottedEdge](switch) -- (right);
    \end{tikzpicture}
\end{figure}

Their implementation of an $X$-bounded min register, $m$, where $X > 1$ 
is a binary tree of height $\ceil{\log_2 X} - 1$ 
whose leaves are 2-bounded min registers. 
Each operation instance on $m$ performs at most $\ceil{\log_2 X}$ steps.
Therefore $m$ is wait-free.
The proof that $m$ is linearizable is more involved.
In Section $\ref{bigMinReg}$, we present an implementation of a wait-free bounded min register which generalizes the min register discussed in this section 
to a $k$-ary tree, along with a proof of its linearizability. 
Since $m$ is a special case of the bounded min register discussed in that section, 
the proof of linearizability also applies to $m$.

\section{A Wait-Free Bounded Min Register from an AND object} \label{smallMinReg}
Using a $b$-bit AND object, $r$, we can implement a $(b+1)$-bounded min register, $m$, that 
only requires one step per operation instance.
The value of $m$ is represented by the value of $r$ in unary.
When the value of $m$ is $i$, $r$ contains $\mathit{0^{b-i}1^i}$,
that is, $(b-i)$ 0-bits followed by $i$ 1-bits.
Initially, every bit of $r$ is 1, indicating $m$ has the value $b$. 

To perform a $\mathit{minRead()}$ of $m$, a process will read $r$, 
and return the number of consecutive ones that were in $r$ 
starting from the right.
%
To perform a $\mathit{minWrite(v)}$ of $m$, a process will 
set the leftmost $b-v$ bits of $r$ to 0.
It will do this by performing an \textit{AND} of $r$ 
with $0^{b-v}1^v$. If the bits of $r$ are numbered $b-1$ to $0$ from left to right,
this step will set bits $b-1$, $b-2$, $\dots$, $v$ to 0.
If $v$ is greater than zero, bits $v-1$ to $0$ of $r$ are left unchanged.
This figure shows the initial state of a 9-bounded min register, $m$, which is 
implemented using an 8-bit \textit{AND} object.
\begin{figure}[H]
    \centering
    \begin{tikzpicture}[
        node distance=0pt,
            start chain = A going right,
            X/.style = {rectangle, draw,
                        minimum width=2ex, minimum height=3ex,
                        outer sep=0pt, on chain}]
        \foreach \i in {1,1,1,1,1,1,1,1}
        \node[X] {\i};
    \end{tikzpicture}
\end{figure}
This figure shows the contents of this \textit{AND} object
following a \textit{minWrite(5)} instance on $m$.
    \begin{figure}[H]
        \centering
        \begin{tikzpicture}[
        node distance=0pt,
         start chain = A going right,
            X/.style = {rectangle, draw,
                        minimum width=2ex, minimum height=3ex,
                        outer sep=0pt, on chain}]
        \foreach \i in {0,0,0,1,1,1,1,1}
        \node[X] {\i};
        \end{tikzpicture}
    \end{figure}


This implementation is wait-free as each operation instance performs a single step.
The instances of $\mathit{minRead()}$ and $\mathit{minWrite(v)}$ operations 
are linearized in the order that their steps are performed.

On modern 64 bit systems, this implementation provides a simple and efficient 
wait-free 65-bounded min register. Such machines support an instruction called \textit{tzcnt}, 
which returns the number of consecutive 0's starting from the right of a local register.
It is straightforward to use this instruction to instead obtain the number of consecutive 1's starting from the right.

\section{Combining Both Approaches for Building Bounded Min Registers} \label{bigMinReg}
The min register described in Section \ref{aspnesMinReg} used \textit{switch} as a 2-bounded min register.
A $k$-bounded min register implemented from a $\mathit{(k-1)}$-bit AND object, as discussed in Section \ref{smallMinReg}, 
can be used to implement an
$X$-bounded min register based on a $k$-ary tree of height $\ceil{\log_k X} - 1$ instead of $\ceil{\log_2 X} - 1$, for any $k \ge 2$.

Suppose we can implement a $Y$-bounded min register that is wait-free and linearizable.
Then we can implement a $kY$-bounded min register, $m$, from a wait-free and linearizable $k$-bounded min register \textit{switch}
and $k$ wait-free and linearizable $Y$-bounded min registers, denoted $T_i$, for $0 \le i < k$.

To perform a \textit{minRead()} on $m$, a process, \textit{p}, will first 
use $\mathit{minRead()}$ to get the current value $i$ of \textit{switch}.
Then $p$ will output $iY$ plus the value of the min register $T_i$ which $p$ obtains from a \textit{minRead()} instance.
\begin{figure}[!ht]
    \begin{algorithmic}[1]
        \State $minRead()$
        \Indent
            \State $\mathit{i \gets switch.minRead()}$ 
            \State \Return $\mathit{T_i.minRead() + iY}$
        \EndIndent 
    \alglinenoNew{alg3}
    \alglinenoPush{alg3}
    \end{algorithmic}
\end{figure}

To perform a $\mathit{minWrite(iY + j)}$ on $m$, where $0 \le j < Y$, $p$ uses a \textit{minRead()}
instance to get the current value $d$ of $\mathit{switch}$. 
Then the current value of $m$ is at least $dY$ and is less than $(d+1)Y$.
If $\mathit{d < i}$, then $v$ is 
larger than the current value of $m$, so $p$ just returns.
If $d \ge i$, $p$ performs $\mathit{minWrite(j)}$ on $\mathit{T_i}$.
If $d > i$, then $p$ also performs \textit{minWrite(i)} on \textit{switch}.
\begin{figure}[!ht]
    \begin{algorithmic}[1]
    \alglinenoPop{alg3}
        \State $minWrite(iY + j)$, where $0 \le j < Y$
        \Indent
            \State $\mathit{d \gets switch.minRead()}$
            \If{$d \ge i$}
                \State $\mathit{T_i.minWrite(j)}$
                \If{$d > i$}
                    \textit{switch.minWrite(i)}
                \EndIf
            \EndIf
        \EndIndent   
    \alglinenoPush{alg3}
    \end{algorithmic}
\end{figure}

We will now show that $m$ is linearizable.
\begin{theorem}
    If \textit{switch} is a linearizable $k$-bounded min register and every $T_i$ is a linearizable $s$-bounded min register, then
    $m$ is a linearizable $kY$-bounded min register.
\end{theorem}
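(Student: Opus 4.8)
The plan is to prove linearizability of $m$ directly, building a linearization of the $m$-operations in an arbitrary execution $e$ out of linearizations of the operations on its components. First I would use the hypothesis: since \textit{switch} is linearizable, fix a linearization $\ell_{\textit{switch}}$ of all \textit{switch}-operations occurring in $e$, and likewise fix a linearization $\ell_{T_i}$ of the $T_i$-operations for each $i$. Each call to $\textit{switch}.minRead()$, $\textit{switch}.minWrite(\cdot)$, $T_i.minRead()$ or $T_i.minWrite(\cdot)$ inside the code of an $m$-operation then acquires a linearization point, and that point lies inside the execution interval of the calling $m$-operation.

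Second, I would track the \emph{abstract value} of $m$ at a point $t$ of $e$, defined as $\textit{val}(\textit{switch})\cdot Y + \textit{val}(T_{\textit{val}(\textit{switch})})$, where the component values are read off from $\ell_{\textit{switch}}$ and the $\ell_{T_i}$ just after $t$. The first real step is to show this quantity is non-increasing along $e$ and always lies in $\{0,\dots,kY-1\}$. Boundedness is immediate. For monotonicity, a $T_i.minWrite$ can only lower the abstract value when $T_i$ is the active register and can never raise it; and when a $\textit{switch}.minWrite(i)$ lowers \textit{switch} from $d$ to $i<d$, the $m$-operation that issued it already performed $T_i.minWrite(j)$ for some $j<Y$ beforehand, so the new active register has value at most $j$ and the abstract value drops from at least $dY\ge (i+1)Y$ to at most $iY+j<(i+1)Y$. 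This is exactly where the order within a single $minWrite(iY+j)$ --- first $T_i$, then \textit{switch} --- is essential.

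Third, I would assign a linearization point in $e$ to each $m$-operation. A $minWrite(iY+j)$ that reads $d<i$ from \textit{switch} performs no further base operations; I linearize it at the linearization point of that $\textit{switch}.minRead()$, where the abstract value is $dY+\textit{val}(T_d)<(d+1)Y\le iY+j$, making it a correct no-op. A $minWrite(iY+j)$ that reads $d=i$ is linearized at the linearization point of its $T_i.minWrite(j)$ call; one that reads $d>i$ is linearized at the linearization point of its final $\textit{switch}.minWrite(i)$ step. A $minRead()$ that obtains $i$ from \textit{switch} at a point $\tau_1$ and then $v$ from $T_i$ at a point $\tau_2$ is linearized at some point of $[\tau_1,\tau_2]$ chosen so that no $minWrite$ with argument less than $iY+v$ is ordered before it, while either $iY+v=kY-1$ or some $minWrite$ with argument exactly $iY+v$ is ordered before it; producing such a point is the crux of the argument. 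Finally I would verify the two linearizability conditions: real-time order is respected because each assigned point lies inside its operation's interval, and sequential correctness follows by induction along the linearization, using abstract-value monotonicity to show each $minRead$ returns precisely $iY+v$, the minimum of the initial value and the arguments of the preceding $minWrite$s.

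The hard part is the $minRead$ case, and within it the ABA-type scenario in which a $minRead$ reads \textit{switch}$\,=i$ and then \textit{switch} is pushed below $i$ by a concurrent $minWrite$ before the $minRead$ finishes: the $minRead$ still reports a value $\ge iY$, which looks stale but is in fact consistent with a legal ordering. Justifying this requires arguing that the $minWrite$ that performed the matching $T_i.minWrite(v)$ must have seen \textit{switch}$\,\ge i$ and hence can be placed before the $minRead$, while the $minWrite$ that later lowered \textit{switch} can be placed after it, and that this choice is always compatible with the placements of the $minWrite$ operations and with real-time order --- in effect a global invariant relating, across all operations, the order in which each $T_i$ is written to the order in which \textit{switch} descends through $i$. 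Since the $m$ of Section~\ref{aspnesMinReg} is the special case $k=2$ of this construction, the same proof settles the linearizability left unproven there.
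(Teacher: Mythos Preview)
Your linearization-point assignment for $minWrite(iY+j)$ operations that read $d>i$ from \textit{switch} is too rigid, and the point you promise for $minRead$ does not always exist. Take $k=Y=2$ and run $op_1=m.minWrite(1)$ and $op_2=m.minWrite(0)$ concurrently: both read \textit{switch}$\,=1$; then $op_1$ performs $T_0.minWrite(1)$; then $op_2$ performs $T_0.minWrite(0)$; then $op_1$ performs $\textit{switch}.minWrite(0)$; now a $minRead$ reads \textit{switch}$\,=0$ at $\tau_1$ and $T_0=0$ at $\tau_2$, returning $0$; finally $op_2$ performs its (no-op) $\textit{switch}.minWrite(0)$. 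Under your rules $op_1$ is linearized at its switch write, $op_2$ at its later switch write, and the $minRead$ somewhere in $[\tau_1,\tau_2]$, forcing the order $op_1,\ minRead,\ op_2$. But then only $minWrite(1)$ precedes the $minRead$, so it should return $1$, not $0$. No choice of point in $[\tau_1,\tau_2]$ helps, because $op_2$'s fixed linearization point lies strictly after $\tau_2$. Note that \textit{switch} never drops below $i=0$ here, so this is not the ABA scenario you discuss; it is a separate failure mode you have not addressed.

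The paper avoids this by abandoning execution-time linearization points. It partitions operations into sets $S_0,\dots,S_{k-1},S_<$, where $S_i$ contains every operation that touches $T_i$; it places all of $S_{i'}$ before all of $S_i$ whenever $i'>i$, and within each $S_i$ orders operations by when they access $T_i$. In the example this yields $op_1,\ op_2,\ minRead$, which is correct. Your closing remark that the matching $minWrite$ ``can be placed before the $minRead$'' is exactly the flexibility needed, but it contradicts the fixed points you committed to two paragraphs earlier. To salvage the plan you would have to let the linearization point of each $d>i$ write float anywhere in its $[T_i.minWrite(j),\ \textit{switch}.minWrite(i)]$ window and then argue globally that a consistent simultaneous placement exists; carrying that out is essentially the paper's set-based ordering recast in different language.
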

\begin{proof}
    Let $e$ be any execution. 
    Since \textit{switch} and every $\mathit{T_i}$ are linearizable,
    we can treat all instances of operations performed on them in $e$ as if they were atomic.
    First, we will partition all complete instances of \textit{m.minWrite(v)} and 
    \textit{m.minRead()} and some incomplete instances of \textit{m.minWrite(v)} in $e$ into $k+1$ sets.
    We then construct a sequence $o$ which contains all instances in these sets.

    \begin{itemize}
        \item For every $i \in \{0, \dots, k-1\}$, $\mathit{S_i}$ is the set containing:
        \begin{itemize}
            \item All complete $\mathit{m.minRead()}$ instances whose \textit{minRead()} of \textit{switch} returns $i$, and,
            \item all $\mathit{m.minWrite(iY + j)}$ instances which perform $\mathit{T_i.minWrite(j)}$, for some $j \in \{0, \dots, Y-1\}$.
        \end{itemize}
        \item $\mathit{S_{<}}$ is the set of all complete $\mathit{m.minWrite(iY + j)}$ instances whose \textit{minRead()}
        of \textit{switch} returns an integer less than $i$, for $i \in \{1, \dots, k-1\}$ and $j \in \{0, \dots, Y-1\}$.
    \end{itemize}
    
    Observe that every complete instance of a \textit{minRead()} operation on $m$ in $e$ 
    is contained in $S_i$ for exactly one value of $i$.
    Every complete instance, $I$, of a \textit{minWrite(iY + j)} operation on $m$ is contained in exactly one of $\mathit{S_{<}}$ or $\mathit{S_i}$:
    The instance $I$ begins with
    \textit{minRead()} to obtain the value $d$ of \textit{switch}.
    If $d < i$, then, by definition, $I$ is in $\mathit{S_{<}}$. 
    Otherwise, $d \geq i$ and $I$ performs an instance of \textit{minWrite(j)} on $\mathit{T_i}$,
    so $I \in \mathit{S_i}$.

    An incomplete instance of \textit{minWrite($iY+j$)} on $m$ that performs $\mathit{T_i.minWrite(j)}$
    is not in $S_{<}$, since it read a value $d \geq i$ when it performed \textit{minRead()} on \textit{switch}.
    Hence it is only in $S_i$.
    Thus an instance of an operation on $m$ is in at most one set.
    \\\\ \indent
    We construct a sequence $o$ of the instances in these sets, using the following rules:
    \begin{enumerate}
        \item We order all instances in $S_{i'}$ before those in $S_i$, for every $i$ and $i'$ such that $0 \le i < i' < k$.
        \item For every $i$, we order the instances in $S_i$ by the order in which they access $T_i$.
\end{enumerate}
Note that this orders all instances in  $\cup\{S_i \ |\ 0 \leq i < k\}$.
The next two rules insert the instances in $\mathit{S_{<}}$ into this sequence.
Any instance $B$ of $\mathit{m.minWrite(iY + j)}$ in $\mathit{S_{<}}$ receives a value $d < i$
from its \textit{minRead()} instance of \textit{switch}.
Therefore, before $B$ starts, some process must have performed \textit{switch.minWrite(d)}, 
which completes an instance of \textit{m.minWrite(dY + e)}, for some $0 \le e < Y$.
\begin{enumerate}
\setcounter{enumi}{2}
        \item For any instance $B$ of $\mathit{m.minWrite(iY + j)}$ in $\mathit{S_{<}}$, where $0 < i < k$,
        we order $B$ after the first (complete) instance $A \in S_z$ of \textit{m.minWrite} that performs \textit{switch.minWrite(z)}, for some $z < i$.  
        \item If there is no ordering between two instances implied by the above rules, 
        then we order them
        according to the order in which they perform \textit{switch.minRead()}.
    \end{enumerate}
    
    Now, we prove that if an instance $A$ finished before an instance $B$ started in $e$, 
    then $A$ precedes $B$ in $o$.
    First, suppose $A \in S_i$ for some $0 \leq i < k$.
    Then, either $A$'s instance of \textit{switch.minRead()} returned $i$ or $A$ performed \textit{switch.minWrite(i)}.
    Therefore, by the linearizability of \textit{switch}, $B$'s instance of \textit{switch.minRead()} output at most $i$.
    So $B$ cannot be in $S_h$ for $h > i \ge 0$.
    \begin{itemize}
    \item If $B \in S_i$, then by rule 2, $A$ precedes $B$ in $S_0$ since $A$'s access to $\mathit{T_i}$ precedes $B$'s.
    \item If $B \in S_h$ for $0 \le h < i$, then by rule 1, $A$ precedes $B$ in $o$.
    \item If $B \in S_{<}$ and rule 3 applies, $A$ precedes $B$ in $o$. 
    \item If $B \in S_{<}$ and rule 3 does not apply, then, by rule 4,  $A$ precedes $B$ in $o$, since $A$'s instance of \textit{switch.minRead()} precedes $B$'s.
    \end{itemize}

    Now, suppose $A$ is an instance of $\mathit{m.minWrite(iY + j)}$ in $\mathit{S_{<}}$, for some $0 \le j < Y$.
    Then rules 1 and 2 do not apply.
    We show that rule 3 does not apply. Consider the first instance $F$ of \textit{m.minWrite} that performs \textit{switch.minWrite(z)} 
    before \textit{A} starts, for some $z < i$. Observe that $F$ performs its last step before \textit{A} starts.
    By our assumption, $A$ finished before $B$ started and, therefore, $F \neq B$.
    So rule 3 does imply an ordering between $A$ and $B$ in $o$.
    Therefore, by rule 4, $A$ precedes $B$ in $o$.

    Finally, we will prove that every instance \textit{I} in $e$ receives the same response as when
    the instances in $o$ are performed sequentially.
    In particular, if an instance \textit{I} of \textit{m.minRead()} in $e$ returns $t$, we prove the following two properties:
    \begin{enumerate}
        \item If $t < kY - 1$, there was an instance \textit{Z} of \textit{m.minWrite(t)} ordered before \textit{I} in $o$.
        \item Every instance \textit{W} of $\mathit{m.minWrite(t')}$ with $0 \le t' < t$ occurs later than \textit{I} in $o$.
    \end{enumerate}

    Suppose that $t = iY + j$, where $0 \le j < Y$. Then \textit{I}'s instance, $I'$, of \textit{switch.minRead()} returns $i$ and 
    \textit{I}'s instance of $\mathit{T_i.minRead()}$ returns $j$.
\\\\ \indent
    We prove the first property. 
    Assume that $t < kY - 1$.
    If $j < Y-1$, 
    there must have been an instance $Z'$ of $\mathit{T_i.minWrite(j)}$ performed before $I'$,
    which was performed by an instance 
    $Z$ of $\mathit{m.minWrite(iY + j)}$.
    If $j = Y-1$, then $i < k-1$. 
    Therefore, before $I$ performs \textit{switch.minRead()}, there was an instance of
    \textit{switch.minWrite(i)} performed by some instance $Z$ of \textit{m.minWrite(iY + x)}, where $0 \le x < Y$. 
    Instance $Z$ performed its instance $Z'$ of $\mathit{T_i.minWrite(x)}$
    before \textit{I} performed \textit{I'}.
    Since 
    \textit{I'} returns $Y-1$, 
    and $T_i$ is a $Y$-bounded min register,
    $x$ must be equal to $Y-1$. 
    In either case, $Z$ is an instance of $\mathit{m.minWrite(t)}$ in $S_i$.
    Since $I \in S_i$ and $\mathit{Z'}$ precedes $\mathit{I'}$, rule 2 implies that $Z$ precedes $I$ in $o$.

    Now, we prove the second property. 
    Consider an \textit{W} of \textit{m.minWrite(t')}, where $0 \leq t' < t$.
    Then $t' = i'Y + j'$, where $0 \le i' \le i$.
    
    First, suppose that \textit{W} is in $S_{i'}$.
    If $i' < i$, then rule 1 implies that \textit{I} precedes \textit{W} in $o$.
    If $i' = i$, then $j' < j$. Since 
    $I'$ outputs $j$, $W$ must have performed $\mathit{T_i.minWrite(j')}$ after $I'$.
    Since both \textit{I} and \textit{W} are in $\mathit{S_i}$, rule 2 implies that \textit{I} precedes \textit{W} in $o$.
    
    Otherwise, \textit{W} is in $S_{<}$.
    By rule 3, \textit{W} is ordered after the first (complete) instance $A$ of \textit{m.minWrite} in $\mathit{S_z}$ that performs \textit{switch.minWrite(z)},
    for some $0 \le z < i'$. Since $z < i$, by rule 1 this implies that \textit{I} precedes \textit{A} in $o$.
    Since \textit{A} precedes \textit{W} in $o$, by transitivity \textit{I} precedes \textit{W} in $o$.
\end{proof}
The resulting min register $m$ is wait-free since instances of \textit{minRead} and \textit{minWrite} 
operation instances perform at most a constant number of instances on bounded min registers that are wait-free.

This construction takes the form of a perfect $k$-ary tree of height $\log_k kY$ in which the root is \textit{switch}
and the $i$-th subtree is $\mathit{T_i}$ for $0 \le i < k$.
The internal nodes and leaves of the tree are $k$-bounded min registers.
Suppose one implements a $(k^x)$-bounded min register $m$ using the techniques described in this section, for 
any non-negative integer $y$.
We will prove a result concerning the step complexity of $m$.

\begin{theorem}
    Let $m$ be a $(k^x)$-bounded min register which is implemented using $k$-ary tree of height $x-1$, as described in this section, for any $x \ge 1$.
    An instance of a \textit{minRead} operation on $m$ performs exactly $x$ steps,
and an instance of a \textit{minWrite} operation on $m$ performs at most $2x-1$ steps.
\end{theorem}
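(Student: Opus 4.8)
The plan is to prove both claims simultaneously by induction on $x$. For the base case $x = 1$, the register $m$ is a $k$-bounded min register built directly from a $(k-1)$-bit AND object as in Section~\ref{smallMinReg}: a \textit{minRead} instance performs a single \textit{read} of the AND object, and a \textit{minWrite} instance performs a single \textit{AND}. Hence a \textit{minRead} instance performs exactly $1 = x$ step and a \textit{minWrite} instance performs exactly $1 \le 2x - 1$ step.

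For the inductive step, suppose the claim holds for some $x - 1 \ge 1$, so each subtree $T_i$ — a $(k^{x-1})$-bounded min register built from a $k$-ary tree of height $x - 2$ — has \textit{minRead} instances performing exactly $x - 1$ steps and \textit{minWrite} instances performing at most $2(x-1) - 1 = 2x - 3$ steps. Here $m$ is a $kY$-bounded min register with $Y = k^{x-1}$, built from the $k$-bounded min register \textit{switch} (itself an AND-based register, so one step per operation) together with the $T_i$'s. Unwinding the \textit{minRead} code in Section~\ref{bigMinReg}, an instance performs one \textit{switch.minRead()} followed unconditionally by one $T_i.minRead()$, for a total of $1 + (x - 1) = x$ steps. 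Unwinding the \textit{minWrite} code, an instance performs one \textit{switch.minRead()}, then at most one $T_i.minWrite(j)$, then at most one \textit{switch.minWrite(i)}, for a total of at most $1 + (2x - 3) + 1 = 2x - 1$ steps.

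The only real subtlety — and the point to be careful about — is the asymmetry between the two bounds: for \textit{minRead} we must establish an exact count of $x$, which requires observing that both recursive calls in the \textit{minRead} algorithm are executed along every control path, whereas for \textit{minWrite} we only get an upper bound because its two recursive calls sit inside conditional statements that may be skipped. The extremal case for \textit{minWrite}, attaining $2x - 1$, is realized when the \textit{minRead} of \textit{switch} returns a value $d > i$ at every internal node of the tree, forcing both a \textit{minWrite} on $T_i$ and a \textit{minWrite} on \textit{switch} at each level, with the leaf contributing its single AND step; this matches the recurrence $W(x) = W(x-1) + 2$, $W(1) = 1$. Everything else is a routine unwinding of the two algorithms.
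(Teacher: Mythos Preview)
Your proof is correct and takes essentially the same approach as the paper: induction on $x$, with the base case handled by the one-step AND-object implementation and the inductive step unwinding the \textit{minRead} and \textit{minWrite} algorithms of Section~\ref{bigMinReg}. The paper's proof is slightly terser, simply counting ``at most two instances on \textit{switch} and a \textit{minWrite} instance on $T_i$'' for the \textit{minWrite} bound; your added discussion of the asymmetry and the extremal recurrence $W(x) = W(x-1) + 2$ is correct but not present in the paper.
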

\begin{proof}
As described in Section \ref{smallMinReg}, our implementation of a $k$-bounded min register 
requires exactly 1 step for a \textit{minRead} or \textit{minWrite} operation.
Thus the property is true for a $k$-bounded min register based on a $k$-ary tree of height 0.

Suppose, by induction on $g$ for some $g \ge 0$, that our implementation of a $(k^{g})$-bounded min register based on a $k$-ary tree of height $g-1$
requires exactly $g$ steps for a \textit{minRead}
and at most $2(g)-1$ steps for a \textit{minWrite}.
Then we argue that a $(k^{g+1})$-bounded min register $m$ based on a $k$-ary tree of height $g$ using this implementation requires exactly $g+1$
steps for a \textit{minRead} and at most $2(g+1) - 1$ steps for a \textit{minWrite}.

A \textit{minRead} operation instance on $m$ performs a single step
to perform a \textit{minRead} of \textit{switch},
then it performs a \textit{minRead} of $\mathit{T_i}$ for some $i$.
By our induction hypothesis, since $\mathit{T_i}$ is a $(k^{g})$-bounded min register implemented using our construction, 
exactly $g$ steps are required to perform 
a \textit{minRead} of $\mathit{T_i}$. Thus a \textit{minRead} of \textit{m} requires exactly $g+1$ steps.

A \textit{minWrite} operation instance on $m$ performs at most 
two instances on \textit{switch}
and a \textit{minWrite} instance once $\mathit{T_i}$. 
A \textit{minRead} or \textit{minWrite} of \textit{switch} ends after exactly one step.
By our induction hypothesis, since $\mathit{T_i}$ is a $(k^{g})$-bounded min register implemented using our construction,
at most $2(g)-1$ steps are required to perform a \textit{minWrite} of $\mathit{T_i}$.
Thus a \textit{minWrite} of \textit{m} requires at most $2 + 2(g) - 1 = 2(g+1) - 1$ steps.
\end{proof}

\chapter{An Implementation of Ko's Lock-Free Binary Trie}
\label{trieImplementationChapter}
In this chapter, we describe our implementation of Ko's lock-free binary trie\cite{jeremy_ko_trie,DBLP:journals/corr/abs-2405-06208}.
Ko's data structure is a lock-free implementation of a dynamic ordered set.
The worst-case step complexity of searches on his data structure is $O(1)$,
whereas the worst-case amortized step complexity of insert, search and predecessor
is $O(\log u + c^2)$, where $u$ is the number of keys in the universe and 
$c$ is a measure of the contention.

The first section describes our implementation of Ko's relaxed binary trie,
which is a major component of the lock-free binary trie.
The second section introduces the components used in our implementation of Ko's lock-free binary trie 
and concerns how its algorithms extend those of the relaxed binary trie.
The third section discusses the lock-free linked lists used by 
the lock-free binary trie in more detail.
The final section concerns the memory reclamation techniques used in our 
implementation.

The source code of our implementation is available at \url{https://github.com/Jakjm/LockFreeDynamicSetExperiments}\footnote{The relevant code is in 
the DynamicSets/Trie directory along with the files in the DynamicSets directory.}.

\section{Relaxed Binary Trie}
\label{relaxedTrie}
Ko's lock-free binary trie is based on a relaxed binary trie, which at any time stores a subset $S$ of keys from the universe $U = \{0, \dots, 2^k - 1\}$ for some fixed positive integer $k$.
The relaxed binary trie is wait-free. It supports search in $O(1)$ steps and insert, 
remove and \textit{relaxedPredecessor} in $O(k)$ steps.
The \textit{relaxedPredecessor} operation is a non-linearizable relaxation of a predecessor operation.
In particular, an instance of \textit{relaxedPredecessor} may return $\bot$ or an incorrect result if inserts or removes occur while 
the instance 
is in progress. If no concurrent insert or remove occurs, the instance always returns a correct result.
The implementation of the relaxed binary trie makes use of dynamically allocated \textit{Update\-Nodes} which
are composed of several shared objects.

When a process wants to insert a key $x \in U$ into the binary trie or remove $x$, it uses an \textit{Update\-Node} to try to announce this to other processes.
Key $x$ is stored in the \textit{key} field of the Update\-Node.
Each Update\-Node has an immutable \textit{type}, which is either \textit{INS} or \textit{DEL}.
An Update\-Node with \textit{type} \textit{INS} is also called an \textit{InsertNode}
and is used when a process wants to insert a key into the Trie.
An Update\-Node with \textit{type} \textit{DEL} is also called a \textit{DelNode}
and is used when a process wants to remove a key from the Trie.
The \textit{state} field of an Update\-Node is initially \textit{inactive}
and it may only be changed to \textit{active}.
The \textit{latestNext} field of an Update\-Node is a \textit{SWAP} object that stores either $\bot$ 
or a pointer to an Update\-Node with the same \textit{key}, but opposite \textit{type}.
In the implementation, for every key $x \in U$, there is a LatestList which is a linked list of Update\-Nodes
which have the same \textit{key}.
Every LatestList has length at least one and at most two and its last UpdateNode is \textit{active}.
If it has length two then the types of the two UpdateNodes are different.
There is an array, \textit{latest}, of $2^k$ \textit{CAS} objects,
such that $latest[x]$
points to the first Update\-Node in the LatestList for \textit{key} $x$.
The \textit{latestNext} of this Update\-Node stores either $\bot$ or a pointer to the second Update\-Node 
in this LatestList.
We say that the first \textit{active} Update\-Node in the LatestList is the \textit{first active} Update\-Node for key $x$.
For every key $x \in U$, $x$ is in $S$ if and only if the \textit{first active}
Update\-Node for key $x$ is an InsertNode.
If an operation instance, $op$, successfully inserts an Update\-Node into a LatestList, 
we say that this Update\-Node is \textit{owned} by $op$.
In our implementation, a process that fails to insert an Update\-Node into a LatestList 
will reuse the Update\-Node and try to insert it into a LatestList during its next instance of an update operation of the same type. 
The \textit{key} of an Update\-Node may be changed if it is being reused. 
However, once inserted into a LatestList, the Update\-Node's \textit{key} will remain fixed for the rest of the execution.

The relaxed binary trie consists of a perfect binary tree of height $k$,
in which each internal node is a \textit{TrieNode} and the leaves are the elements of $latest$,
in order from left to right.
We often refer to $\mathit{latest[x]}$ as \textit{leaf x}, for $0 \le x < 2^k$.
In the relaxed binary trie, every node has an \textit{interpreted bit}, which replaces the
bit it stores in a sequential binary trie.
The \textit{interpreted bit} of a leaf is 1 if and only if the first active UpdateNode 
in its LatestList is an InsertNode.
We will discuss how the \textit{interpreted bits} of TrieNodes are determined shortly.
The implementations of operations for the relaxed binary trie
are very similar to their implementations
for a sequential binary trie, except that they use the \textit{interpreted bits}.
A relaxedPredecessor operation instance may return an incorrect result or $\bot$ 
when there are concurrent updates by other process to the \textit{interprted bits} of 
nodes that it accesses.




Every \textit{TrieNode} contains a \textit{dNodePtr} field, which is a \textit{CAS} object that stores a pointer to a DelNode in (or was previously in) a LatestList of a leaf in the TrieNode's subtree.
If an InsertNode is the \textit{first active} UpdateNode in this LatestList, the TrieNode has \textit{interpreted bit} 1.
Every DelNode contains two additional fields, \textit{upper0Boundary} and \textit{lower1Boundary},
which are used to help determine the \textit{interpreted bits} of TrieNodes.
The \textit{upper0Boundary} field of a DelNode is a register storing an integer between 0 and $k$, 
and is non-decreasing throughout the execution.
If the \textit{upper0Boundary} field of a DelNode is changed, this is done by the operation instance 
that owns the DelNode.
The \textit{lower1Boundary} field of a DelNode is a $(k+2)$-bounded min register,
initially storing the value $k + 1$.
This field is only modified by insert operation instances performed on the 
relaxed binary trie. 
Section \ref{smallMinReg} describes the implementation of a bounded min-register that 
we use.
Consider a DelNode, $dNode$, that is the \textit{first active} UpdateNode in some LatestList.
Every TrieNode that points to $dNode$ (or a DelNode with the same key) and has height at least $dNode.lower1Boundary$ or greater than $dNode.upper0Boundary$,
has \textit{interpreted bit} 1.
Otherwise, the TrieNode has \textit{interpreted bit} 0. 
In our implementation, the TrieNodes are stored in an array 
of length $2^k - 1$, such that the $i$-th TrieNode from the left 
at depth $d$ is at index $2^d + i$, for $0 \le d < k$ and $0 \le i < 2^d$.

In the initial configuration, every LatestList in the relaxed trie stores a single DelNode whose
$upper0Boundary$ is $k$, so it represents the empty set.
Since every TrieNode in the relaxed trie points to a DelNode in the LatestList of one of the leaves in its subtree, 
the \textit{interpreted bit} of every TrieNode is 0.

As we will discuss later in more detail, any process, $p$, that is inserting key $x$ into the relaxed binary trie will first 
try to set the \textit{interpreted bit} of leaf $x$ to 1.
If successful, while the \textit{interpreted bit} of leaf $x$ remains equal to 1, 
$p$ will visit every ancestor of the leaf in the relaxed binary trie from lowest 
to highest and set the \textit{interpreted bit} of this ancestor to 1.
When process $p$ is removing key $x$ from the relaxed trie, it will 
first try to set the \textit{interpreted bit} of leaf $x$ to 0.
If successful, while the \textit{interpreted bit} of leaf $x$ remains equal to 0, 
$p$ will visit the ancestors of leaf $x$ in the relaxed binary trie from lowest to highest.
For every ancestor $p$ visits whose children have their \textit{interpreted bits} equal to 0, it will try to set the \textit{interpreted bit} of this ancestor to 0.

Every DelNode contains an additional field, \textit{stop}, which is initially \textit{False}
but may be updated to \textit{True} by a remove operation instance.
When this happens, the remove operation instance which owns the 
DelNode will stop trying to set the \textit{interpreted bits} of TrieNodes to 0.
Every InsertNode contains two additional fields, \textit{target} and \textit{targetKey}, store a pointer to a DelNode and the key of this DelNode, respectively.
Initially \textit{target} is equal to $\bot$ whereas \textit{targetKey} is 
initially equal to $-1$.
Consider an insert operation instance, \textit{iOp}, that owns an InsertNode, 
\textit{iNode}.
Suppose \textit{iOp} traverses the relaxed binary trie and encounters a TrieNode 
that points to a DelNode with key $k$, and that \textit{dNode} is a DelNode that is the \textit{first active} UpdateNode for $k$.
In this case, \textit{iOp} will set \textit{iNode.targetKey} to \textit{k} 
and \textit{iNode.target} to \textit{dNode}.
Suppose a remove instance, \textit{dOp}, inserts a DelNode ahead of $iNode$ in the LatestList for $iNode.key$ and 
then reads a pointer to $dNode$ from $iNode.target$ and $k$ from $iNode.targetKey$.
If $dNode$ is still the \textit{first active} UpdateNode for $k$, \textit{dOp} will set  
\textit{dNode.stop} to \textit{True}.
It is possible that the operation instance that owns \textit{dNode}, $dOp'$,
is still in progress and trying to set the \textit{interpreted bits} of 
TrieNodes that are ancestors of leaf $k$ to 0.
By setting \textit{dNode.stop} to \textit{True}, \textit{dOp} informs \textit{dOp'}
that it should stop updating these \textit{interpreted bits} to 1.

To search for a key, $x$, in the relaxed binary trie, a process 
simply returns whether the \textit{first active} UpdateNode 
for key $x$ is an InsertNode.

To insert a key, $x$, into the relaxed binary trie, a process, \textit{p}, will first read a pointer to the \textit{first active}
UpdateNode, \textit{uNode}, in the LatestList for key $x$.
If \textit{uNode} is an InsertNode, then $p$ will return \textit{False} since $x$ is already in the relaxed binary trie.
If \textit{uNode} is a DelNode, then $p$ will try to insert an InsertNode, $iNode$, into into the LatestList 
for key $x$ before \textit{uNode}. 
First, $p$ sets $iNode.latestNext$ to point to \textit{uNode} and then it uses a \textit{CAS} to 
attempt to update $latest[x]$ to point to $iNode$.
If this \textit{CAS} is unsuccessful because another process inserted an 
InsertNode ahead of \textit{uNode} into the LatestList, $p$ will update this 
InsertNode to \textit{active} before returning \textit{False}.
If the \textit{CAS} is successful, $p$ first sets $iNode.state$ to $Active$ and $iNode.latestNext$ to $\bot$.
Next, $p$ will try to set the \textit{interpreted bits} of TrieNodes on the path to this 
LatestList to 1.
For every TrieNode, $tNode$ on the path from the LatestList for key $x$ to the root, 
$p$ will read the \textit{key}, $z$, of the DelNode pointed to by $tNode$.
Suppose the \textit{first active} UpdateNode for $z$ is a DelNode, $dNode$, 
whose $upper0Boundary$ 
is at least $tNode.height$ and whose $lower1Boundary$ is greater than $tNode.height$.
This means the \textit{interpreted bit} of $tNode$ is 0 and that $p$ needs to 
try to set it to 1.
The \textit{targetKey} of $iNode$ is set to $z$ and its \textit{target} field is updated to point to $dNode$.
Remove operation instances that read $z$ from \textit{targetKey} and \textit{dNode} from \textit{target} can set \textit{dNode.stop} to \textit{True}, 
informing the operation instance that owns \textit{dNode} that it should stop setting 
the \textit{interpreted bits} of TrieNodes to 0.
If $iNode$ is still the \textit{first active} UpdateNode for key $x$, 
$p$ performs $minWrite(t.height)$ on $dNode$'s $lower1Boundary$, 
setting the \textit{interpreted bit} of $tNode$ to 1.
If $iNode$ is no longer the \textit{first active} UpdateNode for key $x$, 
key $x$ has been removed.
Therefore, $p$ should no longer set the \textit{interpreted bits} of ancestors of leaf $x$ to 1,
so $p$ returns \textit{True}, indicating that it successfully inserted key $x$.
Otherwise, once $p$ has visited all ancestors of leaf $x$ and set their 
\textit{interpreted bits} to 1, it returns \textit{True}.


The steps to remove a key, $x$, from the relaxed trie are initially analogous to 
the steps to insert $x$.
First, a process, \textit{p}, attempts to insert a DelNode, $dNode$, with key $x$ into the LatestList for key $x$, provided an InsertNode, $iNode$, was the \textit{first active} UpdateNode for key $x$ (returning \textit{False} otherwise).
If $p$ fails to insert $dNode$ because another process inserted a DelNode
into this LatestList ahead of $iNode$, $p$ will help this DelNode become \textit{active} before returning \textit{False}.
After successfully inserting $dNode$, if $iNode.target \neq \bot$ and $iNode.target$ is the \textit{first active}
UpdateNode for $iNode.targetKey$, $p$ sets $iNode.target.stop$ to $True$.
This is done to help inform the remove operation instance that owns the DelNode pointed to 
by \textit{iNode.target} that it should no longer set the \textit{interpreted bits} of 
the ancestors of leaf $x$ to 0.
Following this, the \textit{interpreted bits} of TrieNodes leading to the LatestList for key $x$ may need to be set to 0,
so $p$ will start visiting the TrieNodes on the path from this LatestList to the root.
For every TrieNode, $tNode$, on this path, $p$ will first check if:
\begin{itemize}
    \item one or both of $tNode$'s children has \textit{interpreted bit} 1,
    \item $dNode.stop = True$,
    \item $dNode$ is no longer the \textit{first active} node for key $x$, or
    \item $dNode.lower1Boundary \neq k + 1$.
\end{itemize}
If so, either there is a node in $tNode$'s subtrie whose \textit{interpreted bit} is 1
or this was the case at some point since $dNode$ was the \textit{first active} Update\-Node for for $x$.
In this case,  $p$ stops setting the \textit{interpreted bits} of $tNode$ and its ancestors to 0 and returns \textit{True}.
If not, $p$ will next perform a \textit{CAS} to try to update $tNode$ 
to point to $dNode$.
If this \textit{CAS} fails, $p$ will check those conditions 
again and try another \textit{CAS}, returning upon a second failed \textit{CAS}.
Upon a successful \textit{CAS} to update $tNode$ to point to $dNode$, $p$ will set $dNode.upper0Boundary$ to the height of $tNode$, provided 
the \textit{interpreted bits} of $tNode$'s children remain equal to 0.
Once $p$ visits all of the TrieNodes on the path to \textit{latest[x]}
it returns \textit{True}.

To perform a \textit{relaxedPredecessor} instance for key $x$, a process 
performs the algorithm for a sequential predecessor operation on the relaxed
binary trie, using the \textit{interpreted bits}. 
If the process manages to complete the sequential predecessor algorithm 
using the \textit{interpreted bits}, it returns the result of this sequential algorithm.
However, it may be impossible for the process to complete the sequential 
predecessor algorithm.
Suppose the process finds an ancestor of leaf $x$ whose 
left child, $L$, has \textit{interpreted bit} 1 and is not an ancestor of leaf $x$.
The process will traverse the right most path of nodes whose \textit{interpreted bits} 
are 1 in $L$'s subtrie.
While traversing down from $L$, the process may encounter a node whose children
both have \textit{interpreted bit} 0. 
If this occurs, instances of remove which overlap with the \textit{relaxedPredecessor} instance must have turned off their \textit{interpreted bits}.
In this case, the process will return $\bot$ since it did not manage to find 
the \textit{predecessor} of $x$.

\section{Lock-Free Binary Trie}
\label{lockFreeTrie}
Now we discuss our implementation of Ko's lock-free binary trie.
The lock-free Trie extends the relaxed binary trie and is a linearizable 
implementation of a dynamic ordered set with $U = \{0, \dots, 2^k - 1\}$.
The implementation of search is identical to its implementation in the relaxed 
binary trie.
The implementations of insert, remove and predecessor for the lock-free binary trie 
add additional steps to their implementations in the relaxed binary trie.

The lock-free binary trie has two linked lists called the \textit{UALL} (Update Announcement Linked List) and \textit{RUALL} (Reverse Update Announcement Linked List). 
As their names suggest, these lists are used to announce 
update operation instances that are being performed on the Trie, by storing 
UpdateNodes owned by those instances.
We discuss the implementations of the \textit{UALL} and \textit{RUALL} in Section \ref{uallImplementation} and Section \ref{ruallSection}.
When an UpdateNode, \textit{uNode}, is inserted into the LatestList by an operation instance, $uOp$,
$uOp$ will insert it into both the \textit{UALL} and \textit{RUALL} before setting \textit{uNode.state} to \textit{active} and \textit{uNode.latestNext} to $\bot$. 
Suppose an operation instance, $hOp$, of the same key and of the same type as 
$uOp$ tries to insert an UpdateNode into the LatestList ahead of an UpdateNode,
\textit{prevNode}, but it fails because \textit{uNode} was inserted first.
If \textit{uNode.latestNext} does not point to \textit{prevNode}, 
$hOp$ returns \textit{False}.
If \textit{uNode.latestNext} points to \textit{prevNode},
$hOp$ will help \textit{uNode} become \textit{active} before returning \textit{False}.
In particular, $hOp$ inserts \textit{uNode} into the \textit{UALL} and \textit{RUALL},
sets \textit{uNode.state} to \textit{active} and \textit{uNode.latestNext} to $\bot$ before returning.
Before $uOp$ finishes, it removes \textit{uNode} from the \textit{UALL} and \textit{RUALL}.
Update\-Nodes have other fields used in the algorithms for the \textit{UALL} and \textit{RUALL}, which are discussed in 
the sections concerning them.
A process performing a predecessor instance will traverse the \textit{UALL} and \textit{RUALL} to collect information about ongoing updates to the Trie.
This information helps the process determine the correct predecessor of its key.

When a process, $p$, performs a predecessor instance on the Trie for key $x$, it creates a dynamically allocated \textit{PredecessorNode}, \textit{pNode},
which stores $x$ in its immutable \textit{key} field.
To announce its operation instance, $p$ inserts \textit{pNode} into a linked list 
called the \textit{PALL} (Predecessor Announcement Linked List),
which stores PredecessorNodes of predecessor instances that 
are in progress.
The implementation of the \textit{PALL} is discussed in more detail 
in Section \ref{pallImplementation}.
PredecessorNodes have additional fields used in the algorithms for the \textit{PALL},
which is discussed in that section.
An operation instance which creates a PredecessorNode will remove 
the PredecessorNode from the \textit{PALL} before it ends.
Every PredecessorNode contains a \textit{ruallPosition} field.
When an operation instance, $pOp$, traverses the \textit{RUALL}, the \textit{ruallPosition}
of a PredecessorNode created by $pOp$ stores a pointer to the UpdateNode in the \textit{RUALL} that $pOp$ is currently visiting.
We discuss the \textit{ruallPosition} field in more detail in Section \ref{ruallSection}.

Ko's Trie also makes use of dynamically allocated \textit{NotifyNodes}.
A NotifyNode is used to notify a predecessor operation instance 
about an update operation instance which inserted or removed a key from the Trie.
Every NotifyNode contains a pointer, \textit{updateNode}, to the 
UpdateNode owned by this update operation instance.
The NotifyNode also stores the \textit{key} of this UpdateNode.
The \textit{updateNodeMax} field of a NotifyNode stores either $-1$ or 
a key in the universe.
In addition, a NotifyNode has a \textit{notifyThreshold} field, 
which is either $\infty$, a key in the universe, or $-\infty$.

Every PredecessorNode contains a \textit{notifyList}, which is a linked list 
of NotifyNodes.
NotifyNodes that are in the \textit{notifyList} of a PredecessorNode are 
never removed from this \textit{notifyList}.
Consider some update operation instance, $uOp$, which owns an UpdateNode, 
\textit{uNode}.
After updating \textit{uNode} to \textit{active} and finishing to update 
\textit{interpreted bits} in the relaxed binary trie, 
$uOp$ will notify predecessor instances about itself, provided \textit{uNode} is still the \textit{first active} UpdateNode for its \textit{key}.
It does this by inserting NotifyNodes pointing to \textit{uNode} into the \textit{notifyLists} of the PredecessorNodes 
belonging to these predecessor instances.
First, $uOp$ will traverse the \textit{UALL}.
It creates a local list, \textit{iKeys}, that contains the \textit{key} of every InsertNode 
that \textit{uOp} encounters in the \textit{UALL} that is the \textit{first active} UpdateNode for its \textit{key}.
The keys in \textit{iKeys} are stored in ascending order.
Next, $uOp$ will traverse the \textit{PALL} while \textit{uNode} remains the \textit{first active} UpdateNode for its \textit{key}.
For every PredecessorNode, \textit{pNode}, that $uOp$ encounters in the \textit{PALL},
$uOp$ will create a NotifyNode, \textit{nNode}, such that \textit{nNode.updateNode} points to \textit{uNode} and \textit{nNode.key} is equal to \textit{uNode.key}.
Now, $uOp$ performs a binary search to find the largest \textit{key} 
less than \textit{pNode.key} in \textit{iKeys}.
If such a key exists, \textit{nNode.updateNodeMax} is set to it, 
otherwise \textit{nNode.updateNodeMax} is set to -1.
In addition, $uOp$ reads the pointer to an UpdateNode in \textit{pNode.ruallPosition} and sets \textit{nNode.notifyThreshold} 
to the \textit{key} of this Update\-Node.
Finally, $uOp$ will attempt to insert \textit{nNode} as the first node in  \textit{pNode}'s \textit{notifyList} using a \textit{CAS}.
If this \textit{CAS} fails and \textit{uNode} is still the \textit{first active}
UpdateNode for its \textit{key}, $uOp$ will try again until it is either successful 
or \textit{uNode} is no longer the \textit{first active} UpdateNode for its \textit{key}.
In our implementation, if a process that fails to insert a NotifyNode
into a \textit{notifyList} because its UpdateNode is no longer the \textit{first active} UpdateNode for its \textit{key}, the process may reuse the NotifyNode during a 
later update operation instance.
The fields of a NotifyNode may be changed if it is being reused, however
once inserted into a \textit{notifyList}, the fields of a NotifyNode become immutable.
A predecessor instance can use the information in the \textit{notifyList} of 
its PredecessorNode to help it determine the correct predecessor for a key.

Consider an InsertNode, $iNode$, that is owned by an insert operation instance, $iOp$.
Immediately before a remove operation instance, $rOp$, attempts to insert a DelNode ahead of 
$iNode$ in the LatestList for $iNode.key$, $rOp$ will check if $iNode$ has been removed from the \textit{UALL}.
If not, it is possible that $iOp$ has not finished informing operation instances about $iNode$.
So, in this case, $rOp$ will notify predecessor operation instances about $iNode$ 
as though it was $iOp$, according to the steps we described in the previous paragraph.
In particular, first $rOp$ will traverse the \textit{UALL}, storing the keys of InsertNodes it encounters.
Then, while $iNode$ remains the \textit{first active} UpdateNode for its \textit{key}, 
$rOp$ will traverse the \textit{PALL} and try to insert a NotifyNode into the \textit{notifyList}
of every PredecessorNode it encounters.

A process performing a \textit{remove(x)} operation instance, \textit{dOp}, on the Trie may perform up to two \textit{predecessor(x)} operations, called \textit{embedded predecessor} operations.
Every DelNode contains additional fields, \textit{delPredNode}, \textit{delPred} and \textit{delPred2}, which contain information about these embedded predecessor instances.
Before \textit{dOp} inserts a DelNode, $dNode$, into the LatestList for key $x$,
\textit{dOp} will perform its first \textit{embedded predecessor} instance, $pOp$.
The key returned by $pOp$ is stored in \textit{dNode.delPred} and the 
a pointer to the PredecessorNode that $pOp$ created is stored in \textit{dNode.delPredNode}.
Immediately after $dNode$ is inserted into the LatestList and becomes \textit{active}, 
\textit{dOp} performs a second \textit{embedded predecessor} instance, \textit{pOp'}.
The key returned by \textit{pOp'} is stored in \textit{dNode.delPred2}.

Now that we have described how searches, inserts and removals work 
in the lock-free binary trie, we will describe how predecessor operations work.

A \textit{predecessor(x)} instance, $pOp$, starts by creating a PredecessorNode, \textit{pNode}, with \textit{key} $x$ and inserting \textit{pNode} into the \textit{PALL}.
Next, $pOp$ traverses \textit{PALL} starting from \textit{pNode},
storing pointers to PredecessorNodes that it encounters.
Then $pOp$ traverses the \textit{RUALL}. 
A local set called $\mathit{I_{ruall}}$ stores pointers to InsertNodes that 
$pOp$ encounters that are the \textit{first active} UpdateNodes for their \textit{keys}.
Another local set, $\mathit{D_{ruall}}$, stores pointers to DelNodes that $pOp$ encounters that are the \textit{first active} UpdateNodes for their \textit{keys}.
Now, $pOp$ performs a \textit{relaxedPredecessor(x)} instance.
Let $r$ be the result of this instance.

Following this, $pOp$ traverses the \textit{UALL}.
A local variable called $\mathit{iMax_{uall}}$ stores the maximum \textit{key} 
less than $x$ of an InsertNode encountered by $pOp$ which is also the \textit{first active} UpdateNode for its \textit{key}.
If $pOp$ does not encounter such an InsertNode while traversing the \textit{UALL}, 
$\mathit{iMax_{uall}}$ has value $-1$, otherwise $\mathit{iMax_{uall}}$
is a key that was in the data structure during $pOp$.
Another local variable, $\mathit{dMax_{uall}}$, stores the maximum \textit{key}
less than $x$ of a DelNode, $dNode$, such that:
\begin{itemize}
    \item $pOp$ encounters $dNode$ while traversing the \textit{UALL},
    \item $dNode$ is not in $\mathit{D_{ruall}}$, and
    \item $dNode$ was the \textit{first active} UpdateNode for its key.
\end{itemize}
If $pOp$ does not encounter such a DelNode while traversing the \textit{UALL},
$\mathit{dMax_{uall}}$ has value $-1$, otherwise $\mathit{dMax_{uall}}$
is a key that was in the data structure during $pOp$.

Subsequently, $pOp$ will traverse \textit{pNode}'s \textit{notifyList}.
Consider the set, \textit{pNotify}, of NotifyNodes with key less than $x$ that $pOp$ encounters.
The local variable $\mathit{iMax_{notify}}$ holds the maximum among -1 and:
\begin{itemize}
    \item the \textit{keys} of every NotifyNode, \textit{nNode}, that is in \textit{pNotify} 
    such that $\mathit{nNode.notifyThreshold} \le \mathit{nNode.key} < \mathit{x}$ and 
    \textit{nNode.updateNode} points to an InsertNode.
    \item the \textit{updateNodeMax} of every NotifyNode, $nNode'$, that is 
    in \textit{pNotify} such that such that $\mathit{nNode'.notifyThreshold = -\infty}$, $\mathit{nNode'.key} < \mathit{x}$ and 
    $\mathit{nNode'.updateNode}$ is not in $\mathit{I_{ruall}}$ or $\mathit{D_{ruall}}$.
\end{itemize}
The local variable $\mathit{dMax_{notify}}$ holds the maximum among -1 and
the \textit{keys} of every NotifyNode, \textit{nNode}, that is in \textit{pNotify}
such that $nNode.notifyThreshold < nNode.key < x$ and $nNode.updateNode$ points to a DelNode.

Suppose either $\mathit{D_{ruall}}$ is empty or the \textit{relaxedPredecessor(x)} instance by $pOp$ produces a result, that is, $r \neq \bot$.
In this case, $pOp$ returns the maximum \textit{key} among $r$,   
$\mathit{iMax_{uall}}$, $\mathit{dMax_{uall}}$, $\mathit{iMax_{notify}}$
and $\mathit{dMax_{notify}}$.

Suppose $\mathit{D_{ruall}}$ is not empty and $r = \bot$.
First, $pOp$ checks if there is a DelNode in $D_{ruall}$ whose \textit{delPredNode} 
points to a PredecessorNode that $pOp$ encountered while traversing the \textit{PALL}.
If so, $pOp$ will traverse the \textit{notifyList} of the latest such PredecessorNode it encountered in the \textit{PALL}.
For each NotifyNode with \textit{key} less than $x$ encountered, 
$pOp$ inserts a pointer to the UpdateNode pointed to by the NotifyNode into a local list, $L_1$.
Next, $pOp$ traverses the \textit{notifyList} of its own PredecessorNode, \textit{pNode}.
For each NotifyNode with \textit{key} less than $x$ encountered,
$pOp$ inserts the UpdateNode it points to into a local set, $L_{rem}$.
Moreover, if the NotifyNode's \textit{notifyThreshold} is at least its \textit{key}
the UpdateNode it points to is inserted into a local list, $L_2$.
A local list, $L'$, consists of the elements in $L_1$ that are not in $L_{rem}$
in reverse order, followed by the elements in $L_2$ in reverse order.
The local list, $L$, consists of the elements in $L'$ without any DelNode 
that is not the last UpdateNode in $L'$ with its \textit{key}.
Let $G$ be a directed graph on the set of integers, such that there is a 
directed edge between two vertices, $u$ and $v$, if there is 
some DelNode in $L$ whose \textit{key} is $u$ and whose \textit{delNode2} is $v$.
Let $V'$ be the set of \textit{delPreds} of DelNodes in $D_{ruall}$
and \textit{keys} of InsertNodes in $L$.
Let $R$ be the set containing every sink in $G$ which is reachable from a \textit{key} in $V'$ and is not the \textit{key} of some DelNode in $D_{ruall}$.
Let $r'$ be the maximum element in $R$.
Finally, $pOp$ returns the maximum \textit{key} among $r'$,   
$\mathit{iMax_{uall}}$, $\mathit{dMax_{uall}}$, $\mathit{iMax_{notify}}$
and $\mathit{dMax_{notify}}$.

Ko provides a proof that his algorithm is linearizable 
and provides bounds on the worst-case amortized step complexity of 
operation instances on the Trie.
The worst-case step complexity of search instances on the Trie is $O(1)$.
The following definitions are necessary to explain the step complexity bounds for 
insert, delete and predecessor instances.
Let $op$ be some operation instance.
The \textit{point contention} of $op$, $\dot{c}(op)$, is the 
maximum number of processes that were performing an operation 
instance during a configuration in which $op$ is in progress.
The \textit{interval contention} of $op$, $\overline{c}(op)$,
is the number of operation instances that overlap $op$.
The \textit{overlapping interval contention} of $op$, $\widetilde{c}(op)$,
is the maximum interval contention of all operation instances that overlap
$op$.
Ko proves that the worst-case amortized step complexity of an insert instance, $iOp$, 
is $O(k + \dot{c}(iOp)^2)$ whereas the worst-case amortized step complextiy 
of a predecessor or remove instance, $rOp$, is $O(k + \dot{c}(rOp)^2 +\widetilde{c}(rOp)$).

As previously mentioned, the following section discusses the implementation 
of the \textit{UALL}, \textit{RUALL} and \textit{PALL}, as well as 
the \textit{ruallPosition} field of PredecessorNodes.
In Section \ref{trieMemoryReclamation}, we introduce how UpdateNodes, 
PredecessorNode and NotifyNodes are reclaimed by our implementation.

Aside from adding memory reclamation, our implementation has some other 
notable differences from Ko's implementation which we discuss here:

As we will discuss in Section \ref{uallImplementation}, our implementation 
does not permit UpdateNodes that have been inserted and removed from the \textit{UALL} (or \textit{RUALL}) to be reinserted.
By contrast, in Ko's implementation the \textit{UALL} and \textit{RUALL} permit
reinsertions. 
In both of our implementations, an update operation instance 
may help an UpdateNode it does not own become \textit{active} by 
inserting it into the \textit{UALL} and \textit{RUALL}.
However, in Ko's implementation, if the instance that owns 
the UpdateNode has already removed it from the \textit{UALL} and \textit{RUALL},
this will reinsert the UpdateNode into these lists.
In this case, the helping instance must remove the UpdateNode 
from the \textit{UALL} and \textit{RUALL} once it recognizes that 
it erroneously reinserted the UpdateNode.
In our implementation, by contrast, the helping instance 
cannot reinsert the UpdateNode into the \textit{UALL} and \textit{RUALL}.
Therefore, processes helping an UpdateNode become \textit{active} 
never need to remove the UpdateNode from the \textit{UALL} and \textit{RUALL}.

The \textit{updateNodeMax} field of a NotifyNode is a pointer to an UpdateNode 
in Ko's implementation whereas in our implementation the \textit{updateNodeMax} field stores the \textit{key} of this UpdateNode.
This is slightly more efficient since a process reading the \textit{updateNodeMax}
field does not need to do another access to read the \textit{key} of the UpdateNode.

The \textit{targetKey} field of a InsertNode is not in Ko's implementation.
The \textit{targetKey} field is always assigned to the \textit{key} of the DelNode 
pointed to by the \textit{target} field of an InsertNode.
Consider an InsertNode in our implementation such that its \textit{target} 
field points to a DelNode, \textit{dNode}.
Processes can use the \textit{targetKey} field of the InsertNode 
to ensure that \textit{dNode} is the \textit{first active} UpdateNode 
for \textit{targetKey} before setting $\mathit{dNode.stop}$  to \textit{True}.
As we will discuss in our memory reclamation section, an UpdateNode 
is not reclaimed before every process has been quiescent (not performing an operation) since the UpdateNode was removed from the LatestList.
By checking \textit{dNode} was in the LatestList, this ensures \textit{dNode} 
has not yet been reclaimed when the process sets $\mathit{dNode.stop}$ to \textit{True}.

\section{Lock-Free Linked Lists}
\label{lockFreeLinkedLists}
In this section we discuss the specification and implementation of lock-free linked lists
used in our implementation of Ko's lock-free binary trie. 
These include the \textit{UALL}, \textit{RUALL} and \textit{PALL}.

\subsection{UALL}
\label{uallImplementation}
The \textit{UALL} is a linked list that is part of Ko's data structure. 
It contains Update\-Nodes and may contain Insert\-Desc\-Nodes, 
which we discuss later.
Update\-Nodes in the list are stored in non-decreasing order by their keys, 
and Update\-Nodes with the same key in the list are sorted from least recently to most recently inserted. 
Two Update\-Nodes, \textit{head} and \textit{tail}, are used as sentinel nodes.
The first node of the list is always \textit{head}, and it has key $-\infty$,
whereas \textit{tail} is always at the end of the list and it has key $\infty$.
The \textit{UALL} supports the following operations, where \textit{uNode} is an Update\-Node:
\begin{itemize}
    \item \textit{insert(uNode)}: If \textit{uNode} has never been in the \textit{UALL}, 
    \textit{uNode} is inserted into the \textit{UALL} after any Update\-Nodes whose keys are less than or equal to \textit{uNode}'s key, 
    but before any Update\-Node with a larger key. Otherwise this operation has no effect. 
    \item \textit{remove(uNode)}: If \textit{uNode} is in the \textit{UALL}, \textit{uNode} is removed from the \textit{UALL}.
    Otherwise, this operation has no effect. 
    \item \textit{first()}: Returns a pointer to the Update\-Node immediately following \textit{head} in the \textit{UALL}, or $\bot$ if \textit{tail} immediately follows \textit{head}.
    \item \textit{readNextUpdate\-Node(uNode)}: This operation can only be performed if \textit{uNode} has been inserted into the \textit{UALL}.
    If \textit{uNode} is in the \textit{UALL}, this operation returns a pointer to the Update\-Node 
    immediately following \textit{uNode} in the \textit{UALL}, or $\bot$ if \textit{tail} immediately follows \textit{uNode}.
    If \textit{uNode} is not in the \textit{UALL}, this operation returns a pointer to the Update\-Node that 
    was immediately following \textit{uNode} in the \textit{UALL} when \textit{uNode} was removed, or $\bot$ if \textit{tail} was immediately
    following \textit{uNode} when it was removed.
\end{itemize}

In existing implementations of lock-free singly linked lists,
a node is inserted by a single process, without help from other processes.
In contrast, more than one process may attempt to insert the same Update\-Node into the \textit{UALL}.
Allowing more than one process to insert the same node causes problems for existing lock-free linked list algorithms.
For example, consider the following execution involving Fomitchev and Ruppert's implementation of a linked list. 
Two processes, $p$ and $q$, are trying to insert the same Update\-Node, $B$, into an empty linked list
which contains only the sentinel Update\-Nodes, $head$ and $tail$.
\begin{figure}[H]
    \centering
    \begin{tikzpicture}
        \node[vertex](head) at (0,0){head};
        \node[vertex](B) at (2,0.8){$B$};
        \node[vertex](tail) at (4,0){tail};

        \draw[edge](head) to (tail);
    \end{tikzpicture}
\end{figure}

\noindent Suppose $q$ performs its insertion of $B$ until it has updated \textit{B} to point to $tail$. 
However, $q$ goes to sleep just before performing a \textit{CAS} to update $head$.
\begin{figure}[H]
    \centering
    \begin{tikzpicture}
        \node[vertex](head) at (0,0){head};
        \node[vertex](B) at (2,0.8){$B$};
        \node[vertex](tail) at (4,0){tail};

        \draw[edge](head) to (tail);
        \draw[edge](B) to (tail);
    \end{tikzpicture}
\end{figure}

\noindent Next, $p$ performs its insertion of $B$, which ends when it performs a 
\textit{CAS} that updates $head$ so that it points to $B$.
\begin{figure}[H]
    \centering
    \begin{tikzpicture}
        \node[vertex](head) at (0,0){head};
        \node[vertex](B) at (2,0){$B$};
        \node[vertex](tail) at (4,0){tail};

        \draw[edge](head) to (B);
        \draw[edge](B) to (tail);
    \end{tikzpicture}
\end{figure}
\noindent Then $p$ performs a remove operation to remove $B$ from the list.
It flags $head$ and marks $B$.
In the figures, \textcolor{orange}{F} denotes a flagged Update\-Node and \textcolor{red}{M} denotes a marked Update\-Node.
\begin{figure}[H]
    \centering
    \begin{tikzpicture}
        \node[vertex](head) at (0,0){head};
        \node[vertex](B) at (2,0){$B$};
        \node[vertex](tail) at (4,0){tail};
        
        \draw[edge](head) to (B);
        \draw[edge](B) to (tail);
        2.4,0.2
        \node[color=orange](mark) at (0.6,0.2){$F$};
        \node[color=red](mark) at (2.5,0.2){$M$};
    \end{tikzpicture}
\end{figure}

\noindent Now $p$ performs a \textit{CAS} on \textit{head} so that it is unflagged and points to $tail$,
completing the removal of $B$.
\begin{figure}[H]
        \centering
        \begin{tikzpicture}
            \node[vertex](head) at (0,0){head};
            \node[vertex](B) at (2,0.6){$B$};
            \node[vertex](tail) at (4,0){tail};
    
            \draw[edge](head) to (tail);
            \draw[edge](B) to (tail);
            \node[color=red,rotate=-17](mark) at (2.55,0.65){$M$}; 
        \end{tikzpicture}
\end{figure}

\noindent Then $q$ wakes up and performs its $CAS$, which successfully reinserts $B$ between $head$ and $tail$.
\begin{figure}[H]
    \centering
    \begin{tikzpicture}
        \node[vertex](head) at (0,0){head};
        \node[vertex](B) at (2,0){$B$};
        \node[vertex](tail) at (4,0){tail};

        \draw[edge](head) to (B);
        \draw[edge](B) to (tail);

        \node[color=red](mark) at (2.5,0.2){$M$}; 
    \end{tikzpicture}
\end{figure}
This violates our requirement for the \textit{UALL} that Update\-Nodes should 
not be reinserted following their removal.
Moreover, it also violates an important invariant of Fomitchev and Ruppert's algorithm
that a marked node in the list is immediately preceded by a flagged node.
Following the \textit{CAS} by $q$, $B$ is marked and in the list, but its predecessor, $head$, 
is not flagged.
If an \textit{LL/SC} object is used for the successor field rather than a \textit{CAS}
object, $q$ would be able to see that $B$ has already been inserted, 
and this would circumvent the \textit{ABA problem}.
Our implementation instead makes use of unbounded sequence numbers that are used when inserting an Update\-Node
to resolve this problem.

The \textit{UALL} is a variant of Fomitchev and Ruppert's linked list
which uses \textit{Insert\-Desc\-Nodes} to coordinate updates to an Update\-Node 
which is being inserted and the Update\-Node which will immediately precede it once it is inserted.
When trying to insert an Update\-Node into the \textit{UALL}, 
a process will first insert an Insert\-Desc\-Node into the \textit{UALL} to provide information
to other processes about its ongoing insertion of the Update\-Node.
Any process that encounters the Insert\-Desc\-Node in the \textit{UALL} will use this information 
to help the insert operation. If the Update\-Node has never been inserted, 
it will be inserted in place of the Insert\-Desc\-Node; otherwise, the 
Insert\-Desc\-Node is simply removed.

Every Update\-Node $n$ has a \textit{next} field, a \textit{state} field and a \textit{backlink} field.
If $n$ is in the \textit{UALL}, \textit{n.next} indicates the node immediately following $n$ in the \textit{UALL}.
The \textit{state} of an Update\-Node can either be \textit{Normal}, \textit{InsFlag}, \textit{DelFlag} or \textit{Marked}.
It is initially \textit{Normal}.
If \textit{n.state} is \textit{Normal}, then \textit{n.next} stores either $\bot$ or a pointer to an Update\-Node.
If \textit{n.state} is \textit{InsFlag}, this indicates that $n$ is in the \textit{UALL} and
it is immediately followed by an Insert\-Desc\-Node.
The \textit{DelFlag} \textit{state} has similar semantics to a flag in Fomitchev and Ruppert's 
linked list, whereas the \textit{Marked} \textit{state} has similar semantics to a mark in their list.
If \textit{n.state} is \textit{DelFlag}, this indicates that $n$ is in the \textit{UALL} and
\textit{n.next} stores a pointer to an Update\-Node, $n'$, that is being removed from the \textit{UALL}.
Before $n'$ is removed from the \textit{UALL}, $\mathit{n'.state}$ is set to \textit{Marked} and does not change for the rest of the execution.
We say that an Update\-Node is marked if its \textit{state} is \textit{Marked}; otherwise it is unmarked.
Once $n'$ is marked, $\mathit{n'.next}$ continues to point to the same Update\-Node for the rest of the 
execution.
While a marked Update\-Node is in the \textit{UALL}, it is immediately preceded by an Update\-Node with \textit{state} \textit{DelFlag}.
Before $n'$ is marked, $\mathit{n'.backlink}$ is set to $n$, which allows processes 
to remove $n'$ and resume their traversal of the \textit{UALL} from $n$ once $n'$ is removed.

The \textit{next} and \textit{state} fields of Update\-Nodes are stored inside a single \textit{CAS} object, \textit{NextState}.
An efficient way to implement this on a real machine is to store 
the \textit{NextState} field in a single word. 
On most modern machines, the address of a node object in memory will be a multiple of 8,
leaving the lowest 3 bits unused.
In our implementation, a node's \textit{next} field is stored in the highest 61 bits of a 64-bit word,
whereas the \textit{state} field is stored in the lowest 3 bits.
We use the notation $\twoField{next}{state}$ to denote values which are stored by this \textit{CAS} object.

In addition to helping with insertions, processes operating on the \textit{UALL} may help remove Update\-Nodes from the \textit{UALL} when they encounter an Update\-Node whose \textit{state} is \textit{Marked} or \textit{DelFlag}.
The \textit{helpMarked(prev, delNode)} function is used by a process to attempt to remove a marked Update\-Node, \textit{delNode},
from the \textit{UALL}.
The Update\-Node, \textit{prev}, immediately precedes \textit{delNode} in the 
\textit{UALL} and has \textit{state} \textit{DelFlag}, or there existed 
some time that this was the case.
If \textit{delNode} is still in the \textit{UALL}, it is removed,
\textit{prev.state} is updated from \textit{DelFlag} to \textit{Normal},
and \textit{prev.next} is updated to the Update\-Node that immediately followed \textit{delNode} before it was removed. 
The $next$ and $state$ fields of $prev$ are returned.

To perform an instance of \textit{helpMarked}, a process, \textit{p}, reads a pointer, \textit{delNext}, to the Update\-Node which 
immediately follows (or followed) \textit{delNode} in the \textit{UALL}.
Next, $p$ will perform a \textit{CAS} on \textit{prev.NextState} to attempt to update \textit{prev.next} from \textit{delNode} to \textit{delNext}
and \textit{prev.state} from \textit{DelFlag} to \textit{Normal}.
If successful, this removes the Update\-Node pointed to by $delNode$ from the \textit{UALL}.
Finally, $p$ will return the $next$ and $state$ fields of $prev$. 
\begin{figure}[H] \label{helpMarked}
    \begin{algorithmic}[1]
        \alglinenoNew{alg1}
        \State $helpMarked(prev, delNode)$ \Comment{Helps physically remove $delNode$}
        \Indent
            \State $\twoField{delNext}{state'} \gets delNode.NextState.read()$
            \State $\twoField{next'}{state'} \gets prev.NextState.CAS(\twoField{delNode}{DelFlag}, \twoField{delNext}{Normal})$\label{pruneCAS}
            \If{$\twoField{next'}{state'} = \twoField{delNode}{DelFlag}$} 
                \State \Return $\twoField{delNext}{Normal}$\Comment{CAS successfully removed $delNode$.}\label{helpRemRet1}
            \Else \
                \Return $\twoField{next'}{state'}$\Comment{CAS failed; \textit{delNode} was already removed.} \label{helpRemRet2}
            \EndIf
        \EndIndent 
        \alglinenoPush{alg1}
    \end{algorithmic}
\end{figure}

Instead of allocating and reclaiming Insert\-Desc\-Nodes after each insertion and removal, in our implementation,
every process has its own Insert\-Desc\-Node which it uses to insert Update\-Nodes into the \textit{UALL} throughout the execution.
Arbel-Raviv and Brown\cite{arbelraviv_et_al:LIPIcs.DISC.2017.4} proposed a technique for reusing immutable descriptor nodes
using sequence numbers, which is what we use in our implementation.
Let $p_i$ be the process with id $i$, for any $i$.
There is a shared array \textit{insDesc} which contains the Insert\-Desc\-Node of every process,
such that $p_i$'s Insert\-Desc\-Node is denoted $\mathit{insDesc[i]}$.
Every Insert\-Desc\-Node contains three fields, \textit{new\-Node}, \textit{next} field and \textit{seqNum}.
Any process may read the fields of $\mathit{insDesc[i]}$, but only $p_i$ may change them.
The fields of $\mathit{insDesc[i]}$ are only changed when $\mathit{insDesc[i]}$ is not in the \textit{UALL}.
If $\mathit{insDesc[i]}$ is in the \textit{UALL}, it comes right after an Update\-Node, its \textit{next} field stores a pointer to the Update\-Node that immediately follows  
$\mathit{insDesc[i]}$ in the \textit{UALL} and its \textit{new\-Node} field stores a pointer to an Update\-Node that $p_i$ is trying to insert.
The \textit{seqNum} field stores a non-negative integer which serves as a sequence number, initially holding the value 0.
When $\mathit{insDesc[i]}$ is removed from the \textit{UALL}, $p_i$ increments $\mathit{insDesc[i].seqNum}$,
so $\mathit{insDesc[i]}$ has a unique sequence number every time it is inserted into the \textit{UALL}.
Moreover, until $p_i$ has incremented $\mathit{insDesc[i].seqNum}$, $p_i$ does not change 
$\mathit{insDesc[i].newNode}$ or $\mathit{insDesc[i].next}$.
If $\mathit{insDesc[i]}$ is in the \textit{UALL}, it is immediately preceded by exactly one Update\-Node, $A$, in the \textit{UALL} such that: 
\begin{itemize}
    \item \textit{A.state} is \textit{InsFlag}, and
    \item \textit{A.next} stores two subfields, $seqNum$ and $pid$, 
    such that $seqNum$ is equal to $\mathit{insDesc[i].seqNum}$ and $pid$ is equal to $i$.
\end{itemize}
The contents of the \textit{next} field of an Update\-Node whose \textit{state} is \textit{InsFlag} are denoted $\twoField{seqNum}{pid}$.
As $\mathit{insDesc[i]}$ has a unique \textit{seqNum} every time it is inserted,
the contents of $A.NextState$ will be unique as well.
This ensures that \textit{CAS} steps on $A.NextState$ by processes still trying to help an earlier insertion performed by $p_i$
will not be successful, as $A.NextState$ will be different.

The storage of both sequence numbers and process ids in the \textit{next} field of Update\-Nodes might appear worrisome, 
given that in physical implementations words have a fixed width.
In our implementation of Update\-Nodes on a 64-bit machine, for example, 
the \textit{next} field is stored within the highest 61 bits of a word.
Could the sequence number continue to grow during an execution and exhaust the available space in the \textit{next} field?
While the answer is actually yes, it would take an enormous amount of time.
Assuming there can be at most $65536 = 2^{16}$ processes, 16 bits can be used to store the process id in the \textit{next} field,
leaving 45 bits for the sequence number.
This leaves $2^{45} - 1$ as the largest sequence number available.
Assuming further that each process can perform at most 1 million insertions per second into the \textit{UALL},
it would take an execution lasting over a year for a process to use up all of the available sequence numbers.
This is sufficient for our purposes. 
However, if an application needed to be able to run with more processes, or for longer than a year,
many machines support \textit{DWCAS} (double-width \textit{CAS}) operations.
A \textit{DWCAS} operation is a \textit{CAS} operation that is applied to two consecutive words in memory.
On such a machine, we could instead store the \textit{next} field and \textit{state} fields of an UpdateNode in a 128-bit word. 
Allowing even $2^{32}$ (over 4 billion) distinct process ids, 
there would be 93 bits remaining to store the sequence number. If every process performs at most
at most 1 trillion insert operation instances per second, it would take over 314 million years to exhaust
all of the available sequence numbers. It is hard to imagine that this would not be sufficient.

The \textit{helpInsert(prev, seq, x)} function can be used by a process which encounters an Update\-Node, $prev$, 
such that \textit{prev.state} is \textit{InsFlag} and \textit{prev.next} is $\mathit{\twoField{seq}{x}}$, or this was the case previously.
The process will try to will try to help complete the insert instance being performed by $p_x$.
If $\mathit{insDesc[x]}$ is still in the \textit{UALL} and its \textit{seqNum} is still equal to \textit{seq}, it is removed from the \textit{UALL},
\textit{prev}'s $state$ is updated from \textit{InsFlag} to \textit{Normal},
and if the Update\-Node pointed to by $\mathit{insDesc[x].new\-Node}$ has never been inserted, it is inserted in the place of $\mathit{insDesc[x]}$.
The contents of the $next$ and $state$ fields of $prev$ following the function are returned.

To perform an instance of \textit{helpInsert(prev, seq, x)}, a process, \textit{p}, first reads the \textit{next} field, \textit{new\-Node} and \textit{seqNum} fields of $\mathit{insDesc[x]}$.
Let $new\-Node'$ and $next'$ be equal to the values which $p$ read from the \textit{new\-Node} and \textit{next} fields of $\mathit{insDesc[x]}$. 
If its \textit{seqNum} is not equal to $seq$ then $p$ simply reads and returns the \textit{next} field and \textit{state} fields of \textit{prev}.
If its \textit{seqNum} is equal to $seq$ then $new\-Node'$ is a pointer to an Update\-Node that $p_x$ is (or was) trying to insert into the \textit{UALL}.
A \textit{CAS} is performed by $p$ in an attempt to update $\mathit{new\-Node'.next}$ from $\bot$ to $\mathit{next'}$.
Suppose this \textit{CAS} fails because the \textit{state} of $new\-Node'$ is \textit{Marked}.
Then $new\-Node'$ has already been inserted into the \textit{UALL}.
In this case, $p$ performs a \textit{CAS} to attempt to update \textit{prev.next} from $\twoField{seq}{x}$ to $next'$,
and \textit{prev.state} from \textit{InsFlag} to \textit{Normal}, returning the values of $prev.next$ and $prev.state$ following the \textit{CAS}.
If the \textit{CAS} was successful then $\mathit{insDesc[x]}$ was removed from the \textit{UALL} without inserting $new\-Node'$.
Suppose instead the \textit{state} of $new\-Node'$ was not \textit{Marked}.
Then $p$ performs a \textit{CAS} on \textit{prev.NextState} to attempt to update \textit{prev.next} from $\twoField{seq}{x}$ to $new\-Node$ and 
\textit{prev.state} from \textit{InsFlag} to \textit{Normal}, returning the $next$ and $state$ fields of \textit{prev} following the \textit{CAS}.
If the \textit{CAS} was successful then \textit{new\-Node} has been inserted into the \textit{UALL} into the place of $\mathit{insDesc[x]}$.
\begin{figure}[H] \label{helpInsert}
    \begin{algorithmic}[1]
        \alglinenoPop{alg1}
        \State $helpInsert(prev, seq, x)$
        \Indent
            \State $new\-Node' \gets insDesc[x].new\-Node.read()$
            \State $next' \gets insDesc[x].next.read()$
            \If{$insDesc[x].seqNum.read() \neq seq$} \label{seqNumChangedHI}
                \State \Return $prev.NextState.read()$ \label{returnEarlyHI} \Comment{Insert op by process $x$ already completed.}
            \EndIf 
            \State $\twoField{n'}{state'} \gets new\-Node'.NextState.CAS(\twoField{\bot}{Normal}, \twoField{next'}{Normal})$\label{initCAS}
            \If{$state' = Marked$} \label{checkMarkedHI}
                \State $newNext \gets next'$ \label{noInsertHI} \Comment{Should replace $insDesc[x]$ with next Update\-Node}
            \Else
                \State $newNext \gets new\-Node'$ \label{insertHI} \Comment{Should replace $insDesc[x]$ with $new\-Node'$}
            \EndIf
            \State $\twoField{next'}{state'} \gets prev.NextState.CAS(\twoField{\twoField{seq}{x}}{InsFlag}, \twoField{newNext}{Normal})$\label{helpInsCAS}
            \If{$\twoField{next'}{state'} = \twoField{\twoField{seq}{x}}{InsFlag}$}
                \State \Return $\twoField{newNext}{Normal}$\Comment{CAS succeeded}\label{casSuccHI}
            \Else \ 
                \Return $\twoField{next'}{state'}$ \label{casFailHI}\Comment{CAS failed}
            \EndIf
        \EndIndent
        \alglinenoPush{alg1}
    \end{algorithmic}
\end{figure}

An instance of the \textit{helpRemove(prev, delNode)} function is used by a process when it encounters an Update\-Node, $prev$, 
such that \textit{prev.state} is \textit{DelFlag} and it immediately precedes another UpdateNode, $delNode$, in the \textit{UALL},
or this was the case previously.
First, \textit{delNode.backlink} is set to $prev$. This allows processes that are traversing the \textit{UALL} to efficiently resume their search 
after \textit{delNode} has been removed from the \textit{UALL}.
Next, if unmarked, $delNode$ is marked. 
Finally, if $delNode$ has not already been removed, \textit{prev.next} is updated to point to the Update\-Node which follows 
$delNode$ in the \textit{UALL} and \textit{prev.state} is updated from \textit{DelFlag} to \textit{Normal}.
The contents of the $next$ and $state$ fields of $prev$ following the instance are returned.

To perform an instance of \textit{helpRemove(prev, delNode)}, 
a process, \textit{p}, will first write \textit{prev} into \textit{delNode.backlink}.
Next, $p$ reads the \textit{next} field and \textit{state} fields of 
\textit{delNode} by reading \textit{delNode.NextState}.
While \textit{delNode.state} is not \textit{Marked}, $p$ will perform one of the following:
\begin{itemize}
\item If \textit{delNode.state} is \textit{Normal}, then the process performs a \textit{CAS} on \textit{delNode.NextState} to try to update \textit{delNode.state}
from \textit{Normal} to \textit{Marked}, leaving \textit{delNode.next} unchanged.
\item If \textit{delNode.state} is \textit{DelFlag}, then $p$ uses $helpRemove$ recursively to attempt to help remove the Update\-Node immediately following $delNode$ 
from the \textit{UALL} and update \textit{delNode.state} back to \textit{Normal}.
\item If \textit{delNode.state} is \textit{InsFlag}, then $p$ uses $helpInsert$ to attempt to help insert an Update\-Node immediately following \textit{delNode}
and update \textit{delNode.state} back to \textit{Normal}.
\end{itemize}
Once $delNode$ is marked, \textit{helpMarked(prev, delNode)} is called by $p$ to remove \textit{delNode} from the \textit{UALL}, 
and $p$ returns the $next$ and $state$ fields of $prev$.
\begin{figure}[H] \label{helpRemove}
    \begin{algorithmic}[1]
        \alglinenoPop{alg1}
        \State $helpRemove(prev, delNode)$ \Comment{Helps mark and remove $delNode$}
        \Indent
            \State $delNode.backlink.write(prev)$ \label{setBacklink}
            \State $\twoField{next'}{state'} \gets delNode.NextState.read()$\label{helpRemoveNextInit}
            \While{$state' \neq Marked$}\Comment{Keep trying to mark $delNode$.}\label{helpRemoveLoop}
                \If{$state' = InsFlag$}\Comment{Help with pending insertion.}
                    \State $\twoField{seq}{j} \gets next'$ \Comment{Get $seqNum$ and $pid$ from $next'$.}
                    \State $\twoField{next'}{state'} \gets helpInsert(delNode, seq, j)$\label{helpRemoveHI}
                \ElsIf{$state' = DelFlag$}\Comment{Help with pending deletion.}
                    \State $\twoField{next'}{state'} \gets helpRemove(delNode, succ)$ \label{helpRemoveHR}
                \Else \Comment{$state'$ is $Normal$.}
                    \State $\twoField{oldNext}{state'} \gets delNode.NextState.CAS(\twoField{next'}{Normal}, \twoField{next'}{Marked})$\label{markCAS}
                    \If{$\twoField{oldNext}{state'} = \twoField{next'}{Normal}$} 
                        \Break \Comment{The CAS succeeded in marking $delNode$.}\label{helpRemoveExit}
                    \EndIf
                    \State $next' \gets oldNext$
                \EndIf
            \EndWhile
            \State \Return $helpMarked(prev, delNode)$ \label{helpRemoveHM} \Comment{Return \textit{next} field and \textit{state} of $prev$.}
        \EndIndent 
        \alglinenoPush{alg1}
    \end{algorithmic}
\end{figure}

Suppose $p_i$ wants to insert an Update\-Node, $B$, into the \textit{UALL}.
First, $p_i$ searches for the latest Update\-Node in the \textit{UALL}, $A$, such that $A.key \le B.key$.
If $p_i$ encounters $B$ while searching for $A$ or $p_i$ sees that $B$ is marked, it returns.
While traversing the \textit{UALL}, if $p_i$ encounters an Update\-Node, \textit{uNode},
whose \textit{state} is not Normal, $p_i$ will perform helping:
\begin{itemize}
    \item If \textit{uNode} has \textit{state} \textit{DelFlag}, $p_i$ will use 
    \textit{helpRemove} to help remove the UpdateNode immediately following \textit{uNode} from the \textit{UALL}, restoring \textit{uNode.state} to \textit{Normal}.
    \item If \textit{uNode} has \textit{state} \textit{Marked}, $p_i$ will use 
    \textit{helpMarked} to attempt to remove \textit{uNode} from the \textit{UALL}, 
    before resuming its traversal from \textit{uNode.backlink}.
    \item If \textit{uNode} has \textit{state} \textit{InsFlag}, $p_i$ will use 
    \textit{helpInsert} to help insert an UpdateNode immediately following
    \textit{uNode}, restoring \textit{uNode.state} to \textit{Normal}.
\end{itemize}
If $p_i$ locates $A$, and \textit{A.state} is not \textit{Normal}, then $p_i$ helps with either an insertion or a removal, trying to return \textit{A.state} to \textit{Normal}. 
If $A$ is marked, $p_i$ will attempt to remove $A$ from the \textit{UALL} and resume its search for the latest Update\-Node 
in the \textit{UALL} whose key is less than or equal to $B$'s key.
If \textit{A.state} is \textit{Normal}, then $p_i$ attempts to insert $\mathit{insDesc[i]}$
into the \textit{UALL} between $A$ and the Update\-Node $C$ that immediately follows $A$.
It sets the \textit{new\-Node} of $\mathit{insDesc[i]}$ to point to $B$, the \textit{next} field of $\mathit{insDesc[i]}$ to 
point to $C$. Then, $p_i$ performs a \textit{CAS} to attempt to update \textit{A.state} from \textit{Normal} to \textit{InsFlag} and \textit{A.next} from pointing to \textit{C} to 
containing $\twoField{seq}{i}$, where $seq$ is equal to the \textit{seqNum} of $\mathit{insDesc[i]}$.
If this \textit{CAS} is successful, then $p_i$ performs a \textit{helpInsert(A, seq, i)} instance before incrementing the \textit{seqNum} of $\mathit{insDesc[i]}$ and returning.
If the \textit{CAS} is not successful, then $p_i$ resumes its search for the furthest Update\-Node in the \textit{UALL} whose key is less than or equal to $B$'s key.

\begin{figure}[!ht] \label{insert}
    \begin{algorithmic}[1]
        \alglinenoPop{alg1}
        \State $insert(uNode)$ by $p_i$
        \Indent
            \If{$uNode.NextState.read().state = Marked$} 
                \Return \Comment{\textit{uNode} marked.} \label{insertNodeMarked1} 
            \EndIf
            \State $insDesc[i].new\-Node.write(uNode)$ \label{setDescNewNode}\Comment{Using $insDesc[i]$ to try to insert \textit{uNode}.}
            \State $seq_i \gets insDesc[i].seqNum.read()$ \label{readSeqNum}
            \State $curr \gets head$\label{currInitIns} \Comment{Start searching from \textit{head}.}
            \State $\twoField{next'}{state'} \gets curr.NextState.read()$\label{insertNextInit}
            \Loop \label{insertLoop} 
                \If{$state' = Normal$}
                    \If{$next'.key \le uNode.key$}\label{insertPrecCheck}
                        \If{$next' = uNode$}
                            \Return \Comment{\textit{uNode} already in \textit{UALL}.}
                        \EndIf
                        \State $curr \gets next'$ \label{insertAdvance} \Comment{Continue search from following Update\-Node.}
                        \State $\twoField{next'}{state'} \gets curr.NextState.read()$\label{insertAdvanceUpdate}
                    \Else
                        \If{$uNode.NextState.read().state = Marked$}
                            \State \Return\label{insertNodeMarked2}\Comment{\textit{uNode} marked.}
                        \EndIf
                        \State $insDesc[i].next.write(next')$ \label{setDescNext}
                        \State $\twoField{oldNext}{state'} \gets curr.NextState.CAS(\twoField{next'}{Normal}, \twoField{\twoField{seq_i}{i}}{InsFlag})$\label{insFlagCAS}
                        \If{$\twoField{oldNext}{state'} = \twoField{next'}{Normal}$}
                            \State $helpInsert(curr, seq_i, i)$\label{insertHI1}\Comment{CAS succeeded, finish insertion.}
                            \State $insDesc[i].seqNum.write(seq_i + 1)$\label{incrSeqNum} \Comment{Increment seq num}
                            \State \Return \label{insertNodeInserted} 
                        \EndIf
                        \State $next' \gets oldNext$
                    \EndIf
                \ElsIf{$state' = InsFlag$}
                    \State $\twoField{seq}{j} \gets next'$ \Comment{Get $seqNum$ and $pid$ from $next'$.}
                    \State $\twoField{next'}{state'} \gets helpInsert(curr, seq, j)$\label{insertHI2}\Comment{Help insertion.}
                \ElsIf{$next' = uNode$} 
                    \Return \Comment{\textit{uNode} already in the \textit{UALL}.}
                \ElsIf{$state' = DelFlag$}
                    \State $\twoField{next'}{state'} \gets helpRemove(curr, next')$\label{insertHR} \Comment{Help remove $next'$.}
                \Else \Comment{$curr$ is Marked.}
                    \State $prev \gets curr.backlink.read()$\label{insertBL}
                    \State $\twoField{next'}{state'} \gets prev.NextState.read()$\label{insertBLUpdate}
                    \If{$\twoField{next'}{state'} = \twoField{curr}{DelFlag}$} 
                        \State $\twoField{next'}{state'} \gets helpMarked(prev, curr)$\Comment{Help remove $curr$.}\label{insertHM}
                    \EndIf
                    \State $curr \gets prev$\label{insertCurrSetToPrev}\Comment{Continue searching from $curr$'s backlink.} 
                \EndIf 
            \EndLoop
        \EndIndent
        \alglinenoPush{alg1}
    \end{algorithmic}
\end{figure}

In the following figure, we show how we denote the fields of Update\-Nodes and Insert\-Desc\-Nodes which are relevant to the \textit{UALL}.
On the left is an Update\-Node, and on the right is an Insert\-Desc\-Node.
\begin{figure}[H]
\centering
\begin{tikzpicture}
     \node[updateNode](head) at (0,0) {\nodepart{one} \textit{next}  \nodepart{two} \textit{state} \nodepart{three} \textit{backlink} \nodepart{four} \textit{key} };
     \node[descNode] (bp) at (4,0) {\nodepart{one} \textit{next} \nodepart{two} \textit{new\-Node} \nodepart{three} \textit{seqNum}};
\end{tikzpicture}
\end{figure}
Figure \ref{NewInsProcFig} depicts the insertion of an Update\-Node, $B$, by process $p_i$.
Figure \ref{PreventReinsFig} shows how the reinsertion of an Update\-Node, $B'$, is avoided if $B'$ was previously inserted and removed from the \textit{UALL}.
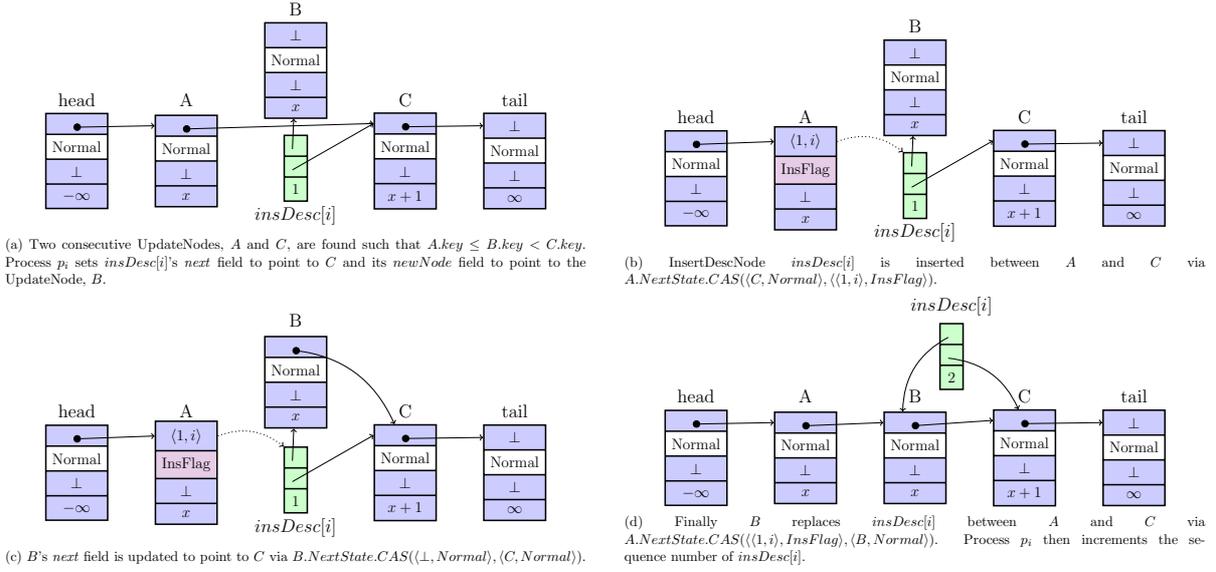
\begin{figure*}[!ht] 
    \scalebox{0.48}{
    \begin{subfigure}[b]{1\columnwidth}
    \centering

    \begin{tikzpicture}

    \node[updateNode](head) at (0,0) {\nodepart{two} Normal \nodepart{three} $\bot$ \nodepart{four} $-\infty$  };
    \node[updateNode] (A) at (3,0) {\nodepart{two} Normal \nodepart{three} $\bot$ \nodepart{four} $x$};
    \node[updateNode] (B) at (6,2.5) {\nodepart{one} $\bot$ \nodepart{two} Normal \nodepart{three} $\bot$ \nodepart{four} $x$};
    \node[descNode] (bp) at (6,-0.2) {\nodepart{three} 1};
    \node[updateNode] (C) at (9,0) {\nodepart{two} Normal \nodepart{three} $\bot$ \nodepart{four} $x+1$};
    \node[updateNode] (tail) at (12,0){\nodepart{one} $\bot$ \nodepart{two} Normal \nodepart{three} $\bot$ \nodepart{four} $\infty$};

    \node at (0,1.7) {\large head};
    \node at (3,1.7) {\large A};
    \node at (6,4.2) {\large B};
    \node at (6,-1.5) {\large $insDesc[i]$};
    \node at (9,1.7) {\large C};
    \node at (12,1.7){\large tail};

    \path[thick, *->] (head.one) edge (A.one west);
    \path[thick, *->] (A.one) edge (C.one west);
    \path[thick, ->] (bp.one) edge  (B);
    \path[thick, ->] (bp.two) edge (C.one west);
    \path[thick, *->] (C.one) edge  (tail.one west);
    \end{tikzpicture}
    \caption{Two consecutive Update\-Nodes, $A$ and $C$, are found such that $A.key \le B.key < C.key$. Process $p_i$
    sets $insDesc[i]$'s \textit{next} field to point to $C$ and its $new\-Node$ field to point to the Update\-Node, $B$.}
    \end{subfigure}
    }
    \hfill
    \scalebox{0.48}{
    \begin{subfigure}[b]{1\columnwidth}
    \centering

    \begin{tikzpicture}

    \node[updateNode](head) at (0,0) {\nodepart{two} Normal \nodepart{three} $\bot$ \nodepart{four} $-\infty$  };
    \node[insFlagNode] (A) at (3,0) {\nodepart{one} $\twoField{1}{i}$ \nodepart{two} InsFlag \nodepart{three} $\bot$ \nodepart{four} $x$};
    \node[updateNode] (B) at (6,2.5) { \nodepart{one} $\bot$ \nodepart{two} Normal \nodepart{three} $\bot$ \nodepart{four} $x$};
    \node[descNode] (bp) at (6,-0.2) { \nodepart{three} 1};
    \node[updateNode] (C) at (9,0) {\nodepart{two} Normal \nodepart{three} $\bot$ \nodepart{four} $x+1$};
    \node[updateNode] (tail) at (12,0){\nodepart{one} $\bot$ \nodepart{two} Normal \nodepart{three} $\bot$ \nodepart{four} $\infty$};

    \node at (0,1.7) {\large head};
    \node at (3,1.7) {\large A};
    \node at (6,-1.5) {\large $insDesc[i]$};
    \node at (6,4.2) {\large B};
    \node at (9,1.7) {\large C};
    \node at (12,1.7){\large tail};

    \path[thick, *->] (head.one) edge (A.one west);
    \path[thick, ->, dotted, bend left] (A.one east) edge (bp.north west);
    \path[thick, ->] (bp.one) edge  (B);
    \path[thick, ->] (bp.two) edge (C.one west);
    \path[thick, *->] (C.one) edge (tail.one west);

    \end{tikzpicture}
    \caption{Insert\-Desc\-Node $insDesc[i]$ is inserted between $A$ and $C$ 
    via $A.NextState.CAS(\twoField{C}{Normal}, \twoField{\twoField{1}{i}}{InsFlag})$.}
    \end{subfigure}
    }
    \scalebox{0.48}{
    \begin{subfigure}[b]{1\columnwidth}
    \centering
    \begin{tikzpicture}
    \node[updateNode](head) at (0,0) {\nodepart{two} Normal \nodepart{three} $\bot$ \nodepart{four} $-\infty$  };
    \node[insFlagNode] (A) at (3,0) {\nodepart{one} $\twoField{1}{i}$ \nodepart{two} InsFlag \nodepart{three} $\bot$ \nodepart{four} $x$};
    \node[updateNode] (B) at (6,2.5) {\nodepart{two} Normal \nodepart{three} $\bot$ \nodepart{four} $x$};
    \node[descNode] (bp) at (6,-0.2) {\nodepart{three} 1};
    \node[updateNode] (C) at (9,0) {\nodepart{two} Normal \nodepart{three} $\bot$ \nodepart{four} $x+1$};
    \node[updateNode] (tail) at (12,0){\nodepart{one} $\bot$ \nodepart{two} Normal \nodepart{three} $\bot$ \nodepart{four} $\infty$};

    \node at (0,1.7) {\large head};
    \node at (3,1.7) {\large A};
    \node at (6,-1.5) {\large $insDesc[i]$};
    \node at (6,4.2) {\large B};
    \node at (9,1.7) {\large C};
    \node at (12,1.7){\large tail};

    \path[thick, *->] (head.one) edge (A.one west);
    \path[thick, ->, dotted, bend left] (A.one east) edge (bp.north west);
    \path[thick, *->, bend left] (B.one) edge (C);
    \path[thick, ->] (bp.one) edge  (B);
    \path[thick, ->] (bp.two) edge (C.one west);
    \path[thick, *->] (C.one) edge  (tail.one west);

    \end{tikzpicture}
    \caption{$B$'s \textit{next} field is updated to point to $C$ via $B.NextState.CAS(\twoField{\bot}{Normal}, \twoField{C}{Normal})$.}
    \end{subfigure}
    }
    \hfill
    \scalebox{0.48}{
    \begin{subfigure}[b]{1\columnwidth}
    \centering

    \begin{tikzpicture}

    \node[updateNode](head) at (0,0) {\nodepart{two} Normal \nodepart{three} $\bot$ \nodepart{four} $-\infty$  };
    \node[updateNode] (A) at (3,0) {\nodepart{two} Normal \nodepart{three} $\bot$ \nodepart{four} $x$};
    \node[updateNode] (B) at (6,0) {\nodepart{two} Normal \nodepart{three} $\bot$ \nodepart{four} $x$};
    \node[descNode] (bp) at (7,2.8) {\nodepart{three} 2};
    \node[updateNode] (C) at (9,0) {\nodepart{two} Normal \nodepart{three} $\bot$ \nodepart{four} $x+1$};
    \node[updateNode] (tail) at (12,0){\nodepart{one} $\bot$ \nodepart{two} Normal \nodepart{three} $\bot$ \nodepart{four} $\infty$};

    \node at (0,1.7) {\large head};
    \node at (3,1.7) {\large A};
    \node at (7,4.2) {\large $insDesc[i]$};
    \node at (6,1.7) {\large B};
    \node at (9,1.7) {\large C};
    \node at (12,1.7){\large tail};

    \path[thick, *->] (head.one) edge (A.one west);
    \path[thick, *->] (A.one) edge (B.one west);
    \path[thick, *->] (B.one) edge (C.one west);
    \path[thick, ->, bend right] (bp.one) edge  (B);
    \path[thick, ->, bend left] (bp.two) edge (C);
    \path[thick, *->] (C.one) edge  (tail.one west);

    \end{tikzpicture}
    \caption{Finally $B$ replaces $insDesc[i]$ between $A$ and $C$ via $A.NextState.CAS(\twoField{\twoField{1}{i}}{InsFlag}, \twoField{B}{Normal})$.
    Process $p_i$ then increments the sequence number of $insDesc[i]$.}
    \end{subfigure}
    }
\caption{The insertion of an Update\-Node, $B$, between two consecutive Update\-Nodes, $A$ and $C$, in the algorithm.}
\label{NewInsProcFig}
\end{figure*}
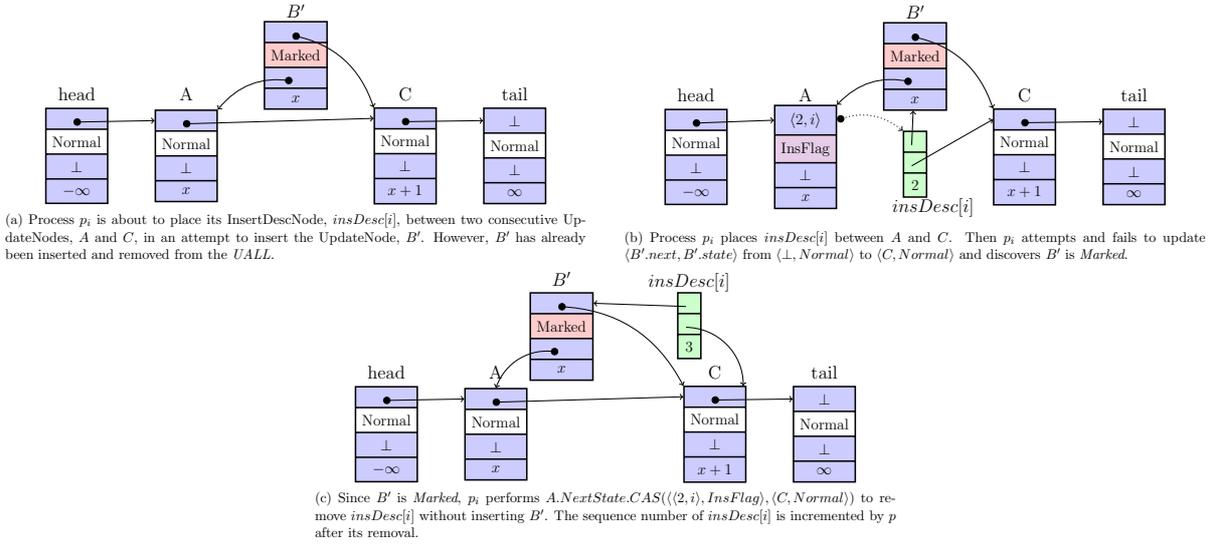
\begin{figure}[!ht]
    \centering
    \scalebox{0.48}{
    \begin{subfigure}[b]{1\columnwidth}
        \centering

        \begin{tikzpicture}
        \node[updateNode](head) at (0,0) {\nodepart{two} Normal \nodepart{three} $\bot$ \nodepart{four} $-\infty$  };
        \node[updateNode] (A) at (3,0) {\nodepart{two} Normal \nodepart{three} $\bot$  \nodepart{four} $x$};
        \node[markedNode] (B) at (6,2.5) {\nodepart{two} Marked \nodepart{four} $x$};
        \node[updateNode] (C) at (9,0) { \nodepart{two} Normal \nodepart{three} $\bot$ \nodepart{four} $x+1$};
        \node[updateNode] (tail) at (12,0){\nodepart{one} $\bot$ \nodepart{two} Normal \nodepart{three} $\bot$ \nodepart{four} $\infty$};
    
        \node at (0,1.7) {\large head};
        \node at (3,1.7) {\large A};
        \node at (6,4) {\large $B'$};
        \node at (9,1.7) {\large C};
        \node at (12,1.7){\large tail};

        \path[thick, *->] (head.one) edge (A.one west);
        \path[thick, *->] (A.one) edge (C.one west);
        \path[thick, *->, bend left] (B.one) edge (C.north west);
        \path[thick, *->] (C.one) edge  (tail.one west);
        \path[thick, *->, bend right] (B.three) edge (A.north east); 
    
        \end{tikzpicture}
    \caption{Process $p_i$ is about to place its Insert\-Desc\-Node, $insDesc[i]$, between two consecutive UpdateNodes, $A$ and $C$, in an attempt to insert the Update\-Node, $B'$.
    However, $B'$ has already been inserted and removed from the \textit{UALL}.}
    \end{subfigure}
    }
    \hfill
    \scalebox{0.48}{
    \begin{subfigure}[b]{1\columnwidth}
    \centering
        \begin{tikzpicture}
       \node[updateNode](head) at (0,0) {\nodepart{two} Normal \nodepart{three} $\bot$ \nodepart{four} $-\infty$  };
        \node[insFlagNode] (A) at (3,0) {\nodepart{one} $\twoField{2}{i}$ \nodepart{two} InsFlag \nodepart{three} $\bot$ \nodepart{four} $x$};
        \node[markedNode] (B) at (6,2.5) {\nodepart{two} Marked \nodepart{four} $x$};
        \node[descNode] (bq) at (6,-0.2) {\nodepart{three} 2};
        \node[updateNode] (C) at (9,0) {\nodepart{two} Normal \nodepart{three} $\bot$ \nodepart{four} $x+1$};
        \node[updateNode] (tail) at (12,0){\nodepart{one} $\bot$ \nodepart{two} Normal \nodepart{three} $\bot$ \nodepart{four} $\infty$};
    
        \node at (0,1.7) {\large head};
        \node at (3,1.7) {\large A};
        \node at (6,4) {\large $B'$};
        \node at (6.5,-1.4) {\large $insDesc[i]$};
        \node at (9,1.7) {\large C};
        \node at (12,1.7){\large tail};

        \path[thick, *->] (head.one) edge (A.one west);
        \path[thick, *->, dotted, bend left] (A.one east) edge (bq.north west);
        \path[thick, ->] (bq.one) edge  (B);
        \path[thick, ->] (bq.two) edge (C.one west);
        \path[thick, *->, bend left] (B.one) edge (C.north west);
        \path[thick, *->, bend right] (B.three) edge (A.north east); 
        \path[thick, *->] (C.one) edge  (tail.one west);

        \end{tikzpicture}
    \caption{Process $p_i$ places $insDesc[i]$ between $A$ and $C$. Then $p_i$ attempts and fails to 
    update $\twoField{B'.next}{B'.state}$ from $\twoField{\bot}{Normal}$ to $\twoField{C}{Normal}$ and discovers $B'$ is \textit{Marked}.}
    \end{subfigure}
    }
    \scalebox{0.48}{
    \begin{subfigure}[b]{1\textwidth}
        \centering
        \begin{tikzpicture}
    
        \node[updateNode](head) at (0,0) {\nodepart{two} Normal \nodepart{three} $\bot$ \nodepart{four} $-\infty$  };
        \node[updateNode] (A) at (3,0) {\nodepart{two} Normal \nodepart{three} $\bot$ \nodepart{four} $x$};
        \node[markedNode] (B) at (4.8,2.7) {\nodepart{two} Marked \nodepart{four} $x$};
        \node[descNode] (bq) at (8.3,3) {\nodepart{three} 3};
        \node[updateNode] (C) at (9,0) {\nodepart{two} Normal \nodepart{three} $\bot$ \nodepart{four} $x+1$};
        \node[updateNode] (tail) at (12,0){\nodepart{one} $\bot$ \nodepart{two} Normal \nodepart{three} $\bot$ \nodepart{four} $\infty$};
    
        \node at (0,1.7) {\large head};
        \node at (3,1.7) {\large A};
        \node at (4.8,4.3) {\large $B'$};
        \node at (8.3,4.2) {\large $insDesc[i]$};
        \node at (9,1.7) {\large C};
        \node at (12,1.7){\large tail};

        \path[thick, *->] (head.one) edge (A.one west);
        \path[thick, *->] (A.one) edge (C.one west);
        \path[thick, ->] (bq.one) edge  (B.one east);
        \path[thick, ->, bend left=45] (bq.two) edge (C);
        \path[thick, *->, bend left] (B.one) edge (C.north west);
        \path[thick, *->, bend right=40] (B.three) edge (A.one north); 
        \path[thick, *->] (C.one) edge  (tail.one west);

        \end{tikzpicture}
    \caption{Since $B'$ is \textit{Marked}, $p_i$ performs $A.NextState.CAS(\twoField{\twoField{2}{i}}{InsFlag}, \twoField{C}{Normal})$ 
    to remove $insDesc[i]$ without inserting $B'$. The sequence number of $insDesc[i]$ is incremented by $p$ after its removal.}
    \end{subfigure}
    }
\caption{The reinsertion of an Update\-Node, $B$, after it has been marked and removed is prevented.}
\label{PreventReinsFig}
\end{figure}

Checking that \textit{uNode} is not \textit{Marked} before an Insert\-Desc\-Node 
is inserted is not necessary for the correctness of the algorithm.
Processes will check that \textit{uNode} is not marked before attempting to 
insert \textit{uNode} in the place of the Insert\-Desc\-Node.
However, placing an Insert\-Desc\-Node in the \textit{UALL} for a marked Update\-Node will only serve to 
slow down the processes that encounter the Insert\-Desc\-Node, which will need to remove the Insert\-Desc\-Node using a CAS.
Therefore attempting to avoid the unnecessary insertion of an Insert\-Desc\-Node for a marked Update\-Node 
seems reasonable.


To perform an instance of $\textit{remove(uNode)}$, a process, $p$, will traverse the \textit{UALL} starting from \textit{head}
in search of \textit{uNode}.
If \textit{uNode} is not discovered before encountering an Update\-Node with key greater than \textit{uNode.key} while searching through the \textit{UALL}, then $p$ will return, as \textit{uNode} was not in the \textit{UALL}.
Just as with insertion, if $p$ encounters an Update\-Node whose \textit{state} is \textit{InsFlag} or \textit{DelFlag} while searching through the \textit{UALL}, 
$p$ will use $helpInsert$ or $helpRemove$ respectively to attempt to help with the pending \textit{UALL} operation and update the \textit{state} of the Update\-Node
back to \textit{Normal}.
If $p$ encounters a marked Update\-Node, $curr$, $p$ will attempt to use $helpMarked$ to remove $curr$ from the \textit{UALL}, 
before resuming its search from the Update\-Node pointed to by $curr$'s \textit{backlink}.
When \textit{uNode} is encountered, $p$ will try to update the \textit{state} of 
the Update\-Node preceding \textit{uNode} from \textit{Normal} to \textit{DelFlag} via \textit{CAS}.
If this \textit{CAS} is successful, $p$ will use \textit{helpRemove} to mark and
remove \textit{uNode} from the \textit{UALL} before returning.
If the \textit{CAS} is not successful, $p$ will resume its search for \textit{uNode} in 
the \textit{UALL}.
\begin{figure}[!ht] \label{remove}
    \begin{algorithmic}[1]
        \alglinenoPop{alg1}
        \State $remove(uNode)$ 
        \Indent
            \State $curr \gets head$\label{currInitRem} \Comment{Start searching from \textit{head}.}
            \State $\twoField{next'}{state'} \gets curr.NextState.read()$ \label{removeNextInit}
            \Loop \label{removeLoop}
                \If{$state' = Normal$}
                    \If{$next'.key > uNode.key$}
                        \Return\label{removeReturnFalse1} \Comment{\textit{uNode} was not in the \textit{UALL}.}
                    \EndIf
                    \If{$next' \neq uNode$}
                        \State $curr \gets next'$ \label{removeAdvance} \Comment{Continue search from following Update\-Node.}
                        \State $\twoField{next'}{state'} \gets curr.NextState.read()$\label{removeAdvanceUpdate}
                    \Else
                        \State $\twoField{next'}{state'} \gets curr.NextState.CAS(\twoField{uNode}{Normal}, \twoField{uNode}{DelFlag})$\label{delFlagCAS}
                        \If{$\twoField{next'}{state'} = \twoField{uNode}{Normal}$}
                            \State $helpRemove(curr, uNode)$ \Comment{CAS succeeded, complete removal.}\label{removeHR1}
                            \State \Return \label{removeReturnTrue} 
                        \EndIf
                    \EndIf
                \ElsIf{$state' = InsFlag$}
                    \State $\twoField{seq}{j} \gets next'$ \Comment{Get $seqNum$ and $pid$ from $next'$.}
                    \State $\twoField{next'}{state'} \gets helpInsert(curr, seq, j)$\label{removeHI}\Comment{Help insertion.}
                \ElsIf{$next'.key > uNode.key$}
                    \Return\label{removeReturnFalse2} \Comment{\textit{uNode} was not in the \textit{UALL}.}
                \ElsIf{$state' = DelFlag$}
                    \State $\twoField{next'}{state'} \gets helpRemove(curr, next')$\label{removeHR2} \Comment{Help remove \textit{next} field.}
                    \If{$next' = uNode$}
                        \Return\label{removeReturnFalse3} \Comment{\textit{uNode} was removed.}
                    \EndIf
                \Else \Comment{$curr$ is Marked.}
                    \State $prev \gets curr.backlink.read()$
                    \State $\twoField{next'}{state'} \gets prev.NextState.read()$
                    \If{$\twoField{next'}{state'} = \twoField{curr}{DelFlag}$}
                        \State $\twoField{next'}{state'} \gets helpMarked(prev, curr)$\label{removeHM}\Comment{Help remove $curr$.}
                    \EndIf
                    \State $curr \gets prev$\label{removeCurrSetToPrev}\Comment{Continue searching from $curr$'s backlink.}
                \EndIf
            \EndLoop
        \EndIndent
        \alglinenoPush{alg1}
    \end{algorithmic}
\end{figure}

Figure \ref{NodeRemoveFig} depicts the process of removing an Update\-Node $B$ from the \textit{UALL}.
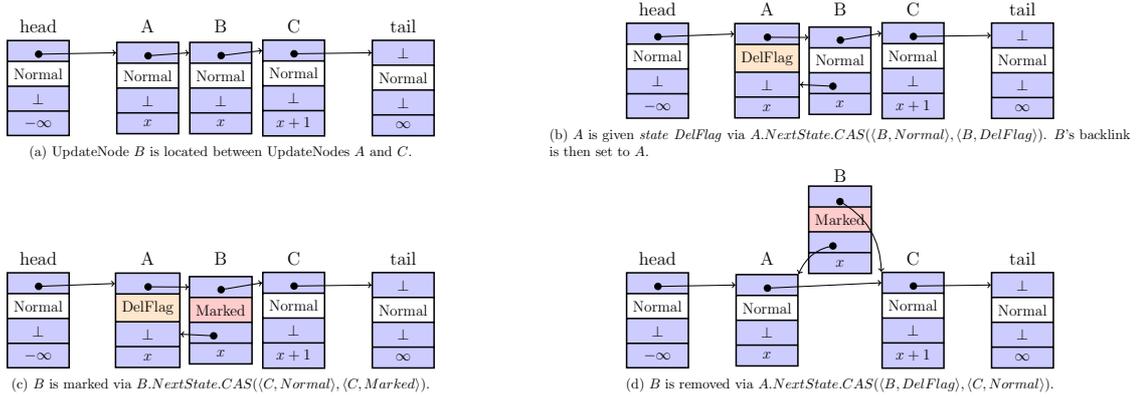
\begin{figure}[!ht] 
    \centering
    \scalebox{0.48}{
    \begin{subfigure}[b]{1\textwidth}
    \centering
    \begin{tikzpicture}

    \node[updateNode] at (0,0) {\nodepart{two} Normal \nodepart{three} $\bot$ \nodepart{four} $-\infty$  };
    \node[updateNode] (A) at (3,0) { \nodepart{two} Normal \nodepart{three} $\bot$ \nodepart{four} $x$};
    \node[updateNode] (B) at (5,0) {\nodepart{two} Normal \nodepart{three} $\bot$ \nodepart{four} $x$};
    \node[updateNode] (C) at (7,0) { \nodepart{two} Normal \nodepart{three} $\bot$ \nodepart{four} $x+1$};
    \node[updateNode] (tail) at (10,0){\nodepart{one} $\bot$ \nodepart{two} Normal \nodepart{three} $\bot$ \nodepart{four} $\infty$};

    \node at (0,1.7) {\large head};
    \node at (3,1.7) {\large A};
    \node at (5,1.7) {\large B};
    \node at (7,1.7) {\large C};
    \node at (10,1.7){\large tail};
    
    \path[thick, *->] (head.one) edge (A.one west);
    \path[thick, *->] (A.one) edge (B.one west);
    \path[thick, *->] (B.one) edge (C.one west);
    \path[thick, *->] (C.one) edge  (tail.one west);

    \end{tikzpicture}
    \caption{Update\-Node $B$ is located between Update\-Nodes $A$ and $C$.}
    \end{subfigure}
    }
    \hfill
    \scalebox{0.48}{
    \begin{subfigure}[b]{1\textwidth}
        \centering
        \begin{tikzpicture}
        \node[updateNode](head) at (0,0) {\nodepart{two} Normal \nodepart{three} $\bot$ \nodepart{four} $-\infty$  };
        \node[delFlagNode] (A) at (3,0) {\nodepart{two} DelFlag \nodepart{three} $\bot$ \nodepart{four} $x$};
        \node[updateNode] (B) at (5,0) {\nodepart{two} Normal \nodepart{four} $x$};
        \node[updateNode] (C) at (7,0) {\nodepart{two} Normal \nodepart{three} $\bot$ \nodepart{four} $x+1$};
        \node[updateNode] (tail) at (10,0){\nodepart{one} $\bot$ \nodepart{two} Normal \nodepart{three} $\bot$ \nodepart{four} $\infty$};
    
        \node at (0,1.7) {\large head};
        \node at (3,1.7) {\large A};
        \node at (5,1.7) {\large B};
        \node at (7,1.7) {\large C};
        \node at (10,1.7){\large tail};

        \path[thick, *->] (head.one) edge (A.one west);
        \path[thick, *->] (A.one) edge (B.one west);
        \path[thick, *->] (B.one) edge (C.one west);
        \path[thick, *->] (C.one) edge  (tail.one west);
        \path[thick, *->] (B.three) edge (A.three east); 
    
        \end{tikzpicture}
    
    \caption{$A$ is given \textit{state} \textit{DelFlag} via $A.NextState.CAS(\twoField{B}{Normal}, \twoField{B}{DelFlag})$. $B$'s backlink is then set to $A$.}
    \end{subfigure}
    }
    \scalebox{0.48}{
    \begin{subfigure}[b]{\textwidth}
        \centering
        \begin{tikzpicture}
        \node[updateNode](head) at (0,0) {\nodepart{two} Normal \nodepart{three} $\bot$ \nodepart{four} $-\infty$  };
        \node[delFlagNode] (A) at (3,0) {\nodepart{two} DelFlag \nodepart{three} $\bot$ \nodepart{four} $x$};
        \node[markedNode] (B) at (5,0) {\nodepart{two} Marked \nodepart{four} $x$};
        \node[updateNode] (C) at (7,0) {\nodepart{two} Normal \nodepart{three} $\bot$ \nodepart{four} $x+1$};
        \node[updateNode] (tail) at (10,0){\nodepart{one} $\bot$ \nodepart{two} Normal \nodepart{three} $\bot$ \nodepart{four} $\infty$};
    
        \node at (0,1.7) {\large head};
        \node at (3,1.7) {\large A};
        \node at (5,1.7) {\large B};
        \node at (7,1.7) {\large C};
        \node at (10,1.7){\large tail};

        \path[thick, *->] (head.one) edge (A.one west);
        \path[thick, *->] (A.one) edge (B.one west);
        \path[thick, *->] (B.one) edge (C.one west);
        \path[thick, *->] (C.one) edge  (tail.one west);
        \path[thick, *->] (B.three) edge (A.three east); 
    
        \end{tikzpicture}
    \caption{$B$ is marked via $B.NextState.CAS(\twoField{C}{Normal}, \twoField{C}{Marked})$.}
    \end{subfigure}
    }
    \hfill
    \scalebox{0.48}{
    \begin{subfigure}[b]{\textwidth}
        \centering
        \begin{tikzpicture}
    
        \node[updateNode](head) at (0,0) {\nodepart{two} Normal \nodepart{three} $\bot$ \nodepart{four} $-\infty$  };
        \node[updateNode] (A) at (3,0) { \nodepart{two} Normal  \nodepart{three} $\bot$ \nodepart{four} $x$};
        \node[markedNode] (B) at (5,2.5) {\nodepart{two} Marked \nodepart{four} $x$ };
        \node[updateNode] (C) at (7,0) { \nodepart{two} Normal \nodepart{three} $\bot$ \nodepart{four} $x+1$};
        \node[updateNode] (tail) at (10,0){\nodepart{one} $\bot$ \nodepart{two} Normal \nodepart{three} $\bot$ \nodepart{four} $\infty$};
    
        \node at (0,1.7) {\large head};
        \node at (3,1.7) {\large A};
        \node at (5,4) {\large B};
        \node at (7,1.7) {\large C};
        \node at (10,1.7){\large tail};
    
        \path[thick, *->] (head.one) edge (A.one west);
        \path[thick, *->] (A.one) edge (C.one west);
        \path[thick, *->, bend left] (B.one) edge (C.north west);
        \path[thick, *->] (C.one) edge  (tail.one west);
        \path[thick, *->, bend right] (B.three) edge (A.north east); 
    
        \end{tikzpicture}
    \caption{$B$ is removed via $A.NextState.CAS(\twoField{B}{DelFlag}, \twoField{C}{Normal})$.}
    \end{subfigure}
    }
\caption{The removal of an Update\-Node $B$ between $A$ and $C$.}
\label{NodeRemoveFig}
\end{figure}

To perform an instance of \textit{readNextUpdate\-Node(uNode)}, a process, $p$, first reads the \textit{next} field and \textit{state} fields 
of \textit{uNode} via a read of \textit{uNode.NextState}. 
Suppose \textit{uNode}'s \textit{state} is not \textit{InsFlag}.
Then its \textit{next} field points to an Update\-Node, and $p$ returns a
pointer to this Update\-Node or $\bot$ if this next Update\-Node was \textit{tail}.
Now suppose \textit{uNode}'s \textit{state} is \textit{InsFlag}.
Then there is an Insert\-Desc\-Node $insDesc[q]$
with sequence number $seq$ immediately following \textit{uNode}, where $q$ is the process id of some process. 
Process $p$ will try to read the pointer to the Update\-Node, $next'$, that immediately follows $insDesc[q]$ in 
the \textit{UALL} before the \textit{seqNum} of $insDesc[q]$ changes from $seq$. 
If the sequence number has not changed, $p$ will return a pointer to $next'$ or 
$\bot$ if $next'$ is \textit{tail}.
If the sequence number has changed, $p$ will read the \textit{next} field and \textit{state} 
fields of \textit{uNode} again and start over.
\begin{figure}[H]
\begin{algorithmic}[1]
\alglinenoPop{alg1}
    \State $readNextUpdate\-Node(uNode)$
    \Indent
        \State $\twoField{next'}{state'} \gets uNode.NextState.read()$
        \While{$state' = InsFlag$} \Comment{There is an Insert\-Desc\-Node following \textit{uNode}.}
            \State $\twoField{seq}{j} \gets next'$ \Comment{Get $seqNum$ and $pid$ from $next'$.}
            \State $next' \gets insDesc[j].next.read()$ \Comment{Try to read \textit{next} field before \textit{seqNum} changes}
            \If {$insDesc[j].seqNum.read() = seq$} 
                \Break 
            \EndIf
            \State $\twoField{next'}{state'} \gets uNode.NextState.read()$ \Comment{Try again.}
        \EndWhile
        \If{$next' = tail$} 
            \Return $\bot$ \Comment{No Update\-Node following \textit{uNode}.}
        \Else \
             \Return $next'$ \Comment{$next'$ was a successor of \textit{uNode}.}
        \EndIf
    \EndIndent   
\alglinenoPush{alg1}
\end{algorithmic}
\end{figure}

To perform an instance of $first()$, a process, \textit{p}, will simply return the 
result of an instance of $readNextUpdate\-Node(head)$.
This returns $\bot$ if \textit{tail} is the UpdateNode that immediately follows \textit{head} in the \textit{UALL}, 
otherwise it returns a pointer to the Update\-Node that immediately follows \textit{head}.

\subsubsection{Correctness and Step Complexity Argument}
We argue why our implementation of the \textit{UALL} is correct.
Since our implementation extends Fomitchev and Ruppert's linked list
with a mechanism that allows many processes to help insert the same 
UpdateNode into the \textit{UALL}, we focus on arguing why this mechanism 
works.

Consider an operation instance, $iOp$, that is trying to insert an 
UpdateNode, \textit{uNode}, into the \textit{UALL}.
Suppose $iOp$ inserts an InsertDescNode, $iDesc$, that points to \textit{uNode} into the \textit{UALL}
between two consecutive UpdateNodes, $A$ and $C$, such that $A.key \le uNode.key < C.key$.
When an UpdateNode with the same key as \textit{uNode} is inserted into the \textit{UALL}, 
it is inserted after every UpdateNode in the \textit{UALL} 
of a lesser or equal key and it is immediately followed by an UpdateNode with a larger key.
Thus, until $iDesc$ is removed from the \textit{UALL}, the set of UpdateNodes 
in the \textit{UALL} with the same key as \textit{uNode} is a subset of those
that $iOp$ encountered before it inserted $iDesc$.
If $iOp$ had encountered \textit{uNode} while traversing to $A$ and $C$, it would 
have returned, so while $iDesc$ is in the \textit{UALL},
\textit{uNode} is not in the \textit{UALL}.
Our implementation is an extension of Fomitchev and Ruppert's linked 
list, so if \textit{uNode} was previously in the \textit{UALL} and 
was later removed, it would be marked.
When any process encounters $iDesc$ in the \textit{UALL}, 
it will check if \textit{uNode} is marked.
If \textit{uNode} is marked, the process will remove $iDesc$ 
from between $A$ and $C$ without inserting \textit{uNode}, provided $iDesc$ has not already been removed.
If \textit{uNode} is not marked, then \textit{uNode} was never previously in the 
\textit{UALL} and \textit{uNode} remains unmarked until it is inserted, 
that is, while $iDesc$ is still in the \textit{UALL}.
The process will perform a \textit{CAS} to insert \textit{uNode} 
in the place of $iDesc$ following $A$ in the \textit{UALL}.
Therefore UpdateNodes that were previously inserted 
and later removed from the \textit{UALL} are not 
reinserted.
Notice that if \textit{uNode} was inserted in the place of $iDesc$,
$iDesc$ was the first InsertDescNode pointing 
to \textit{uNode} that was inserted into the \textit{UALL}.
It remains between
$A$ and $C$ until it is removed and \textit{uNode} is inserted in its place.
Therefore \textit{uNode.next} only needs to be set once to point to $C$ 
before \textit{uNode} is inserted.
Once \textit{uNode.next} has been set, no process changes \textit{uNode.next} until \textit{uNode} 
is in the \textit{UALL}.
This ensures that when \textit{uNode} is inserted in the place of $iDesc$,
only \textit{uNode} is added to the \textit{UALL} and no UpdateNode
is removed.

We also argue that the amortized step complexity of insert and remove
remains the same as in Fomitchev and Ruppert's list.
Consider an insert or remove operation instance, $op$.
At most a constant number of steps is performed by $op$ for every UpdateNode
it encounters, plus a constant number of steps for every time it helps 
with the insertion or removal of an UpdateNode.
Notice that when an instance updates the \textit{state} of an UpdateNode 
to either \textit{InsFlag} or \textit{DelFlag}, it ends after a
constant number of steps (since it will finish its operation).
Every time $op$ helps with the insertion or removal of an UpdateNode, 
there is an insertion or removal instance that overlaps $op$ which 
accomplished its task and either finished its operation instance 
or will finish it if it performs a constant number of additional steps.
Therefore, the number of steps taken by $op$ is at most a constant times
the number of UpdateNodes in the \textit{UALL} when $op$ starts 
plus the number of insert and remove instances that overlap $op$.
Therefore the complexity of $op$ is $O(m(op) + \overline{c}(op))$,
where $m(op)$ is the number of UpdateNodes in the \textit{UALL} at the 
beginning of $op$ and $\overline{c}(op)$ is the interval contention of $op$.
Gibson and Gramoli\cite{10.1007/978-3-662-48653-5_14} showed that in any execution,
the total interval contention of all operation instances is at most twice 
the total of their point contention.
Ruppert\cite{Ruppert_2016} gives a simple proof of this result.
Therefore, the amortized complexity of insert and remove instances 
on the \textit{UALL} is $O(m(op) + \dot{c}(op))$.

\pagebreak

\subsection{\textit{RUALL} and the \textit{ruallPosition} Field of PredecessorNodes}
\label{ruallSection}
In this section we discuss the \textit{RUALL} and the \textit{ruallPosition} field of PredecessorNodes.
The \textit{RUALL} is a linked list that contains Update\-Nodes and may contain Insert\-Desc\-Nodes. 
It is the same as the \textit{UALL}, except the Update\-Nodes stored in the \textit{RUALL}
are sorted in non-increasing order by key rather than non-decreasing order. 
Update\-Nodes with the same key in the \textit{RUALL} 
are still ordered from least recently to most recently inserted.
Like the \textit{UALL}, the \textit{RUALL} has two Update\-Nodes used as sentinels,
\textit{rHead} and \textit{rTail}, such that \textit{rHead} has key $\infty$ and \textit{rTail} has key $-\infty$.
Every Update\-Node has \textit{rNext}, \textit{rState} and \textit{rBacklink} fields, which
are analogous to the \textit{next} field, \textit{state} and \textit{backlink} fields for the \textit{UALL}.
The \textit{rNext} and \textit{rState} fields are stored in a single \textit{CAS} 
object, called \textit{rNextState}.
Just as with the \textit{UALL}, 
each process has its own Insert\-Desc\-Node which it uses to insert Update\-Nodes into the \textit{RUALL}.
The fields of Insert\-Desc\-Nodes used for the \textit{RUALL} are exactly the same as those used for the \textit{UALL}.
Like the \textit{UALL}, the \textit{RUALL} supports the \textit{first}, \textit{readNextUpdateNode(uNode)}, \textit{insert(uNode)} and \textit{remove(uNode)} operations, where \textit{uNode} is a Update\-Node.
The \textit{first} and \textit{readNextUpdateNode(uNode)} operations are only performed by a process during a predecessor (or embedded predecessor) operation.
The implementations of \textit{insert(uNode)} and \textit{remove(uNode)} are symmetric to their implementations for the \textit{UALL}. 

The \textit{ruallPosition} field of a PredecessorNode, \textit{pNode}, stores a pointer to an Update\-Node
that is in the \textit{RUALL} or was in it previously.
Initially, \textit{ruallPosition} stores a pointer to \textit{rHead}.
Let $pOp$ be the operation instance that created \textit{pNode}.
While $pOp$ traverses the \textit{RUALL}, \textit{pNode.ruallPosition}
points to the Update\-Node, \textit{uNode}, that $pOp$ is currently visiting.
To advance to the Update\-Node which either immediately follows \textit{uNode} or immediately followed \textit{uNode} 
before it was removed from the \textit{RUALL}, $pOp$ will invoke $copyNext(uNode)$.
This operation atomically copies a pointer to the UpdateNode immediately following 
\textit{uNode} into \textit{pNode.ruallPosition} and returns a copy of this pointer.
Other processes may \textit{read} the UpdateNode pointer stored in \textit{ruallPosition}.
Our implementation is lock-free but not wait-free.
Operation instances trying to copy a pointer to the UpdateNode 
immediately following an UpdateNode, \textit{cur}, into the \textit{ruallPosition}
could take an unbounded number of steps if InsertDescNodes 
continue to be inserted and removed following \textit{cur} in the \textit{RUALL}.
The \textit{ruallPosition} fields can be implemented using Blelloch and Wei's single-writer Destination object \cite{DBLP:conf/wdag/BlellochW20}.
We present a simplified version of their implementation which is sufficient for our context.



As in the \textit{UALL}, the \textit{first()} operation
obtains a pointer to the Update\-Node, $vNode$, that immediately follows \textit{rHead} in the \textit{RUALL}.
However, it does this by performing an instance of \textit{copyNext(rHead)} on \textit{pNode.ruallPosition},
that atomically reads a pointer to $vNode$ from $rHead.rNext$ and copies it into \textit{pNode.ruallPosition}.
If $vNode$ is \textit{rTail}, then $\bot$ is returned. Otherwise, a pointer to $vNode$ is returned. 
The \textit{readNextUpdateNode(uNode)} operation obtains a pointer to the Update\-Node, $vNode$, that immediately follows \textit{uNode} in the \textit{RUALL} or, if \textit{uNode} is no longer in the \textit{RUALL}, 
that immediately followed \textit{uNode} when it was removed.
It does this by performing an instance of \textit{copyNext(uNode)} on \textit{pNode.ruallPosition}, 
that atomically obtains a pointer to $vNode$ and copies it into \textit{pNode.ruallPosition}.
If $vNode$ is \textit{rTail}, $\bot$ is returned. Otherwise, a pointer to $vNode$ is returned.

The \textit{ruallPosition} field of a PredecessorNode is represented using a single \textit{writeable CAS} object, \textit{data}, which stores two fields, \textit{ptr} and \textit{copying}.
The \textit{ptr} field stores a pointer to an Update\-Node, initially storing a pointer to \textit{rHead}.
The \textit{copying} field stores a single bit that indicates whether a \textit{copyNext} operation by $w$ is in progress.
Initially \textit{copying} is equal to 0. When \textit{copying} is equal to 0, no instance of \textit{copyNext} by $w$ is in progress and 
\textit{ptr} stores the current value of \textit{ruallPosition}.
When performing a \textit{copyNext(uNode)} operation, $w$ will update \textit{copying} to 1 and 
\textit{ptr} to \textit{uNode}.
If \textit{copying} is equal to 1, then \textit{ptr} stores a pointer to an Update\-Node, \textit{uNode}, such that $w$
is attempting to copy a pointer to the Update\-Node $u$ that immediately follows (or immediately followed) \textit{uNode} into the \textit{ruallPosition}.
A \textit{CAS} will be attempted by $w$ to try to update \textit{copying} from 1 to 0 and \textit{ptr} from pointing to \textit{uNode} to pointing to $u$.
If the \textit{CAS} is successful, this completes the atomic copy by 
If a process $q$ tries to read \textit{ruallPosition}, it first reads \textit{data} to obtain the values of \textit{ptr} and \textit{copying}. 
If \textit{copying} is equal to 0, then $q$ returns the contents of \textit{ptr}.
If \textit{copying} is equal to 1, then $q$ must help finish a \textit{copyNext(uNode)} operation that is being performed by $w$, where \textit{uNode} is equal to \textit{ptr}.
First, $q$ obtains a pointer to the Update\-Node $u'$ which immediately follows (or followed) \textit{uNode} in the \textit{RUALL}.
Next, $q$ will perform a \textit{CAS} to attempt to update \textit{ptr} to point to $u'$, \textit{copying} back to 0.
If the \textit{CAS} is successful, the \textit{read} returns a pointer to $u'$,
as $q$ has completed the atomic copy initiated by $w$.
If the \textit{CAS} is not successful, the \textit{read} returns the value of \textit{ptr} returned by the \textit{CAS}.

To perform a \textit{copyNext(uNode)} operation on the \textit{ruallPosition} field, $w$ will first write $\twoField{uNode}{1}$ to \textit{data}.
Then, $w$ will read the \textit{rNext} and \textit{rState} fields of \textit{uNode}.
If \textit{rState} is \textit{Normal}, \textit{DelFlag} or \textit{Marked},
then \textit{rNext} held a pointer to the Update\-Node \textit{uNode'} that follows (or followed) \textit{uNode} in the \textit{RUALL}.
If \textit{rState} is \textit{InsFlag}, then \textit{rNext} stores a process id $x$ and sequence number $seq$, 
and $w$ will try to read the \textit{next} field and \textit{seqNum} fields of the Insert\-Desc\-Node $insertDesc[x]$ before its \textit{seqNum}
has changed from $seq$. If it is not equal to $seq$, then $w$ will reread the \textit{rNext} and \textit{rState} fields of \textit{uNode},
and tries again to the read the Update\-Node 
If the sequence number of $insertDesc[x]$ is still equal to $seq$, then its \textit{next} field stored a pointer
to the Update\-Node \textit{uNode'} which immediately followed \textit{uNode} in the list.
Once $w$ has a pointer to \textit{uNode'}, 
$w$ will attempt a \textit{CAS} on $data$ to update \textit{ptr} from \textit{uNode} to \textit{uNode'} and 
\textit{copying} from 1 to 0.
If the \textit{CAS} is successful, then a pointer to \textit{uNode'} was successfully copied into the \textit{ruallPosition} field,
so $w$ returns \textit{uNode}.
If the \textit{CAS} is not successful, then some instance of \textit{read} atomically copied a pointer to an Update\-Node immediately following \textit{uNode}
into \textit{ptr}, so $w$ returns the value of \textit{ptr} following its \textit{CAS}.
\begin{figure}[H]
\begin{algorithmic}[1]
\alglinenoNew{alg20}
    \State $copyNext(uNode)$ 
    \Indent
        \State $data.write(\twoField{uNode}{1})$ \Comment{Signal start of atomic copy.}
        \State $\twoField{next'}{state'} \gets uNode.rNextState.read()$
        \While{$state' = InsFlag$} \Comment{There is an Insert\-Desc\-Node following \textit{uNode}.}
            \State $\twoField{seq}{j} \gets next'$ \Comment{Get $seqNum$ and $pid$ from $next'$.}
            \State $next' \gets insDesc[j].next.read()$ \Comment{Try to read \textit{next} field before \textit{seqNum} changes}
            \If {$insDesc[j].seqNum.read() = seq$} 
                \Break 
            \EndIf
            \State $\twoField{next'}{state'} \gets uNode.rNextState.read()$ \Comment{Try again.}
        \EndWhile
        \State $\twoField{newPtr}{c} \gets data.CAS(\twoField{uNode}{1}, \twoField{next'}{0})$
        \If{$newPtr = uNode$} 
            \Return $next'$ \Comment{\textit{CAS} was successful.}
        \Else \
             \Return $newPtr$ \Comment{Some other process completed atomic copy.}
        \EndIf
    \EndIndent   
\alglinenoPush{alg20}
\end{algorithmic}
\end{figure}

To perform a \textit{read} operation on the \textit{ruallPosition} of a PredecessorNode,
a process, \textit{p}, will read \textit{data} to obtain the values of the \textit{ptr} and \textit{copying} fields.
If \textit{copying} was equal to 0, then the value of \textit{ptr} stored the current value of the \textit{ruallPosition} field,
which $p$ returns.
Otherwise, \textit{ptr} stored a pointer to an Update\-Node $u$ such that a pointer to the Update\-Node $u'$ that immediately follows (or followed) $u$ 
in the \textit{RUALL} is being copied into the \textit{ruallPosition}. 
In this case, $p$ will obtain a pointer, $\mathit{next'}$ to this Update\-Node immediately following $u$ in the \textit{RUALL}.
Then $p$ performs a \textit{CAS} on $data$ to attempt to update \textit{ptr} from pointing to $u$ to pointing to $u'$ and 
\textit{copying} from 1 to 0.
If this \textit{CAS} is successful, then $p$ returns $\mathit{next'}$.
If the \textit{CAS} is unsuccessful, then the value of \textit{ptr} which was returned by the \textit{CAS} is returned by $p$.
\begin{figure}[H]
\begin{algorithmic}[1]
\alglinenoPop{alg20}
    \State $read()$
    \Indent
        \State $\twoField{ptr'}{copying'} \gets data.read()$
        \If{$copying' = 0$}
            \Return $ptr'$ \Comment{No atomic copy in progress.}
        \EndIf
        \State $uNode \gets ptr'$ \Comment{Copying pointer to Update\-Node immediately following \textit{uNode}.}
        \State $\twoField{next'}{state'} \gets uNode.rNextState.read()$
        \While{$state' = InsFlag$} \Comment{There is an Insert\-Desc\-Node following \textit{uNode}.}
            \State $\twoField{seq}{j} \gets next'$ \Comment{Get $seqNum$ and $pid$ from $next'$.}
            \State $next' \gets insDesc[j].next.read()$ \Comment{Try to read \textit{next} field before \textit{seqNum} changes}
            \If {$insDesc[j].seqNum.read() = seq$} 
                \Break 
            \EndIf
            \State $\twoField{next'}{state'} \gets uNode.rNextState.read()$ \Comment{Try again.}
        \EndWhile
        \State $\twoField{newPtr}{c} \gets data.CAS(\twoField{uNode}{1}, \twoField{next'}{0})$
        \If{$newPtr = uNode$} 
            \Return $next'$ \Comment{\textit{CAS} was successful.}
        \Else \
             \Return $newPtr$ \Comment{Some other process completed atomic copy.}
        \EndIf
    \EndIndent   
\alglinenoPush{alg20}
\end{algorithmic}
\end{figure}

We use \textit{writeable CAS} objects instead of weak \textit{LL/SC} objects in our implementation because we need not worry about the \textit{ABA problem}.
The UpdateNodes in the \textit{RUALL} are sorted in non-increasing order 
by \textit{key} and from least-recently to most-recently inserted.
It is never possible to follow \textit{rNext} fields starting 
from an UpdateNode and end up back at the same UpdateNode.
Therefore, a pointer to an UpdateNode is copied into the \textit{ruallPosition} of a
PredecessorNode at most once.
Another difference between our implementation and Blelloch and Wei's is that 
\textit{copyNext} is always copying a pointer to the UpdateNode immediately 
following the one currently pointed to by \textit{ruallPosition} into the object.
Unlike their implementation, we do not need to store a 
pointer to the shared object whose contents are being copied into the \textit{ruallPosition}.

\subsection{PALL}
\label{pallImplementation}
The \textit{PALL} is a linked list of PredecessorNodes. 
It contains two PredecessorNodes, \textit{pHead} and \textit{pTail}, which serve as sentinels.
Unlike the \textit{UALL} or \textit{RUALL}, the PredecessorNodes of the \textit{PALL}
are not sorted by their keys. Instead, PredecessorNodes in the list are ordered from most recently to least recently inserted.
It supports the following operations, where \textit{pNode} is a PredecessorNode:
\begin{itemize}
    \item \textit{insert(pNode)}: This operation inserts \textit{pNode} into the \textit{PALL} immediately following \textit{pHead}. 
    \item \textit{remove(pNode)}: This operation removes \textit{pNode} from the \textit{PALL}.
    \item \textit{first()}: Returns a pointer to the PredecessorNode immediately following \textit{pHead}, or $\bot$ if {pTail} immediately follows \textit{pHead}.
    \item \textit{readNext(pNode)}: This operation can only be performed if \textit{pNode} has been inserted into the \textit{PALL}.
    If \textit{pNode} is in the \textit{PALL}, this operation returns a pointer to the PredecessorNode immediately following \textit{pNode} in the \textit{PALL},
    or $\bot$ if \textit{pTail} immediately follows \textit{pNode}.
    If \textit{pNode} is not in the \textit{PALL}, this operation returns a pointer to the PredecessorNode that immediately followed \textit{pNode} in the \textit{PALL} when it was removed,
    or $\bot$ if \textit{pTail} was immediately following \textit{pNode} when it was removed.
\end{itemize}
A PredecessorNode may be inserted into the \textit{PALL} and removed from it exclusively by the 
process that created the PredecessorNode.
Once inserted and removed, a PredecessorNode remains permanently removed from the \textit{PALL}.
Since the \textit{PALL} does not need to support concurrent insertions of the same PredecessorNode by multiple processes, we use an unsorted variant of Fomitchev and Ruppert's linked list to implement it.
This is more efficient than our implementations of the \textit{UALL} and \textit{RUALL} which use 
helping to coordinate concurrent insertions of UpdateNodes.
Every PredecessorNode has a \textit{successor} field and a \textit{backlink}
field which is used in this implementation.

\section{Memory Reclamation}
\label{memoryReclamationSection}
An implementation of a shared data structure may make use of dynamically allocated \textit{records},
which are composed of one or more shared objects.
These shared objects are called the fields of the record.
A record is accessed when a process performs a step on one 
of its fields.
The Update\-Nodes, Predecessor\-Nodes and Notify\-Nodes (introduced in Section \ref{relaxedTrie} and Section \ref{lockFreeTrie}) are examples of records.
a process, \textit{p}, can request shared memory from the system for a record \textit{r}, at which point we 
say that \textit{r} has been \textit{allocated}. 
We say that it is \textit{safe to reclaim} a record \textit{r} when it is known that 
no process will subsequently attempt to access \textit{r}.
If this is the case, a single process may \textit{reclaim} \textit{r}.
This \textit{deallocates} the shared memory used by \textit{r} back into the system.

A \textit{memory reclamation scheme} helps processes determine when it is safe to reclaim 
the records of a shared data structure. 
Some memory reclamation schemes are called \textit{automatic}
as the algorithms for operations on the shared data structure do not need to be modified in order to reclaim memory.
An example of an automatic memory reclamation scheme is garbage collection.
Other memory reclamation schemes such as hazard pointers, epoch-based reclamation or reference counting 
are \textit{non-automatic}, as they require processes to actively participate in the reclamation of memory. In our implementation of Ko's Trie, we only use non-automatic memory reclamation techniques.

In this section, we present the memory reclamation techniques that are used to reclaim
records in our implementation of Ko's Trie. 
In subsection \ref{memoryrelatedwork}, \textit{epoch-based reclamation} and \textit{reference counting} are introduced,
which are used together to reclaim memory in our implementation.
In subsection \ref{debraProof}, a variant of epoch based reclamation is considered and it is proved that 
this variant can be used to safely reclaim memory when certain conditions are satisfied. 
Subsection \ref{trieMemoryReclamation} shows that our implementation satisfies these conditions.

\subsection{Memory Reclamation Schemes}
\label{memoryrelatedwork}
In this section, we introduce the lock-free memory reclamation schemes that are used to 
reclaim memory for the Trie: epoch-based reclamation and reference counting.

\subsubsection{Epoch-based reclamation}
To use traditional epoch-based reclamation, an algorithm must satisfy two requirements. 
First, processes must be able to determine when a record has been permanently 
removed from the data structure.
Second, when any process ends an operation instance on the data structure, it must not 
keep pointers to records that it encountered in the data structure during the instance.
If a record $r$ is permanently removed from the data structure, processes that are still performing 
operation instances when $r$ is removed may still be able to access $r$.
However, once those processes have stopped performing operation instances, it is safe to reclaim $r$.
We call a process quiescent if it is not performing an operation instance on the data structure.
Epoch-based reclamation provides processes with a mechanism for checking if 
every other process has been quiescent since a record has been permanently removed from the data structure,
so they know when it safe to reclaim the record. 

A shared \textit{limbo bag} stores pointers to records that have been permanently 
removed from the data structure.
When any process $p$ permanently removes a record $r$ from the data structure during an operation instance, 
it inserts a pointer to $r$ into the limbo bag.
An execution is divided into segments called \textit{epochs}, which are used to determine when other 
processes have been quiescent.
A \textit{CAS} object \textit{E} stores the current epoch which is initially 0.
Before $p$ begins any operation instance, $I$, on the data structure, it
reads the current epoch $e$ from \textit{E} and announces it in shared memory.
We say that instance $I$ announces $e$.
Next, $p$ will read the announcements of every other process.
If every process has announced $e$, $p$ will perform a \textit{CAS} to try to increment 
$E$ and, if successful, 
will reclaim 
any records that were placed into the limbo bag during instances which announced epoch $e - 2$, where $e \ge 2$.

Consider any record, $r$, that is permanently removed from the data structure during an instance \textit{I} which announces $e$. 
Let $p$ be the process that performs \textit{I} and inserts $r$ into the limbo bag.
At the end of $I$, the value of $E$ can be at most $e + 1$, since $p$ last announced epoch $e$.
Prior to performing an instance $I'$ that is concurrent with \textit{I}, the process $q$ that performs $I'$
announces epoch at most $e + 1$.
At the end of $I'$, the value of $E$ can be at most $e + 2$, since $q$ last announced at most epoch $e + 1$.
Hence, when $E$ is incremented from $e + 2$ to $e + 3$, $I$ and every instance that is concurrent 
with $I$ has ended. 
Therefore it is safe for the process that updates \textit{E} from $e+2$ to $e+3$ to reclaim $r$.

Epoch-based reclamation is not fault tolerant. 
If a single process does not perform any operation instances, or crashes while performing an instance on the data structure,
it will not continue to make announcements. 
This will eventually prevent processes that are performing operations on the data structure from updating the epoch and reclaiming memory.

Brown introduced DEBRA\cite{10.1145/2767386.2767436,DBLP:journals/corr/abs-1712-01044}, which is a distributed form of epoch based reclamation with many improvements.

First, DEBRA ensures that only processes that are performing instances of operations on the data structure
may prevent memory from being reclaimed.
When a process announces the epoch, it will also announce that it is no longer quiescent.
When it finishes an instance, it announces that it is now quiescent.
If a process has announced that it is quiescent, 
it can be treated as though it has announced the current epoch. 
However, if even one process starts an operation instance that it never finishes,
this will still prevent other processes from reclaiming memory.

Second, the scanning of announcements of other processes is amortized over many operation instances.
a process, \textit{p}, will read the announcements of a constant number of processes before each instance of an operation it performs.
Once $p$ has
verified that every other process has either
been quiescent since the current epoch began or announced the current epoch, $p$
may attempt to increment $E$ via \textit{CAS}.
While DEBRA consumes fewer steps per operation, it reclaims memory at a slower pace, since a process 
must perform $\Omega(N)$ operation instances before it attempts to update the epoch.

Finally, rather than using a shared limbo bag, each process uses a sequence of local limbo bags in which they store pointers to the records that 
they remove from the data structure. 
This eliminates the contention that is incurred when using a shared limbo bag.
Typically implementations of DEBRA use 3 limbo bags for each process, 
$\mathit{bag_0}$, $\mathit{bag_1}$ and $\mathit{bag_2}$. 
Initially a process stores pointers to records that it removes from the data structure in $\mathit{bag_0}$. When process $p$ 
updates its announcement to a different epoch, it starts using
the next limbo bag in the sequence (modulo 3) to insert records from the data structure. 
If this next limbo bag is non-empty, 
the records in the limbo bag are first reclaimed.

Brown proved that, for algorithms that use DEBRA with 3 bags, it is safe for an instance 
\textit{I} to place a record $r$ into a limbo bag if $r$
is permanently removed from the data structure by the end of $I$. When $r$ is reclaimed,
the epoch has been incremented at least 3 times since the end of $I$.
Every process must have been quiescent since a new epoch started after the end of 
$I$, which means that every instance that overlaps \textit{I} has terminated.
This ensures that no other processes hold a local pointer to $r$ when it is reclaimed.
In subsection \ref{debraProof}, we will consider DEBRA with more limbo bags,
which is used to reclaim memory used by our implementation of Ko's Trie.

Kim, Brown and Singh\cite{10.1145/3627535.3638491} discovered certain negative performance problems when certain memory allocation algorithms are paired with epoch-based reclamation schemes in which processes may reclaim a large number of records all at once.
They propose instead having each process store pointers to threads it is ready 
to reclaim and reclaiming a small number of these records before each data structure operation, an approach they call \textit{amortized freeing}.
Their experimental results reveal that this approach can significantly improve the performance
and scalability of many lock-free sets with some memory allocation algorithms.
However, they found \textit{amortized freeing} actually reduced 
the performance of algorithms when using the \textit{mimalloc} memory 
allocation algorithm.
Since this is the memory allocation algorithm we use to test our implementation 
of Ko's Trie in later sections, our implementation does not use \textit{amortized freeing}.

\subsubsection{Reference Counting}
Reference counting is a technique in which each record $r$ 
is given an additional field that stores the number of references to $r$
that are stored in shared memory and the local memory of processes.
When the reference count of $r$ drops to zero, it is safe to reclaim $r$.
In some implementations, the number of references to a record $r$ that exist in shared memory 
and local memory is precisely maintained in any configuration.
In other implementations, this reference count field only serves as an upper bound on 
the number of pointers to a record $r$ in shared memory.
Once the count is reduced to zero, no other pointers to $r$ will exist and 
$r$ may be reclaimed.
Reference counting may be used in combination with other memory reclamation schemes.
As we will discuss later in this section, our implementation uses DEBRA with 5 limbo bags, 
and a limited form of reference counting is used to determine when it is safe to insert a DelNode into a limbo bag.

\subsection{b-Bag DEBRA}
\label{debraProof}
We describe a simplified version of DEBRA\cite{10.1145/2767386.2767436,DBLP:journals/corr/abs-1712-01044} in which every process has $b$ local limbo bags, where $b \ge 3$.
Like DEBRA, the current epoch is stored in a \textit{CAS} object $E$, which initially holds the value 0.
Every process announces in shared memory the latest epoch that it has read and whether it is currently quiescent.

\subsubsection{High Level Description}
Before $p_i$ performs each instance of an operation on the data structure, it
calls the \textit{startOp} function,
in which it reads an announcement in shared memory from one other process
and updates its own announcement.
First, $p_i$ will read the current epoch from $E$.
If $p_i$ has never announced this epoch, it will perform an instance of \textit{rotateAndReclaim} 
to switch to using the limbo bag that it least recently used, after reclaiming the records it contains.
Next $p_i$ will read the information announced by some other process $q$.
If it observes $q$ has announced the current epoch or is quiescent, then $p_i$ will read the information announced by another process
before it next performs an operation.
If not, $p_i$ will read the information announced by the same process $q$ before its next operation.
If $p_i$ has verified that every process has announced the current epoch or been quiescent since $E$ was equal to the current epoch, 
it will try to increment $E$ using a \textit{CAS}.
Finally, $p_i$ announces the current epoch and that it is no longer quiescent.

Recall that two operation instances, $A$ and $B$, \emph{overlap} if $A$ does not end before $B$ starts and $B$ does not end before $A$ starts.
The \textit{contention graph of an execution} is an undirected graph whose nodes consist of the instances of 
operations that start during the execution, where there is an edge between every two instances that overlap.
This is used to describe the conditions under which a 
process may place a pointer to a record into a limbo bag.
For example, if the execution consists of these operation instances: 
\begin{figure}[H]
    \centering
    \begin{tikzpicture}
    \begin{ganttchart}[inline, time slot format=simple,canvas/.style=%
    {draw=none}]{0}{22}
        \ganttbar{$B$}{2}{6} \ganttbar{$D$}{8}{8}  \ganttbar{$F$}{13}{17}\\
        \ganttbar{$A$}{1}{3} \ganttbar{$C$}{5}{11}  \ganttbar{$G$}{16}{20}\\
         \ganttbar{$E$}{10}{14}\\
    \end{ganttchart}
    \end{tikzpicture}
\end{figure}

\noindent Then its contention graph is:
\begin{figure}[H]
    \centering
    \begin{tikzpicture}
        \node[vertex](a) at (0,0){$A$};
        \node[vertex](b) at (1,1){$B$};
        \node[vertex](c) at (2,0){$C$};
        \node[vertex](d) at (2.5,1){$D$};
        \node[vertex](e) at (3,-1){$E$};
        \node[vertex](f) at (4,1){$F$};
        \node[vertex](g) at (5,0){$G$};

        \draw[uEdge](a) -- (b);
        \draw[uEdge](b) -- (c);
        \draw[uEdge](c) -- (d);
        \draw[uEdge](c) -- (e);
        \draw[uEdge](e) -- (f);
        \draw[uEdge](f) -- (g);
    \end{tikzpicture}
\end{figure}

Consider an operation instance $I$ by process $p_i$ in which it encounters a record $r$.
Suppose that $p_i$ knows that, if $r$ is accessed by operation instance $I'$ after the end of $I$, 
then $I'$ is at most distance $d$ from $I$ in the contention graph.
If the number of bags $b$ is at least $d + 2$, $p_i$ may insert $r$ into a limbo bag,
by performing an instance of \textit{reclaimLater(r)},
provided no process has already inserted $r$ into a limbo bag.
We will prove that after $r$ is reclaimed, no process will access $r$.

After $p_i$ finishes each instance of an operation, it calls the \textit{endOp} function
to announce that it is now quiescent.

\subsubsection{Detailed Implementation}
Every process $p_i$ has its own announcement register, $announce_i$, containing two fields, $epoch_i$ and $quiescent_i$, which it uses to announce epochs and whether it is quiescent, respectively.
The value of $quiescent_i$ is initially is equal to \textit{true}. 
It is set to \textit{false} before every instance of an operation on the data structure by $p_i$,
and set back to \textit{true} when $p_i$ completes the instance.
The value of $epoch_i$ is initially 0.

The $j$-th local limbo bag belonging to process $p_i$ is denoted $limboBag_i[j]$, for $0 \le j < b$.
A limbo bag supports the $insert(r)$ and $remove()$ operations.
The $insert(r)$ function inserts a pointer $r$ to a record into the limbo bag.
The $remove()$ function returns $\bot$ if the limbo bag is empty, otherwise a 
record is removed from the limbo bag and a pointer to the record is returned.
In our implementation, every limbo bag is implemented using an dynamically allocated array-based stack.
The local variable $e_i$ is used by $p_i$ to keep track of the epoch it last announced.
If $p_i$ has not performed an instance of \textit{startOp}, $e_i = \bot$.
If $p_i$ has performed at least one instance of \textit{startOp} and is not performing 
\textit{startOp}, $e_i$ is equal to the epoch most recently announced by $p_i$.
We say that $p_i$ has checked process $p_j$ for epoch $e$ if:
\begin{itemize}
    \item $i = j$ and $p_i$ has read $e$ from $E$, or 
    \item $i \neq j$ and after $p_i$ read $e$ from $E$, $p_i$ read $announce_j$ and saw that either $epoch_j = e$ or $quiescent_j = true$.
\end{itemize}
Process $p_i$ uses a local variable $c_i$, where  $1 \le c_i \le N$,
to keep track of the number of processes that it has checked for $e_i$.
In every configuration in which $e_i \neq \bot$, 
\begin{itemize}
    \item $p_i$ has checked $p_{(i + a) \bmod N}$ for $e_i$, where $0 \le a < c_i$, and
    \item if $c_i < N$, then $p_i$ will next read the announcement register of process $p_{(i + c_i) \bmod N}$ (to try to check it for $e_i$).
\end{itemize}
When $c_i$ is equal to $N$, $p_i$ can increment $E$ from $e_i$ to $e_i + 1$.
The local variable $bag_i$ stores the index of the limbo bag that $p_i$ is currently using.
Initially, the value of $bag_i$ is $b - 1$. The value of $bag_i$ is incremented (modulo $b$) when $p_i$ 
reads an epoch that it has not previously read from $E$.
When this happens, we say that $p_i$ \textit{rotates its limbo bags}. 
Following this rotation, if $limboBag_i[bag_i]$ is not empty, 
the records it contains are removed and reclaimed.
A record that is placed into a limbo bag by $p_i$ will be reclaimed after $p_i$ rotates its limbo bags $b$ times. 

To perform an instance of \textit{startOp}, process $p_i$ first reads the current value of the epoch $e$ from $E$. 
Then $p_i$ checks if $e = e_i$ to determine if it had previously announced $e$.
If not, it performs an instance of \textit{rotateAndReclaim()},
updates $c_i$ to 1 and updates $e_i$ to $e$.
Next, $p_i$ reads from the announcement register of $p_{(i + c_i) \bmod N}$.
If that process was quiescent or it has announced $e$, then $p_i$ will increment $c_i$.
Following this increment, if $c_i$ is equal to $N$, $p_i$ will perform a \textit{CAS} to attempt to increment $E$ from $e$ to $e+1$.
Finally, $p_i$ writes $\twoField{e}{false}$ to $announce_i$, indicating the largest 
epoch it has read from $E$ and that it is no longer quiescent.
\begin{figure}[H] 
\begin{algorithmic}[1]
\alglinenoNew{algDebra}
    \State $startOp()$ by process $p_i$
    \Indent
        \State $e \gets E.read()$ \label{eRead}
        \If {$e \neq e_i$} \label{check_e} \Comment{New epoch since $p_i$ last read $E$.}
            \State $rotateAndReclaim()$ \label{eDifferentCheckEnd}
            \State $c_i \gets 1$ \label{resetCheckNext}
            \State $e_i \gets e$ \label{set_e_i}
        \EndIf
        \State $id \gets (i + c_i) \bmod N$
        \State $\twoField{e'}{q'} \gets announce_{id}.read()$ \label{otherRead}
        \If {$e' = e$ or $q' = true$} \label{checkOther}\Comment{$p_{id}$ is quiescent or has announced $e$}
            \State $c_i \gets c_i + 1$ \label{updateCheckNext}
            \If {$c_i = N$} \Comment{All processes announced $e$ or quiescent since $E=e$.}
                \State $E.CAS(e, e + 1)$ \label{cas_E}
            \EndIf
        \EndIf 
        \State $announce_i.write(\twoField{e}{false})$ \label{announce_e}
    \EndIndent   
\alglinenoPush{algDebra}
\end{algorithmic}
\caption{Function performed by a process $p_i$ before starting an instance of a data structure operation.}
\label{startOp}
\end{figure}

The \textit{rotateAndReclaim()} function changes the current limbo bag used by $p_i$
to the one it least recently used, by updating $bag_i$ to $(bag_i + 1) \bmod b$.
Following this update, if $limboBag_i[bag_i]$ is not empty, every record 
inside of it is removed and reclaimed.
\begin{figure}[H] 
\begin{algorithmic}[1]
\alglinenoPop{algDebra}
    \State $rotateAndReclaim()$ by process $p_i$
    \Indent
        \State $bag_i \gets (bag_i + 1) \bmod b$ \Comment{Move to next limbo bag.}
        \State $record \gets limboBag_i[bag_i].remove()$
        \While{$record \neq \bot$} \Comment{Reclaim every record in $\mathit{limboBag_i[bag_i]}$.}
            \State Reclaim $record$
            \State $record \gets limboBag_i[bag_i].remove()$
        \EndWhile
    \EndIndent   
\alglinenoPush{algDebra}
\end{algorithmic}
\caption{Changes $p_i$'s limbo bag to be the one it has least recently used.}
\label{rotateAndReclaim}
\end{figure}

The \textit{reclaimLater(r)} function is used by a process 
to insert a pointer $r$ to a record into its current limbo bag so that the record may later be reclaimed.
\begin{figure}[H] 
\begin{algorithmic}[1]
\alglinenoPop{algDebra}
    \State $reclaimLater(r)$ by process $p_i$
    \Indent
        \State $limboBag_i[bag_i].insert(r)$
    \EndIndent   
\alglinenoPush{algDebra}
\end{algorithmic}
\caption{Function used by a process $p_i$ to place a pointer $r$ to a record into a limbo bag.}
\label{reclaimLater}
\end{figure}

To perform an instance of \textit{endOp}, process $p_i$ performs a single write to $announce_i$ to
update $quiescent_i$ to $true$, leaving $epoch_i$ (which is equal to $e_i$) unchanged.
\begin{figure}[H] 
\begin{algorithmic}[1]
\alglinenoPop{algDebra}
    \State $endOp()$ by process $p_i$
    \Indent
        \State $announce_i.write(\twoField{e_i}{true})$
    \EndIndent   
\alglinenoPush{algDebra}
\end{algorithmic}
\caption{Function performed by a process $p_i$ after completing an instance of a data structure operation.}
\label{endOp}
\end{figure}

\subsubsection{Proof of Correctness}

Consider an implementation of a data structure using $b$-bag DEBRA 
such that every record is inserted into at most one limbo bag and is inserted into it at most once.
Since any record is removed from a limbo bag and reclaimed at most once, this 
implies no record is reclaimed more than once.

Furthermore, suppose that, in this implementation, after a record is inserted into a limbo bag by an operation instance $I$, it is only accessed by operation instances that are at most distance $b-2$ from $I$ in the contention graph.
We prove that no record is accessed after it is reclaimed.

First, we prove that if a process increments the epoch, then every process was quiescent at some point since the epoch started.
Next, we prove that, for any $k \geq 1$, if the epoch is incremented at least $k$ times after the end of an operation instance $I$, 
then every operation instance at most distance $k - 1$ from $I$ in the contention graph has terminated.
Then we prove that, when a process reclaims a record $r$, at least $b - 1$
increments of the epoch have taken place since the end of the operation instance $I$ in which this process inserted $r$ into a limbo bag.
Finally, from these results we show that no process accesses $r$ 
after it is reclaimed.

Suppose process $p_i$ increments the epoch. 
To show that every process has been quiescent at some point since the epoch started, we show that 
$p_i$ has checked every process for the current epoch and if $p_i$ has checked some process for an epoch, then
the process has been quiescent at some point since the epoch started.

\begin{lemma} \thlabel{quiescentSince}
    If $p_i$ has checked $p_j$ for epoch $e$, then $p_j$ was quiescent at some point since $E$ was first equal to $e$.
\end{lemma}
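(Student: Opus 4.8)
The plan is to argue by cases on whether $i = j$, in both cases exploiting two facts. Fact~(i): $E$ is modified only by the successful \textit{CAS} on line~\ref{cas_E}, which increments it by one, so the value of $E$ is non-decreasing throughout the execution; hence there is a well-defined earliest time $t_0$ at which $E = e$, and $E$ equals $e$ exactly on some interval $[t_0,t_1]$. In particular, any read that returns $e$ from $E$ happens at a time in $[t_0,t_1]$, hence at or after $t_0$. Fact~(ii): while a process is executing \textit{startOp} it has not yet begun the data-structure operation instance (by the description in Section~\ref{memoryrelatedwork}, \textit{startOp} is called \emph{before} each operation instance), so it is quiescent during \textit{startOp}; moreover its announced $quiescent$ flag is still $true$ at that point, since that flag is cleared only on the final line~\ref{announce_e} of \textit{startOp}.

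For the case $i = j$: by definition $p_i$ read $e$ from $E$, and the only read of $E$ is on line~\ref{eRead} of some \textit{startOp} instance of $p_i$. By Fact~(i) this read occurred at a time $\geq t_0$, and by Fact~(ii) $p_i$ was quiescent at that time. So $p_i$ was quiescent at a point no earlier than $t_0$, as required.

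For the case $i \neq j$: by definition, after $p_i$ read $e$ from $E$ (hence at a time $\geq t_0$, by Fact~(i)) it read $announce_j$ at some later time $t_2 \geq t_0$ and observed $epoch_j = e$ or $quiescent_j = true$. If it observed $quiescent_j = true$, then the most recent write to $announce_j$ preceding $t_2$ (performed by $p_j$, the only writer of that register) set $quiescent_j$ to $true$, or $quiescent_j$ still held its initial value $true$; either way $p_j$ is quiescent at time $t_2 \geq t_0$, and we are done. Otherwise $p_i$ observed $epoch_j = e$ with $quiescent_j = false$. The only writes that set $quiescent_j = false$ are the writes $\twoField{e'}{false}$ on line~\ref{announce_e} of \textit{startOp}, so the most recent write to $announce_j$ before $t_2$ was a write $\twoField{e}{false}$ performed at the end of some \textit{startOp} instance $S$ of $p_j$. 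Within $S$, $p_j$ read the value $e$ from $E$ on line~\ref{eRead} before performing that write; by Fact~(i) this read was at a time $\geq t_0$, and by Fact~(ii) $p_j$ was quiescent at that time, which lies between $t_0$ and $t_2$. The lemma follows.

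I expect the only delicate point to be Fact~(ii): one must justify, from the model's definition of when an operation instance on an implemented object ``starts'' and the definition of quiescence, that a process running \textit{startOp} has not yet begun its data-structure operation and therefore genuinely counts as quiescent there. This is a matter of carefully invoking the definitions rather than any combinatorial difficulty. The remaining ingredients — identifying the responsible write to $announce_j$ and using that a read of $e$ from $E$ can only occur while $E$ is (still) equal to $e$, which is precisely where monotonicity of $E$ is used — are routine.
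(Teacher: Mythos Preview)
Your proof is correct and follows essentially the same approach as the paper's: a case split on $i=j$ versus $i\neq j$, and in the latter case on whether $quiescent_j$ or $epoch_j=e$ was observed, using that a process is quiescent while executing \textit{startOp} (in particular when it reads $E$). You spell out the monotonicity of $E$ and the timing more explicitly than the paper does, but the underlying argument is identical.
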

\begin{proof}
    Process $p_i$ is quiescent when it reads $e$ from $E$.
    Now suppose $p_i$ has checked $p_j$ for $e$, where $j \neq i$.
    Then, after $p_i$ read $e$ from $E$ for the first time, 
    $p_i$ read $announce_j$ and saw that either $quiescent_j = true$ or $epoch_j = e$.
    In the first case, $p_j$ was quiescent when $p_i$ read that $quiescent_j = true$.
    In the second case, $p_j$ was quiescent between when it read $e$ from $E$ and when it wrote $e$
    to $announce_j$.
\end{proof}

\begin{lemma} \thlabel{checkedProcesses}
    If $e_i \neq \bot$, $p_i$ has checked $p_{(i + a) \bmod N}$ for $e_i$, for $0 \le a < c_i$.
\end{lemma}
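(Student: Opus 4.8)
The plan is to prove this as an invariant by induction on the length of the execution, tracking only the steps that can change $c_i$ or $e_i$. The variables $announce_i$, $c_i$ and $e_i$ are local to $p_i$ and are modified only inside \textit{startOp}; moreover ``$p_i$ has checked $p_j$ for $e$'' is a statement about the past, hence monotone. So every step other than $p_i$'s assignments to $c_i$ and $e_i$ inside \textit{startOp} leaves both the value of $e_i$ and the set of (process, epoch) pairs checked by $p_i$ unchanged, and therefore preserves the invariant trivially. Thus it suffices to examine the assignments in \textit{startOp} that touch $c_i$ or $e_i$: the pair $c_i \gets 1;\ e_i \gets e$ taken in the branch $e \neq e_i$, and the single increment $c_i \gets c_i + 1$ reachable from either branch.

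For the base case, $e_i = \bot$ initially, so the statement is vacuous until $p_i$ first executes \textit{startOp}; there $e \neq e_i$, so $p_i$ runs \textit{rotateAndReclaim()}, then sets $c_i \gets 1$ and $e_i \gets e$, and by that point $p_i$ has already read $e$ from $E$ in the same call, which by definition means $p_i$ has checked $p_{(i+0)\bmod N} = p_i$ for $e_i = e$; this is exactly the claim for $c_i = 1$, and any later execution of the $e \neq e_i$ branch is identical. For the increment step, assume the invariant holds with the current value $c_i$ just before $p_i$ executes $c_i \gets c_i + 1$. By the order of lines in \textit{startOp}, the local $e$ read from $E$ at the top of this call equals $e_i$ at this point (either it already did, giving $e = e_i$, or $e_i$ was just assigned $e$), the read of $E$ precedes the read of $announce_{(i+c_i)\bmod N}$, and the guard of the increment guarantees that $p_i$ saw $epoch_{(i+c_i)\bmod N} = e$ or $quiescent_{(i+c_i)\bmod N} = true$. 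By the definition of ``checked,'' $p_i$ has therefore checked $p_{(i+c_i)\bmod N}$ for $e_i$; combining this with the induction hypothesis (that $p_i$ has checked $p_{(i+a)\bmod N}$ for $e_i$ for all $0 \le a < c_i$) shows that after the increment $p_i$ has checked $p_{(i+a)\bmod N}$ for $e_i$ for all $0 \le a < c_i + 1$, which is the invariant for the new value of $c_i$.

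The one point that needs care is the bookkeeping that ties the local variable $e$ appearing in the guard ``$e' = e$ or $q' = true$'' to the current value of $e_i$: one must argue that within a single \textit{startOp} call $e_i$ is written at most once and $E$ is read once at the top, so that at the moment of the announcement read we have $e = e_i$ and the read of $E$ precedes it, satisfying both clauses of the definition of ``checked.'' Everything else is routine case analysis over the lines of \textit{startOp}. Once this invariant is established, it combines with the fact that $p_i$ increments $E$ only when $c_i = N$ (so it has checked all $N$ processes for $e_i$) and with \thref{quiescentSince} to conclude that whenever a process increments the epoch, every process was quiescent at some point since the epoch started — the first ingredient of the correctness argument outlined just before the lemma.
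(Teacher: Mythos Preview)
Your proof is correct and follows essentially the same step-by-step induction as the paper. One small point of care the paper makes explicit that you gloss over by treating ``the pair $c_i \gets 1;\ e_i \gets e$'' atomically: in the intermediate configuration after $c_i \gets 1$ but before $e_i \gets e$, the invariant must still hold for the \emph{old} value of $e_i$, and the paper discharges this by observing that $p_i$ read that old $e_i$ from $E$ in its previous \textit{startOp} call (or $e_i = \bot$, vacuous); this is easy but should be stated rather than bundled away.
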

\begin{proof}
    We prove the claim by induction on the sequence of local steps performed by process $p_i$.
    Initially, $e_i = \bot$ so the claim is true.
    Process $p_i$ only updates $e_i$ to values it reads from $E$ and $E$ always contains non-negative integers.
    Therefore, once $e_i \neq \bot$, this remains the case for the rest of the execution.
    If a local step does not change $c_i$ or $e_i$ then the claim still holds.
    Only the local steps performed by $p_i$ on lines \ref{resetCheckNext}, \ref{set_e_i} and
    \ref{updateCheckNext} of \textit{startOp} may change $c_i$ or $e_i$.
    Suppose the claim is true before one of these local steps.

    Suppose that $c_i$ is updated to $1$ on line \ref{resetCheckNext}.
    If $e_i = \bot$ then the claim holds.
    If $e_i \neq \bot$, process $p_i$ must have read $e_i$ from $E$ during its previous instance of \textit{startOp}.
    Therefore $p_i$ has checked $p_i$ for $e_i$
    and the claim holds with $c_i = 1$.

    Suppose that $e_i$ is updated to $e$ on line \ref{set_e_i}.
    Then on the previous line, $c_i$ was updated to $1$.
    Since $p_i$ read $e$ from $E$ on line \ref{eRead}, it has checked $p_i$ for $e$.
    Therefore the claim holds with $c_i = 1$.

    Suppose that $c_i$ is incremented from $c$ to $c+1$ on line \ref{updateCheckNext}.
    By the test on line \ref{check_e} and the assignment on line \ref{set_e_i}, $e_i$ 
    is equal to the epoch $e$ that $p_i$ read from $E$.
    Since the test on line \ref{checkOther} was true, $p_i$ has checked $p_{(i + c) \bmod N}$
    for $e_i$.
    By the induction hypothesis, prior to line \ref{updateCheckNext}, $p_i$ had checked $p_{(i + a) \bmod N}$ for $e_i$, for $0 \le a < c$.
    Therefore, following the increment on line \ref{updateCheckNext}, the claim holds for $c_i = c + 1$.
    
    Therefore, the claim always holds.
\end{proof}

Suppose process $p_i$ performs a \textit{CAS} that successfully increments the epoch from $e$ to $e + 1$ on line \ref{cas_E} of \textit{startOp}.
Since $c_i = N$ on the previous line, \thref{checkedProcesses} implies $p_i$ has 
checked every process for $e_i$ prior to the \textit{CAS}.
Note that $e_i$ must be equal to $e$,
by the test on line \ref{check_e} and the assignment on line \ref{set_e_i}.
Therefore, $p_i$ has checked every process for $e$ prior to the \textit{CAS}.
By \thref{quiescentSince},
this implies that every process was quiescent at some point while $E$ was equal to $e$.
\begin{corollary} \thlabel{checkedEveryProcess}
    If $E$ is incremented from $e$ to $e + 1$, then every process was quiescent at some point while $E$ was equal to $e$.
\end{corollary}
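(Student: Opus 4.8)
The plan is to combine the two preceding lemmas with a short analysis of the only code path that increments $E$. First I would observe that $E$ is a \textit{CAS} object that starts at $0$ and is only ever modified by the \textit{CAS}$(e, e+1)$ on line~\ref{cas_E} of \textit{startOp}; hence $E$ is non-decreasing and attains each value in $\{0, 1, 2, \dots\}$ during one contiguous interval of the execution, so the phrase ``$E$ was first equal to $e$'' is well defined, and a successful increment of $E$ from $e$ to $e+1$ occurs while $E = e$.

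Next I would fix the successful \textit{CAS} instance, performed by some process $p_i$, that increments $E$ from $e$ to $e+1$. The key structural observation is that this \textit{CAS} is reached only when $c_i = N$ immediately after the increment on line~\ref{updateCheckNext}, and that at that point $e_i = e$: the block guarded by the test on line~\ref{check_e} ensures that either $p_i$ left $e_i$ unchanged because it had already announced $e$, or it executed line~\ref{set_e_i} and set $e_i$ to the value $e$ it read from $E$ on line~\ref{eRead}; in either case $e_i = e$ when line~\ref{cas_E} executes. Applying \thref{checkedProcesses} with $c_i = N$ then yields that $p_i$ has checked $p_{(i+a) \bmod N}$ for $e_i = e$ for every $0 \le a < N$, that is, $p_i$ has checked every process for $e$ before performing the \textit{CAS}.

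Finally I would invoke \thref{quiescentSince}: for each process $p_j$, since $p_i$ has checked $p_j$ for $e$, $p_j$ was quiescent at some point since $E$ was first equal to $e$, which is exactly the statement of the corollary. The only real obstacle is the bookkeeping needed to justify that $e_i = e$ and $c_i = N$ hold together at line~\ref{cas_E}, together with ruling out other ways $E$ could change; but the first part is essentially the content of the invariant \thref{checkedProcesses}, and the second is immediate from inspecting the pseudocode, so once those facts are in hand the corollary follows in a couple of lines.
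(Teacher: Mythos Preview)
Your proposal is correct and follows essentially the same approach as the paper: you argue that at the successful \textit{CAS} on line~\ref{cas_E} one has $c_i = N$ and $e_i = e$, apply \thref{checkedProcesses} to conclude $p_i$ has checked every process for $e$, and then invoke \thref{quiescentSince}. Your added remark that $E$ is modified only by line~\ref{cas_E} and hence is non-decreasing is a small but welcome clarification that the paper leaves implicit when passing from ``since $E$ was first equal to $e$'' to ``while $E$ was equal to $e$''.
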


Using \thref{checkedEveryProcess}, we can now prove the following lemma.
\begin{lemma} \thlabel{otherInstancesTerminated}
    Suppose that, during the execution, $E$ is incremented at least $k \ge 1$ 
    times after the end of some operation instance $I$.
    Then every instance that is at most distance $k - 1$ from $I$ in the contention graph ended before the $k$-th of these increments occurred.
\end{lemma}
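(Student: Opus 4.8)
The plan is to prove the lemma by induction on $k$, with \thref{checkedEveryProcess} as the essential ingredient. First I would fix notation: let $t_1 < t_2 < \dots$ be the steps at which $E$ is incremented strictly after $I$ ends. Since $E$ holds a counter that only ever increases, and by exactly one at each increment, these increments are consecutive; so if $E$ has value $v$ immediately before $t_1$, then the $j$-th of them takes $E$ from $v+j-1$ to $v+j$, and $E$ equals $v+j-1$ in every configuration strictly between $t_{j-1}$ (or the end of $I$, when $j=1$) and $t_j$. With this notation, the statement to prove becomes: for $k \ge 1$, every instance at distance at most $k-1$ from $I$ in the contention graph ended before $t_k$.

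For the base case $k=1$, the only instance at distance $0$ from $I$ is $I$ itself, and $I$ ends before $t_1$ by the definition of $t_1$.

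For the inductive step, assume the claim for $k$ and suppose $E$ is incremented at least $k+1$ times after $I$ ends. By the induction hypothesis, every instance at distance at most $k-1$ from $I$ ended before $t_k$, hence before $t_{k+1}$. It remains to handle an instance $J$ at distance exactly $k$ from $I$. Such a $J$ overlaps some instance $J'$ at distance $k-1$ from $I$; by the induction hypothesis $J'$ ended before $t_k$, and since $J'$ does not end before $J$ starts, $J$ started before $t_k$. Let $p$ be the process performing $J$, and let $e$ be the value of $E$ in the configurations between $t_k$ and $t_{k+1}$. The increment at $t_{k+1}$ takes $E$ from $e$ to $e+1$, so by \thref{checkedEveryProcess}, $p$ was quiescent in some configuration while $E$ equalled $e$, i.e.\ in a configuration lying between $t_k$ and $t_{k+1}$. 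Since $J$ had already started before $t_k$ and $p$ is not performing an operation instance in that configuration, $J$ must have ended by then, hence before $t_{k+1}$. This gives the claim for $k+1$ and completes the induction.

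I expect the only real subtlety — rather than a genuine obstacle — to be the bookkeeping around feeding the correct epoch value into \thref{checkedEveryProcess}: one must carefully argue that the $k+1$ increments occurring after $I$ are \emph{consecutive} increments of $E$, so that $E$ is constant (equal to a single value $e$) throughout the stretch of configurations between the $k$-th and $(k+1)$-th of them, and that ``quiescent at some point while $E = e$'' locates the quiescent configuration in that stretch, i.e.\ before $t_{k+1}$. Once this is pinned down, the overlap argument (a distance-$k$ instance overlaps a distance-$(k-1)$ instance, already terminated by the induction hypothesis) and the quiescence conclusion are routine.
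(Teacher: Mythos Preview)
Your proposal is correct and follows essentially the same approach as the paper: induction on $k$, with the base case being $I$ itself at distance $0$, and the inductive step combining the overlap-with-a-closer-instance argument with \thref{checkedEveryProcess} to pin down a quiescent point in the relevant epoch interval. Your explicit bookkeeping that the post-$I$ increments are consecutive (so that $E$ is constant between $t_k$ and $t_{k+1}$, and this is the entire interval where $E$ equals that value) is exactly the point the paper handles with the sentence ``Since $C_{k-1}$ is the first configuration in which $E = e$''.
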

\begin{proof}
    We prove the claim by induction on $k$. 
    Let $C_{k}$ be the configuration immediately following the $k$-th increment of $E$ after the end of $I$.
    
    Consider the base case where $k = 1$.
    Then $C_{k}$ is the configuration immediately following the first increment of $E$ that occurs after the end of $I$.
    The only instance that is at distance 0 from \textit{I} in the contention graph is \textit{I} itself.
    Therefore the claim is true.
    
    Let $k \ge 2$, suppose $E$ was incremented at least $k$ times after the end of $I$, and suppose
    the claim holds for $k - 1$.
    Consider an arbitrary instance $Z$ that is at most distance $k - 1$ from $I$ in the contention graph.
    If $Z$ is less than distance $k - 1$ from $I$ in the contention graph, then by the induction
    hypothesis, $Z$ ended before $C_{k - 1}$.
    So, suppose $Z$ is exactly distance $k - 1$ from $I$ in the 
    contention graph.
    Then, by definition of the contention graph, 
    $Z$ must overlap with some instance $Z'$
    that is exactly distance $k - 2$ from $I$. 
    By the induction hypothesis, $Z'$ ended before $C_{k - 1}$.
    Since $Z$ and $Z'$ overlap, $Z'$ did not end before $Z$ started.
    Therefore $Z$ started before $C_{k - 1}$.
    Let $e$ be the value of $E$ in $C_{k - 1}$.
    By \thref{checkedEveryProcess}, before $E$ was incremented to $e + 1$, every process must have been quiescent at some point while $E$ was equal to $e$.
    Since $C_{k - 1}$ is the first configuration in which $E = e$, the process that performed $Z$
    was quiescent at some point between $C_{k - 1}$ and $C_{k}$.
    This implies that the process finished performing $Z$ before $C_{k}$.
    So the claim holds for $k$.
\end{proof}

Recall that, when using DEBRA with $b$-bags, a process that inserts a record into a limbo bag during some operation instance 
rotates its limbo bags $b$ times between the end of the operation instance and when the record is reclaimed.
\begin{lemma} \thlabel{b_min_one_incrs}
    Suppose a record, $r$, is inserted into a limbo bag during an operation instance $I$.
    If $r$ is reclaimed during the execution, $E$ was incremented at least $b - 1$ times 
    between the end of $I$ and when $r$ is reclaimed.
\end{lemma}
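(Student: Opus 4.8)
The plan is to follow the single process $p_i$ that executes \textit{reclaimLater}($r$) during $I$, placing $r$ into $limboBag_i[j]$ where $j$ is the value of $bag_i$ at that moment, and to count the rotations $p_i$ performs before $r$ is reclaimed. First I would observe that $bag_i$ changes only inside \textit{rotateAndReclaim}, which is invoked only from \textit{startOp}, and that \textit{startOp} runs once at the start of each operation instance, before any call to \textit{reclaimLater}. Hence once $r$ has been inserted during $I$, $bag_i$ stays equal to $j$ for the rest of $I$, and every further rotation occurs inside the \textit{startOp} of an operation instance that begins after $I$ ends (since $p_i$ is sequential). Because limbo bags are private to $p_i$, and by the standing assumption $r$ lies in exactly one bag, namely $limboBag_i[j]$, and only until $bag_i$ next cycles back to $j$ (when that bag is emptied and its records reclaimed), $r$ is reclaimed exactly the first time $bag_i$ returns to $j$ after the insertion; this takes exactly $b$ calls to \textit{rotateAndReclaim}, the last of which reclaims $r$, and all $b$ of them occur after $I$ ends.

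Next I would show that each of these $b$ rotations strictly increases $e_i$. A rotation occurs precisely when, on line~\ref{check_e} of \textit{startOp}, $p_i$ reads a value $e$ from $E$ with $e \neq e_i$ and then, on line~\ref{set_e_i}, sets $e_i \gets e$. Since $E$ is only ever changed by successful increment \textit{CAS} steps, $E$ is monotone non-decreasing, and $e_i$ always equals some value $p_i$ read earlier from $E$; hence every fresh $e \neq e_i$ satisfies $e > e_i$. Writing $v_1 < v_2 < \dots < v_b$ for the values $p_i$ reads from $E$ at the $b$ successive post-$I$ rotations — these are strictly increasing because between the $m$-th and $(m+1)$-th rotation $e_i$ is unchanged and equal to $v_m$ — we obtain $v_b \ge v_1 + (b-1)$.

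Finally I would invoke monotonicity of $E$ once more: since $E$ never decreases, at least $b-1$ increments of $E$ occur between the moment $p_i$ reads $v_1$ (inside the \textit{startOp} of an operation instance after $I$, hence strictly after $I$ ends) and the moment $p_i$ reads $v_b$ (inside the \textit{startOp} in which $r$ is reclaimed, hence at or before the reclamation). This yields the claimed $b-1$ increments of $E$ between the end of $I$ and the reclamation of $r$.

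The main obstacle, and the reason the bound is $b-1$ rather than $b$, is handling the first rotation: $E$ may already have been incremented during $I$ (after $p_i$'s own \textit{startOp} read in $I$) all the way up to $v_1$, so the interval between the end of $I$ and the first post-$I$ rotation need not contain any increment; only the $b-1$ gaps between consecutive rotations are guaranteed to. The other point requiring care is justifying that the first of the $b$ rotations really lies after $I$ terminates, which rests on the observation that $bag_i$ is untouched between $r$'s insertion and the end of $I$.
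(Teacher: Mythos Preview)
Your proposal is correct and follows essentially the same approach as the paper: count the $b$ rotations $p_i$ performs after the end of $I$ before $r$ is reclaimed, observe that each rotation is triggered by reading a strictly new value of $E$, and conclude that at least $b-1$ increments of $E$ separate the first such read (after $I$ ends) from the last (at reclamation). Your write-up is in fact more careful than the paper's on two points it leaves implicit---that $bag_i$ cannot change between the \textit{reclaimLater} call and the end of $I$, and why the bound is $b-1$ rather than $b$---but the underlying argument is the same.
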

\begin{proof}
    Let $p_i$ be the process that performs $I$ and later reclaims $r$.
    Between the end of $I$ and when $r$ is reclaimed, $p_i$ rotates its limbo bags $b$ times.
    Immediately before rotating its limbo bags, $p_i$ reads a value from $E$
    that it has not previously read.
    Consider the first step after the end of $I$ in which $p_i$ reads a value from $E$ that it has not previously read.
    Let $C$ be the configuration immediately following the step.
    Between $C$ and when $r$ is reclaimed, $p_i$ reads $b - 1$ new values from $E$. 
    Since processes only increment $E$, this implies that $E$ was incremented at least $b - 1$ times between $C$ and when $r$ is reclaimed.
\end{proof}

\begin{lemma} \thlabel{b_debra_instances_terminated}
    Suppose a record, $r$, is inserted into a limbo bag during an operation instance, $I$.
    If $r$ is reclaimed during the execution, every instance at most distance $b - 2$ from $I$ 
    ends before $r$ is reclaimed.
\end{lemma}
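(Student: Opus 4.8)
The statement is essentially a composition of the two lemmas immediately preceding it, so the plan is to chain them together. First I would invoke \thref{b_min_one_incrs}: since $r$ is inserted into a limbo bag during $I$ and is reclaimed during the execution, $E$ is incremented at least $b-1$ times between the end of $I$ and the moment $r$ is reclaimed. In particular, all $b-1$ of these increments occur before $r$ is reclaimed.

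Next I would apply \thref{otherInstancesTerminated} with $k = b-1$. This is legitimate because $b \ge 3$ guarantees $k = b-1 \ge 2 \ge 1$, so the hypothesis $k \ge 1$ of that lemma is met, and we have just established that $E$ is incremented at least $k$ times after the end of $I$. The lemma then yields that every operation instance at most distance $k-1 = b-2$ from $I$ in the contention graph ended before the $k$-th (i.e.\ the $(b-1)$-th) of these increments occurred.

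Finally, I would close the argument by observing that the $(b-1)$-th increment takes place before $r$ is reclaimed (this is exactly what the ``between the end of $I$ and when $r$ is reclaimed'' phrasing in \thref{b_min_one_incrs} gives us). Hence every instance at most distance $b-2$ from $I$ in the contention graph ends before $r$ is reclaimed, which is the claim.

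\textbf{Main obstacle.} There is no real technical difficulty here — the work has been front-loaded into \thref{b_min_one_incrs} and \thref{otherInstancesTerminated}. The only point requiring a little care is lining up the index arithmetic: one must match ``$b-1$ increments'' with the parameter $k$ of \thref{otherInstancesTerminated} so that the resulting distance bound comes out to $k-1 = b-2$ rather than $b-1$, and one must be explicit that the relevant increments all precede the reclamation of $r$ so that ``ended before the $k$-th increment'' upgrades to ``ended before $r$ is reclaimed.'' I would also note in passing that this lemma is the key ingredient for the subsequent safety claim: combined with the standing assumption that a reclaimed record is only ever accessed by instances at most distance $b-2$ from $I$, it will immediately follow that no process accesses $r$ after it is reclaimed.
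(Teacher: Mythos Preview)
Your proposal is correct and follows exactly the same approach as the paper: invoke \thref{b_min_one_incrs} to get $b-1$ increments of $E$ between the end of $I$ and the reclamation of $r$, then apply \thref{otherInstancesTerminated} with $k=b-1$ to conclude that every instance at distance at most $b-2$ from $I$ ended before that $(b-1)$-th increment, hence before $r$ is reclaimed. If anything, your write-up is more explicit than the paper's about the index matching and about why the ``before the $k$-th increment'' conclusion upgrades to ``before $r$ is reclaimed.''
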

\begin{proof}
    By \thref{b_min_one_incrs}, if $r$ is reclaimed, then at least $b - 1$ increments
    of $E$ take place between the end of $I$ and when $r$ is reclaimed.
    By \thref{otherInstancesTerminated}, this implies that every instance at most distance 
    $b - 2$ from $I$ in the contention graph has ended before $r$ is reclaimed.
\end{proof}

\begin{theorem} \thlabel{b_debra_safe}
    Consider a record, $r$, that is inserted into a limbo bag during an operation instance, $I$.
    Suppose that, after $r$ is inserted into a limbo bag, it is accessed by operation instances that are at most distance $b - 2$ from $I$. 
    Then $r$ is not accessed after it is reclaimed.
\end{theorem}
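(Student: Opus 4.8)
The plan is to derive the theorem directly from \thref{b_debra_instances_terminated} and the distance hypothesis, essentially without introducing any new machinery. First I would dispose of the trivial case: if $r$ is never reclaimed during the execution there is nothing to prove, so I would assume $r$ is reclaimed. The goal then becomes to show that no step accessing $r$ is performed in or after the configuration in which $r$ is reclaimed.

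Next I would argue by contradiction. Suppose some step accesses $r$ after $r$ is reclaimed, and let $I'$ be the operation instance during which this step is performed (every access to a record happens inside some operation instance). By \thref{b_min_one_incrs}, $E$ is incremented at least $b-1 \ge 2$ times between the end of $I$ and when $r$ is reclaimed; in particular $r$ is reclaimed strictly after $I$ ends, hence strictly after the moment during $I$ at which $r$ was inserted into a limbo bag. Therefore the offending step of $I'$ occurs after $r$ was inserted into a limbo bag, so the hypothesis of the theorem applies to $I'$, giving that $I'$ is at most distance $b-2$ from $I$ in the contention graph. By \thref{b_debra_instances_terminated}, $I'$ ends before $r$ is reclaimed. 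But an operation instance performs steps only between when it starts and when it ends, so $I'$ performs no step after $r$ is reclaimed, contradicting the existence of the offending step. Hence no access to $r$ occurs after $r$ is reclaimed, as required.

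I expect the ``main obstacle'' to be a matter of care rather than depth, since the real work has already been done in the preceding lemmas. The points that need to be stated cleanly are: (i) the argument relies on the standing assumption of this subsection that every record is inserted into at most one limbo bag and at most once, so that ``$r$ is reclaimed'' denotes a single well-defined event and \thref{b_min_one_incrs} and \thref{b_debra_instances_terminated} apply to it; and (ii) the step ``$I'$ ends before $r$ is reclaimed implies $I'$ accesses nothing after the reclamation'' should be made explicit, as it is the only place the temporal structure of an operation instance is used. With those two observations recorded, the theorem follows immediately.
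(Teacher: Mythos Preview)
Your proof is correct and follows essentially the same approach as the paper: both reduce the claim directly to \thref{b_debra_instances_terminated}, with the paper stating the conclusion in two lines while you phrase it as a contradiction and spell out the temporal ordering. Your invocation of \thref{b_min_one_incrs} to establish that reclamation occurs after insertion into the limbo bag is more than strictly needed (this is immediate from the algorithm), but it does no harm.
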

\begin{proof}
    By \thref{b_debra_instances_terminated}, before $r$ is reclaimed, every instance at most distance $b - 2$ from $I$ in the contention graph
    has ended.
    Therefore, $r$ is not accessed after it is reclaimed.
\end{proof}

\subsection{Memory Reclamation in our Implementation of Ko's Trie}
\label{trieMemoryReclamation}

Our implementation of Ko's Trie uses DEBRA with 5 limbo bags to reclaim memory.
Dynamically allocated records that will no longer be accessed in our implementation
are inserted into limbo bags.
These records are eventually reclaimed by our implementation,
provided every process that performs an operation instance eventually completes 
this instance.
Since the implementation is not wait-free, 
there are many executions in which processes could take an unbounded number of 
steps without completing an operation instance that they invoked.
In such executions, processes cannot reclaim memory.
Since the implementation is lock-free, such executions can only occur 
if some process $p$ starts an operation instance and either:
\begin{itemize}
    \item $p$ stops performing steps before completing this operation instance, or
    \item despite $p$ performing steps, other processes continue to 
    complete operation instances and delay $p$ from completing its operation instance.
\end{itemize}

In this section, we show that once a PredecessorNode or an UpdateNode is inserted into a limbo bag,
it is only accessed by operation instances at most distance 3 in the contention graph 
from the instance that inserted it.
This allows us to show that after a record is reclaimed, it is not accessed. 
We address how NotifyNodes are reclaimed in our implementation later.

First, we show that if an operation instance encounters a PredecessorNode or an UpdateNode that announces another
operation instance, then these instances overlap.

\begin{lemma} \thlabel{pall_overlaps_pOp}
    Consider a PredecessorNode, \textit{pNode}, that is created by an operation instance, $pOp$.
    Suppose an operation instance, $aOp$, encounters \textit{pNode} while traversing the \textit{PALL}.
    Then $aOp$ overlaps $pOp$.
\end{lemma}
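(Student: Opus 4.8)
The plan is to verify directly the two halves of the definition of overlap: that $aOp$ does not end before $pOp$ starts, and that $pOp$ does not end before $aOp$ starts.

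The first half is immediate. The PredecessorNode \textit{pNode} is allocated by $pOp$, so no pointer to \textit{pNode} exists anywhere in shared or local memory before $pOp$ starts. Since $aOp$ obtains a pointer to \textit{pNode} at some point while traversing the \textit{PALL}, and it obtains it (ultimately) by reading a field of the \textit{PALL}'s representation, that read occurs after $pOp$ starts; being a step of $aOp$, it also occurs before $aOp$ ends. Hence $aOp$ does not end before $pOp$ starts.

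For the second half I would use two facts. (i) Since $pOp$ inserts \textit{pNode} into the \textit{PALL} and, by the specification of the \textit{PALL}, removes it before $pOp$ ends, while \textit{pNode} is never reinserted, \textit{pNode} lies in the \textit{PALL} only at times strictly before the end of $pOp$. (ii) If a process reaches a node $v$ while traversing the \textit{PALL}, then $v$ is in the \textit{PALL} at some moment no earlier than the start of that traversal. Granting (ii), \textit{pNode} is in the \textit{PALL} at some moment $t$ with $t$ at or after the start of $aOp$; by (i), $t$ is strictly before the end of $pOp$, so $pOp$ does not end before $aOp$ starts, and together with the first half this proves $aOp$ overlaps $pOp$. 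I would prove (ii) by induction along the traversal. It begins at \textit{pHead} (or, for a predecessor operation, at the PredecessorNode the process has just inserted into the \textit{PALL}), which is in the \textit{PALL} at the start of the traversal. Each subsequently visited node is reached either by reading a \textit{successor} field of the current node --- and, using that in this linked list a marked node's successor pointer is frozen, that the successor of a marked node cannot be unlinked while that marked node is still in the list, and that \textit{PALL} insertions occur only immediately after \textit{pHead}, the target was in the \textit{PALL} at a time no earlier than the current node was --- or by following a \textit{backlink} from a marked current node, and, using that in Fomitchev and Ruppert's list a node's \textit{backlink} is written before it is marked and its flagged predecessor remains flagged (hence in the list) until the marked node is unlinked, the backlink target is unlinked no earlier than the current node and so is likewise in the \textit{PALL} at some moment no earlier than the start of the traversal.

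The main obstacle is fact (ii), and within it the \textit{backlink} case: making it airtight requires re-establishing, for the unsorted \textit{PALL} variant, the structural invariants of Fomitchev and Ruppert's linked list~\cite{10.1145/1011767.1011776,fomitchevThesis} --- in particular that a marked node's \textit{backlink} always refers to a node still present when the marked node is unlinked, and that, because \textit{PALL} insertions occur only immediately after \textit{pHead}, the predecessor of a node changes only when that predecessor is itself removed. Once these invariants are available the induction is routine bookkeeping, but verifying them is where the real work lies; they are, however, essentially inherited from the correctness argument for their list.
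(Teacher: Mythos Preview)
Your proposal is correct and follows the same logical skeleton as the paper's proof: verify both directions of the overlap definition, using that \textit{pNode} only exists after $pOp$ starts and that \textit{pNode} is in the \textit{PALL} only while $pOp$ is still in progress. The paper, however, treats this as essentially obvious and dispatches it in a single sentence, simply asserting that $pOp$ inserted \textit{pNode} before $aOp$ encountered it and had not yet removed it when $aOp$ began traversing. You instead set up a genuine inductive reachability argument (your fact (ii)) to justify that second assertion. That extra rigor is sound but more than the paper provides; the paper takes fact (ii) as immediate from the linked-list specification.

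One small over-elaboration: \textit{PALL} traversals, as performed via \textit{first()} and \textit{readNext()}, only follow successor pointers forward; they never follow backlinks (backlinks are used only inside insert/remove helping, not in the read-only traversal path). So the backlink case of your induction never arises here, and the induction collapses to the successor case alone, which is where your argument is already clean.
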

\begin{proof}
    Before $aOp$ encounters \textit{pNode}, $pOp$ inserted \textit{pNode} into the \textit{PALL} and $pOp$ did not remove \textit{pNode} from the \textit{PALL} before $aOp$ started traversing the \textit{PALL}.
\end{proof}

\begin{lemma} \thlabel{uall_ruall_overlaps_uOp}
    Consider an UpdateNode, \textit{uNode}, that is owned by an operation instance, $uOp$.
    Suppose an operation instance, $aOp$, encounters \textit{uNode} while traversing the \textit{UALL} or \textit{RUALL}. 
    Then $aOp$ overlaps $uOp$.
\end{lemma}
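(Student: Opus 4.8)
The plan is to mirror the proof of \thref{pall_overlaps_pOp}: I will show that the interval during which \textit{uNode} resides in the \textit{UALL} (respectively \textit{RUALL}) is contained in the execution interval of $uOp$, and that $aOp$ must be executing at some moment during that interval.

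First I would fix the times \textit{uNode} enters and leaves the \textit{UALL}. Since \textit{uNode} is owned by $uOp$, the instance $uOp$ first inserted \textit{uNode} into the LatestList for \textit{uNode.key}, which happens strictly after $uOp$ starts. Only after \textit{uNode} is in the LatestList does $uOp$ --- or an instance $hOp$ helping \textit{uNode} become \textit{active} --- attempt to insert \textit{uNode} into the \textit{UALL}; thus every insertion attempt of \textit{uNode} into the \textit{UALL} occurs after $uOp$ starts. Because our implementation forbids reinsertion of an UpdateNode that has already been removed from the \textit{UALL}, there is exactly one successful such insertion, at some time $t_{\mathrm{ins}}$, and $uOp$ removes \textit{uNode} from the \textit{UALL} before it finishes, at some time $t_{\mathrm{rem}}$, with $t_{\mathrm{ins}} < t_{\mathrm{rem}}$. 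Hence the whole interval $[t_{\mathrm{ins}}, t_{\mathrm{rem}}]$ during which \textit{uNode} is in the \textit{UALL} is contained in $uOp$'s execution interval. The argument is identical for the \textit{RUALL}, reading \textit{rNext}, \textit{rState}, \textit{rBacklink} and \textit{rHead} for their \textit{UALL} counterparts.

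Next I would argue that $aOp$ is executing at some configuration in $[t_{\mathrm{ins}}, t_{\mathrm{rem}}]$. The key is the traversal invariant for Fomitchev and Ruppert's linked list: whenever a traversing process makes \textit{uNode} its current position --- either by reading $v.\mathit{next} = \mathit{uNode}$ from an UpdateNode $v$ that is still in the \textit{UALL}, or by following a \textit{backlink} to \textit{uNode} --- the node \textit{uNode} is in the \textit{UALL} at that moment. Since $aOp$ encounters \textit{uNode} while traversing the \textit{UALL}, some step of $aOp$ makes \textit{uNode} its current position, and by the invariant \textit{uNode} is in the \textit{UALL} at that time $t$; thus $t \in [t_{\mathrm{ins}}, t_{\mathrm{rem}}]$. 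Since $t$ lies within $aOp$'s execution interval and within $uOp$'s execution interval, $aOp$ overlaps $uOp$. (If $aOp = uOp$ the claim is trivial, and any helping instance $hOp$ is handled by the same reasoning.)

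I expect the main obstacle to be the precise statement and proof of that traversal invariant under concurrent removals and helping --- in particular the case where $aOp$ reaches \textit{uNode} via the \textit{backlink} of a marked UpdateNode $Z$. There one uses that a marked UpdateNode still in the \textit{UALL} is immediately preceded by an UpdateNode with state \textit{DelFlag}, that this predecessor is exactly the node recorded in $Z.\mathit{backlink}$, and that it remains in the \textit{UALL} until $Z$ is unlinked; combining these shows \textit{uNode} is in the \textit{UALL} at the moment $aOp$ moves to it. Once this invariant is available, the remainder of the proof is the same short argument as for \thref{pall_overlaps_pOp}.
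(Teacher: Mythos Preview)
Your outline is sound, but the traversal invariant you rely on is stated too strongly and does not hold for the traversals that matter here. Trie operation instances traverse the \textit{UALL} and \textit{RUALL} via \textit{first()} and \textit{readNextUpdate\-Node()}, which are read-only and do \emph{not} follow backlinks. In such a traversal, when $aOp$ reads $v.\mathit{next}=\mathit{uNode}$, the node $v$ may already be marked and unlinked; in that case $v.\mathit{next}$ is frozen at whatever followed $v$ when $v$ was marked, and that successor may itself have been removed by the time $aOp$ reads it. So the claim ``\textit{uNode} is in the \textit{UALL} at the moment $aOp$ makes it current'' is false in general, and your sketch of the invariant (via backlinks of marked nodes and \textit{DelFlag}ged predecessors) is addressing the wrong traversal mechanism.

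The paper sidesteps this entirely with a weaker observation that does hold: if \textit{uNode} had already been removed from the list before $aOp$ began its traversal, then $aOp$ could not reach \textit{uNode} at all (an easy induction starting from \textit{head}, which is always in the list). Hence $aOp$ starts before \textit{uNode} is removed, which is before $uOp$ ends; and $aOp$'s encounter with \textit{uNode} occurs after \textit{uNode} was inserted, which is after $uOp$ put \textit{uNode} into the LatestList, hence after $uOp$ started. These two inequalities give overlap directly, with no need to locate a single configuration in which both $aOp$ is running and \textit{uNode} is in the list. If you want to keep your structure, just replace your invariant with the weaker ``\textit{uNode} was in the \textit{UALL} at some time no earlier than the start of $aOp$'s traversal,'' which is both true and sufficient.
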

\begin{proof}
    Before \textit{uNode} was inserted into the \textit{UALL} or \textit{RUALL}, $uOp$ inserted \textit{uNode} into a LatestList.
    Therefore $aOp$, which encounters \textit{uNode} in the \textit{UALL} or \textit{RUALL}, does not finish before $uOp$ starts.
    
    If \textit{uNode} had been removed from either the \textit{UALL} or the \textit{RUALL} by $uOp$ before $aOp$ started traversing it, 
    $aOp$ would not have encountered \textit{uNode} during its traversal.
    Since \textit{uNode} was in one of the lists after $aOp$ started traversing, 
    $uOp$ does not finish before $aOp$ starts.
    Thus, $aOp$ overlaps $uOp$.
\end{proof}

We use these lemmas to prove that, after a PredecessorNode, \textit{pNode}, is inserted into a 
limbo bag
by an operation instance, $pOp$, it is only accessed by instances that overlap with $pOp$.
Note that $pOp$ 
was operation instance that 
created \textit{pNode}.


\begin{lemma} \thlabel{distance1lemma}
    Any operation instance that accesses \textit{pNode} after it is inserted into a limbo bag 
    is at distance 1 from the $pOp$ in the contention graph.
\end{lemma}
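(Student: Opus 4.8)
The plan is to show that every access to \textit{pNode} by an operation instance $aOp \ne pOp$ happens because $aOp$ encountered \textit{pNode} while traversing the \textit{PALL}, and then to invoke \thref{pall_overlaps_pOp} to conclude that $aOp$ overlaps $pOp$ — which is exactly the assertion that $aOp$ is adjacent to $pOp$, hence at distance $1$, in the contention graph. The first step is to dispose of $pOp$ itself: by the structure of a predecessor instance, $pOp$ finishes using \textit{pNode} — it removes \textit{pNode} from the \textit{PALL} and completes all of its \textit{notifyList} traversals and its result computation — before it calls \textit{reclaimLater(pNode)}, so $pOp$ performs no step on a field of \textit{pNode} after \textit{pNode} is inserted into a limbo bag. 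Thus only instances other than $pOp$ remain to consider.

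Next I would enumerate the fields of a PredecessorNode — \textit{key}, \textit{ruallPosition}, \textit{notifyList}, and the \textit{successor}/\textit{backlink} fields used by the \textit{PALL} — together with the fact that a pointer to \textit{pNode} may be stored in the \textit{delPredNode} field of a DelNode, and check in each case which instances perform a step on a field of \textit{pNode}. The \textit{successor} and \textit{backlink} fields are touched only by a process traversing the \textit{PALL}, and since only $pOp$ ever removes \textit{pNode} from the \textit{PALL}, any instance reaching \textit{pNode} along these links must have encountered \textit{pNode} during its \textit{PALL} traversal. The \textit{key} and \textit{ruallPosition} fields of \textit{pNode} are read by an update instance only while it traverses the \textit{PALL}, and only for the PredecessorNodes it encounters there (it reads \textit{pNode.key} for the binary search in \textit{iKeys} and \textit{pNode.ruallPosition} to set \textit{notifyThreshold} when building a NotifyNode); the writes and copies into \textit{ruallPosition}, including the helping \textit{CAS} performed inside its \textit{read}, are by \thref{uall_ruall_overlaps_uOp}'s reasoning ultimately triggered only by a process that read \textit{pNode.ruallPosition}, which again required encountering \textit{pNode} in the \textit{PALL}. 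A NotifyNode is inserted into, or read from, \textit{pNode.notifyList} by $pOp$ itself, by an update instance (during its notification step, hence only for PredecessorNodes it met in the \textit{PALL}), or by a later predecessor instance $pOp'$ — and $pOp'$ traverses \textit{pNode}'s \textit{notifyList} only for a PredecessorNode it encountered while traversing the \textit{PALL}.

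The delicate case is the \textit{delPredNode} indirection: a DelNode owned by some remove instance may hold a pointer to \textit{pNode} (when $pOp$ was an embedded predecessor of that remove), and a later predecessor instance $pOp'$ reads that pointer out of a DelNode in $D_{ruall}$. Here I would argue that reading a pointer value equal to \textit{pNode} out of a DelNode's field is a step on the DelNode, not on \textit{pNode}, so it is not an access of \textit{pNode}; by the pseudocode, $pOp'$ performs a step on a field of \textit{pNode} only in the branch where it traverses \textit{pNode}'s \textit{notifyList}, and it enters that branch only for a PredecessorNode it encountered in the \textit{PALL}. Having established in every case that an accessing instance $aOp \ne pOp$ encountered \textit{pNode} while traversing the \textit{PALL}, \thref{pall_overlaps_pOp} gives that $aOp$ overlaps $pOp$, so there is an edge between them in the contention graph and $aOp$ is at distance $1$ from $pOp$. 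I expect the main obstacle to be making the enumeration exhaustive — in particular, being careful that the helping \textit{CAS} inside a \textit{read} of \textit{ruallPosition} and the \textit{CAS} attempts on \textit{pNode.notifyList} by update instances are all accounted for, and that the \textit{delPredNode} path is handled by the observation that holding or reading a pointer is not an access, rather than by an overlap argument.
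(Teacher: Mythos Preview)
Your argument is correct, but it takes a noticeably longer route than the paper's. The paper's proof is essentially two lines: any instance $aOp\neq pOp$ that obtains a pointer to \textit{pNode} either (i) encountered \textit{pNode} while traversing the \textit{PALL}, in which case \thref{pall_overlaps_pOp} gives the overlap, or (ii) read it from \textit{dNode.delPredNode} for some DelNode \textit{dNode} encountered during its \textit{RUALL} traversal; but then \textit{dNode} is owned by $pOp$ itself (the remove instance whose embedded predecessor created \textit{pNode}), so \thref{uall_ruall_overlaps_uOp} applied to \textit{dNode} already yields that $aOp$ overlaps $pOp$. The paper never needs to distinguish ``holding a pointer'' from ``performing a step on a field,'' nor to enumerate the fields of a PredecessorNode.

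Your approach instead argues field-by-field that every genuine step on \textit{pNode} traces back to a \textit{PALL} encounter, and handles the \textit{delPredNode} case by observing that the algorithm only traverses the \textit{notifyList} of PredecessorNodes it actually met in the \textit{PALL}, so the pointer obtained from \textit{delPredNode} is only compared, never dereferenced, unless the node was also seen in the \textit{PALL}. That is valid, and it has the virtue of being self-contained at the level of the PredecessorNode; but it costs you the exhaustive enumeration you rightly flag as the main obstacle. The paper's observation that the relevant DelNode is \emph{owned by $pOp$} short-circuits all of that: it turns the ``delicate case'' into a one-line application of \thref{uall_ruall_overlaps_uOp}. One small framing issue: you describe $pOp$ as ``an embedded predecessor of that remove,'' but in the paper's model $pOp$ \emph{is} the remove instance --- embedded predecessors are not separate nodes in the contention graph --- which is precisely why the DelNode is owned by $pOp$.
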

\begin{proof}
Suppose an operation instance $aOp \neq pOp$ obtains a pointer to \textit{pNode}.
If it encounters \textit{pNode} while traversing the \textit{PALL},
by \thref{pall_overlaps_pOp}, $aOp$ overlaps $pOp$.
Otherwise, while $aOp$ traverses the \textit{RUALL} it encounters a DelNode, $dNode$, such that $dNode$ is owned by $pOp$
and $dNode.delPredNode$ points to \textit{pNode}.
By \thref{uall_ruall_overlaps_uOp}, $aOp$ overlaps $pOp$.
\end{proof}

\medskip

First we consider the circumstances under which an UpdateNode is inserted into a limbo bag, 
and then we will prove that, once in a limbo bag, it is only accessed by instances at most distance 3
from the instance that inserted it into the limbo bag.
The only time an InsertNode is inserted into a limbo bag is immediately following a \textit{SWAP} 
step which removes it from a LatestList.

Every DelNode, \textit{dNode}, has a field, \textit{dCount}, which is a \textit{fetch-and-add} object.
If \textit{dNode} is in the Trie in the initial configuration, its \textit{dCount} is initially the number of TrieNodes that point to it plus one for the pointer to it from $\mathit{latest[dNode.key]}$. 
If \textit{dNode} is inserted into a LatestList by an operation instance, \textit{dOp}, its \textit{dCount} is 
initialized to two, one for the pointer to \textit{dNode} from $\mathit{latest[dNode.key]}$ and 
one since \textit{dOp} may still update TrieNodes to point to \textit{dNode}.
Immediately before trying to update a TrieNode to point to \textit{dNode}, \textit{dOp} will increment \textit{dNode.dCount}.
If \textit{dOp} fails to update this TrieNode to point to \textit{dNode}, \textit{dOp} will decrement \textit{dNode.dCount}, cancelling out this increment.
When \textit{dOp} will no longer update any TrieNodes to point to \textit{dNode}, it decrements \textit{dNode.dCount}.
Immediately after a process removes \textit{dNode} from the LatestList for $dNode.key$ or updates a TrieNode which pointed to \textit{dNode} to point to another DelNode, it decrements \textit{dNode.dCount}.
If a decrement by a process lowers \textit{dNode.dCount} to 0, this process will insert \textit{dNode} into its current 
limbo bag.

If \textit{dNode} is in the Trie in the initial configuration, its \textit{dCount} is non-increasing, so at most one process will lower its \textit{dCount} to 0.
When this occurs, \textit{dNode} has been removed from the LatestList for $dNode.key$ and no TrieNode points to \textit{dNode}.
Suppose, instead, \textit{dNode} was inserted into a LatestList by \textit{dOp}.
Then \textit{dNode.dCount} is at least the number of TrieNodes pointing to \textit{dNode}, plus one 
if \textit{dNode} is in the LatestList for $dNode.key$, plus one if \textit{dOp} has not finished trying to 
update TrieNodes to point to \textit{dNode}.
Once \textit{dOp} will no longer update any other TrieNodes to point to \textit{dNode}, \textit{dNode.dCount} is non-increasing,
so at most one process will lower it to 0.
Hence, at most one process will insert \textit{dNode} into a limbo bag.

\begin{observation} \thlabel{no_latest_trie}
    After an UpdateNode, \textit{uNode}, is inserted into a limbo bag, 
    no LatestList contains \textit{uNode} and no TrieNode points to \textit{uNode}. 
\end{observation}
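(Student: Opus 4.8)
The plan is to split the argument on the \textit{type} of \textit{uNode}, since InsertNodes and DelNodes are placed into limbo bags by entirely different mechanisms. The InsertNode case is the easy one: by the reclamation rules, an InsertNode is put into a limbo bag only in the step immediately following a \textit{SWAP} that removes it from a LatestList. I would first note that an InsertNode can only ever belong to the LatestList for its own \textit{key} (its \textit{key} is fixed once it is inserted), that, having been successfully inserted, it is never reused, and that a removed UpdateNode is never re-inserted into a LatestList; hence after that \textit{SWAP} no LatestList contains \textit{uNode}, and this persists. Since a TrieNode's \textit{dNodePtr} only ever stores a pointer to a DelNode, no TrieNode ever points to an InsertNode, so the claim holds trivially on that side.

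For the DelNode case, write \textit{dNode} for \textit{uNode}. The heart of the proof is the invariant essentially assembled in the discussion preceding this observation: in every configuration,
\[ \mathit{dNode.dCount} \ \ge\ T + L + A, \]
where $T$ is the number of TrieNodes pointing to \textit{dNode}, $L = 1$ if \textit{dNode} is in the LatestList for \textit{dNode.key} and $0$ otherwise, and $A = 1$ if \textit{dNode} was inserted into a LatestList by an instance \textit{dOp} that may still update TrieNodes to point to \textit{dNode} ($A = 0$ for DelNodes present in the initial configuration). I would establish this by induction on the steps of the execution: the base cases are that an initial DelNode has $\mathit{dCount} = T + L$ with $A = 0$, and a freshly inserted DelNode has $\mathit{dCount} = 2 = L + A$ with $T = 0$; the inductive step checks each step that can change $\mathit{dCount}$, $T$, $L$, or $A$ — the speculative increment before a TrieNode \textit{CAS} (raises $\mathit{dCount}$, leaving slack), a successful or failed TrieNode \textit{CAS} together with its matching decrement, the decrement once \textit{dOp} is done updating TrieNodes ($A$ drops, then $\mathit{dCount}$), the decrement after \textit{dNode} leaves its LatestList ($L$ drops, then $\mathit{dCount}$), and the decrement after a TrieNode pointing to \textit{dNode} is redirected ($T$ drops, then $\mathit{dCount}$). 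This uses the facts that only \textit{dOp} ever installs a pointer to \textit{dNode} in a TrieNode, that every such installation is preceded by an increment, and that \textit{dNode} is never re-inserted into a LatestList.

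Given the invariant, the decrement that lowers $\mathit{dNode.dCount}$ to $0$ — which is precisely the decrement after which \textit{dNode} is placed into a limbo bag — forces $T + L + A \le 0$, hence $T = L = A = 0$: no TrieNode points to \textit{dNode}, and \textit{dNode} is absent from the LatestList for \textit{dNode.key}, the only LatestList it could belong to. To extend this to every later configuration I would observe that once $A = 0$ it stays $0$ (for an initial DelNode $\mathit{dCount}$ is non-increasing; otherwise \textit{dOp} has finished), that increments of $\mathit{dCount}$ occur only while $A = 1$, so $\mathit{dCount}$ stays $0$, and that since $\mathit{dCount} \ge T + L \ge 0$ always, $T$ and $L$ remain $0$ thereafter.

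The main obstacle is the inductive maintenance of the $\mathit{dCount}$ invariant: getting the ordering of the \textit{fetch-and-add} increments and decrements relative to the actual TrieNode and LatestList updates right, correctly handling the transient over-count between a speculative increment and the TrieNode \textit{CAS} it guards, and carefully confirming that no agent other than the owning instance \textit{dOp} ever installs a pointer to \textit{dNode} in a TrieNode (so that the invariant's $T$ term is always covered). The remaining pieces — the InsertNode case and the stability argument after $\mathit{dCount}$ hits $0$ — are routine bookkeeping.
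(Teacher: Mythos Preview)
Your proposal is correct and follows essentially the same reasoning the paper lays out: the paper states this as an Observation without a separate proof, having already explained in the preceding paragraphs that an InsertNode enters a limbo bag only immediately after the \textit{SWAP} removing it from its LatestList, and that for a DelNode \textit{dNode.dCount} is always at least the number of TrieNodes pointing to it plus one if it is in its LatestList plus one if \textit{dOp} may still install it --- exactly your $T+L+A$ invariant. Your write-up simply makes explicit the inductive maintenance and the persistence argument that the paper leaves implicit.
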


\begin{lemma} \thlabel{distance3lemma}
Any operation instance that accesses an UpdateNode, \textit{uNode},
after it has been inserted into a limbo bag
is at most distance 3 in the contention graph from the operation instance that inserted
\textit{uNode} into the limbo bag.
\end{lemma}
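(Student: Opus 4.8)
The plan is to fix the operation instance $I$ that inserted \textit{uNode} into a limbo bag, let \textit{uOp} be the instance that owns \textit{uNode}, and then enumerate every way in which a later instance $aOp$ can perform a step on a field of \textit{uNode}, bounding the contention-graph distance from $aOp$ to $I$ in each case. By \thref{no_latest_trie}, once \textit{uNode} is in a limbo bag it is in no LatestList and no TrieNode points to it, so $aOp$ cannot reach it through \textit{latest}, through a \textit{latestNext} field, or through a \textit{dNodePtr} field; and because epoch-based reclamation requires a quiescent process to hold no pointers to records, any local pointer to \textit{uNode} that $aOp$ uses must have been read from a shared field during $aOp$ itself. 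This leaves three sources to handle: the \textit{next}/\textit{rNext}/\textit{backlink} fields of the \textit{UALL} and \textit{RUALL}, the \textit{target} field of an InsertNode, and the \textit{updateNode} field of a NotifyNode sitting in some \textit{notifyList}.

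The first two are the easy cases. If $aOp$ reaches \textit{uNode} through the \textit{UALL} or \textit{RUALL}, then by \thref{uall_ruall_overlaps_uOp} it overlaps \textit{uOp}; and since \textit{uNode} was still in that list after $I$ put it into a limbo bag, \textit{uOp} cannot yet have performed the step that removes \textit{uNode} from the \textit{UALL} and \textit{RUALL}, so \textit{uOp} is still in progress, and one checks from the ordering of $I$'s steps relative to \textit{uOp}'s that $I$ and \textit{uOp} overlap as well, giving distance at most $2$. If $aOp$ reaches \textit{uNode} through the \textit{target} field of an InsertNode \textit{iNode}, I would argue that $aOp$ in fact never performs a step on \textit{uNode}: it reads \textit{iNode.target} and \textit{iNode.targetKey}, and then checks, using \textit{targetKey} rather than any field of \textit{uNode}, whether \textit{uNode} is the first active UpdateNode for that key; since \textit{uNode} is in a limbo bag it lies in no LatestList, so the check fails and $aOp$ stops before touching \textit{uNode}.

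The NotifyNode case is the crux and the main obstacle. The NotifyNode lies in the \textit{notifyList} of a PredecessorNode \textit{pNode}$'$, which is either $aOp$'s own PredecessorNode — so $aOp$ is the instance $pOp'$ that created it — or a PredecessorNode that $aOp$ encountered while traversing the \textit{PALL} in the branch where $D_{ruall}$ is nonempty and $r=\bot$, in which case $aOp$ overlaps $pOp'$ by \thref{pall_overlaps_pOp}. The NotifyNode was inserted into that \textit{notifyList} either by \textit{uOp} itself or — and this can happen only when \textit{uNode} is an InsertNode — by a remove instance \textit{rOp} performing the ``notify on behalf of \textit{uOp}'' steps; in the first sub-case \textit{uOp} encountered \textit{pNode}$'$ while traversing the \textit{PALL}, so overlaps $pOp'$ by \thref{pall_overlaps_pOp}, and in the second \textit{rOp} both encountered \textit{pNode}$'$ in the \textit{PALL} (\thref{pall_overlaps_pOp}) and encountered \textit{uNode} in the \textit{UALL} (\thref{uall_ruall_overlaps_uOp}), so overlaps both $pOp'$ and \textit{uOp}. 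The remaining work is to close the chain to $I$: the key temporal fact is that a NotifyNode pointing to \textit{uNode} is inserted only while \textit{uNode} is still the first active UpdateNode for its key, hence only while \textit{uNode} is still in a LatestList, hence strictly before the step by which $I$ detaches \textit{uNode} from its LatestList; combined with the fact that $aOp$'s access of \textit{uNode} happens after $I$ has finished placing it in a limbo bag, this sandwiches $I$'s detaching step inside the lifetime of either $aOp$ or $pOp'$ (or \textit{rOp}), yielding the extra edge and an overall distance of at most $3$.

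I expect essentially all the real difficulty to be in this timing bookkeeping: making precise, from the step-by-step orderings forced by the \textit{latest}, \textit{state}, \textit{latestNext}, \textit{dCount}, \textit{UALL}/\textit{RUALL}, \textit{PALL} and \textit{notifyList} protocols, the claims ``if \textit{uNode} is still reachable through this field then \textit{uOp} is still in progress'', ``the NotifyNode was inserted before $I$'s detaching step'', and ``therefore these two instances overlap''. The structural part — reading off from the description of the algorithms that these three are the only ways to obtain a usable reference to \textit{uNode} once \thref{no_latest_trie} applies — is routine but needs to be done exhaustively. Once the lemma is proved, it combines with \thref{b_debra_safe} (with $b = 5$) to conclude that \textit{uNode} is not accessed after it is reclaimed, which is the property ultimately needed.
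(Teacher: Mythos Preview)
Your case enumeration has a genuine gap: you miss the \textit{ruallPosition} field. When an update instance $aOp$ traverses the \textit{PALL} to notify predecessor instances, for each PredecessorNode \textit{pNode} it encounters it reads \textit{pNode.ruallPosition} and then reads the \textit{key} of the UpdateNode that pointer names (to set \textit{nNode.notifyThreshold}). That is an access of \textit{uNode} reached neither through the \textit{UALL}/\textit{RUALL} list fields, nor through \textit{target}, nor through a NotifyNode. The paper treats this as its own case: $aOp$ overlaps the creator $pOp$ of \textit{pNode} by \thref{pall_overlaps_pOp}; $pOp$ encountered \textit{uNode} in the \textit{RUALL} and hence overlaps \textit{uOp} by \thref{uall_ruall_overlaps_uOp}; and then a dichotomy on whether \textit{uOp} overlaps $I$ (if so, the chain $aOp$--$pOp$--\textit{uOp}--$I$ gives distance $\le 3$; if not, $pOp$ spans both \textit{uOp} and $aOp$ and therefore overlaps $I$, giving distance $\le 2$) closes the argument.

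There is also a soft spot in your \textit{UALL}/\textit{RUALL} case. The sentence ``since \textit{uNode} was still in that list after $I$ put it into a limbo bag'' is not justified: $aOp$ may well have obtained its pointer to \textit{uNode} through the list \emph{before} $I$ placed \textit{uNode} in a limbo bag (nothing prevents $aOp$ and $I$ from overlapping) and only dereferenced it afterwards. The paper sidesteps this by first disposing of the easy case where $I$ does not end before $aOp$ starts (distance $\le 1$), and then, under the assumption that $I$ ends before $aOp$ starts, arguing purely from the LatestList ordering: \textit{uOp} inserted \textit{uNode} into a LatestList before $I$ ended, and \textit{uOp} overlaps $aOp$, which starts after $I$ ended, so \textit{uOp} overlaps $I$. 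Your NotifyNode case is essentially the paper's argument, but you should apply the same dichotomy there rather than the vaguer ``sandwiching'' language.
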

\begin{proof}
    Suppose that an operation instance, $qOp$, inserts \textit{uNode} into a limbo bag.
    Let $aOp$ be an operation instance that later accesses \textit{uNode}.
    Then $aOp$ does not end before $qOp$ starts.
    A pointer to \textit{uNode} may be obtained by $aOp$ in one of these ways:
    \begin{itemize}
        \item while traversing a LatestList,
        \item while accessing a TrieNode,
        \item while traversing the \textit{UALL} or \textit{RUALL},
        \item reading the \textit{ruallPosition} of a PredecessorNode it encounters while traversing the \textit{PALL}, or
        \item from the \textit{updateNode} field of a NotifyNode, while traversing the \textit{notifyList} of a PredecessorNode 
        that it encounters while traversing the \textit{PALL}.
    \end{itemize}

    If $qOp$ does not end before $aOp$ starts,
    then $aOp$ overlaps $qOp$.
    Thus, they are at most distance 1 from one other in the contention graph.
    By \thref{no_latest_trie}, this is the case if $aOp$ obtains a pointer to \textit{uNode} while traversing
    a LatestList or while accessing a TrieNode.
    
    For the rest of the proof, suppose $qOp$ ends before $aOp$ starts.
    Note that a pointer to an UpdateNode which is initially in the Trie may only be obtained while traversing a LatestList or accessing a TrieNode.
    Hence, in the remaining three cases, we may assume that \textit{uNode} is owned by an operation instance, $uOp$. 
    
    Suppose $aOp$ obtains a pointer to \textit{uNode} while traversing the \textit{UALL} or \textit{RUALL}.
    By \thref{uall_ruall_overlaps_uOp}, $aOp$ overlaps $uOp$. 
    By \thref{no_latest_trie}, $uOp$ inserts \textit{uNode} into the LatestList before $qOp$ inserts it into a limbo bag, thus
    $uOp$ starts before $qOp$ ends. 
    Since $uOp$ overlaps $aOp$, which starts after $qOp$ ends, this implies that
    $uOp$ overlaps $qOp$. 
    Hence, $aOp$ is at most distance 2 from $qOp$ in the contention graph.

    Suppose $aOp$ obtains a pointer to \textit{uNode} while reading the \textit{ruallPosition} of a PredecessorNode, \textit{pNode}, 
    that it accesses while traversing the \textit{PALL}.
    By \thref{pall_overlaps_pOp}, $aOp$ overlaps the operation instance, $pOp$, that 
    created \textit{pNode}.
    Since $aOp$ reads \textit{uNode} from \textit{pNode.ruallPosition}, $pOp$ encountered \textit{uNode} 
    while traversing the \textit{RUALL}.
    So, by \thref{uall_ruall_overlaps_uOp}, $pOp$ overlaps $uOp$.
    Before \textit{uNode} is inserted into the \textit{RUALL}, $uOp$ inserts it into a 
    LatestList.
    By \thref{no_latest_trie}, after $qOp$ inserts \textit{uNode} into a limbo bag,
    no LatestList contains \textit{uNode}, so $uOp$ put \textit{uNode} into this LatestList before $qOp$ put \textit{uNode} into a limbo bag.
    If $qOp$ overlaps $uOp$, then $aOp$ is at most distance 3 from $qOp$, 
    since $aOp$ overlaps $pOp$ and $pOp$ overlaps $uOp$.
    Otherwise, $uOp$ ends before $qOp$ starts.
    Since $qOp$ ends before $aOp$ starts and $pOp$ overlaps both $uOp$ and $aOp$,
    it follows that $pOp$ overlaps $qOp$.
    Hence, $aOp$ is at most distance 2 from $qOp$.

    Suppose $aOp$ obtains a pointer to \textit{uNode} from the \textit{updateNode} field of a NotifyNode, \textit{nNode}, while traversing the \textit{notifyList} of a PredecessorNode, \textit{pNode},
    that it accesses while traversing the \textit{PALL}.
    Let $nOp$ be the operation instance which inserted \textit{nNode} into \textit{pNode.notifyList} (while traversing the \textit{PALL}).
    By \thref{pall_overlaps_pOp}, both $aOp$ and $nOp$ overlap the operation instance, $pOp$, that 
    creates \textit{pNode}.
    Recall that $uOp$ inserted \textit{uNode} into a LatestList.
    If $nOp \neq uOp$, then $nOp$ encounters \textit{uNode} at the head of this LatestList.
    By \thref{no_latest_trie}, after $qOp$ inserts \textit{uNode} into a limbo bag, 
    no LatestList contains \textit{uNode}.
    Thus, $nOp$ starts before $qOp$ put \textit{uNode} into a limbo bag.
    If $nOp$ overlaps $qOp$, then $aOp$ is at most distance 3 
    from $qOp$, since $aOp$ overlaps $pOp$ and $pOp$ overlaps $nOp$.
    Otherwise, $nOp$ ends before $qOp$ starts. 
    Since $qOp$ ends before $aOp$ starts and $pOp$ overlaps both $nOp$ and $aOp$,
    it follows that $pOp$ overlaps $qOp$.
    Hence $aOp$ is at most distance 2 from $qOp$.

    
    Therefore, in all of these cases, $aOp$ is at most distance 3 from $qOp$ in the contention graph.
\end{proof}

Now we can show that our memory reclamation scheme works properly.
\begin{theorem} \thlabel{UpdatePredecessorReclamationSafe}
    PredecessorNodes and UpdateNodes are not accessed after they are reclaimed.
\end{theorem}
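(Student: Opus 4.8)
The plan is to reduce the statement directly to \thref{b_debra_safe}, the general safety guarantee for $b$-bag DEBRA, instantiated with $b = 5$ since our implementation uses $5$ limbo bags, so that $b - 2 = 3$. Recall that \thref{b_debra_safe} requires two things of a record $r$ inserted into a limbo bag during an operation instance $I$: that $r$ is inserted into at most one limbo bag and inserted at most once, and that every operation instance accessing $r$ after $I$ inserts it is at most distance $b - 2$ from $I$ in the contention graph. I would establish both hypotheses for PredecessorNodes and for UpdateNodes, then invoke the theorem.

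First I would verify the ``at most one limbo bag, at most once'' condition. For an InsertNode, the only insertion into a limbo bag occurs immediately after the \textit{SWAP} step that removes it from its LatestList; since a \textit{SWAP} is atomic, exactly one process performs this step, so the InsertNode enters a limbo bag at most once. For a DelNode \textit{dNode}, the argument in the paragraphs preceding \thref{no_latest_trie} shows that \textit{dNode.dCount} is non-increasing once the owning instance \textit{dOp} will no longer update TrieNodes to point to \textit{dNode} (or non-increasing from the initial configuration, if \textit{dNode} begins in the Trie), and \textit{dNode} is placed into a limbo bag only by the \textit{fetch-and-add} that lowers its \textit{dCount} to $0$; hence at most one process does this. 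For a PredecessorNode \textit{pNode}, only the creating instance $pOp$ manipulates it in the \textit{PALL}, and it is $pOp$ that places \textit{pNode} into a limbo bag, and it does so once.

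Next I would apply the distance bounds already proved. For a PredecessorNode \textit{pNode} placed into a limbo bag by $pOp$, \thref{distance1lemma} shows that any operation instance accessing \textit{pNode} afterwards is at distance at most $1 \le 3$ from $pOp$ in the contention graph. For an UpdateNode \textit{uNode} placed into a limbo bag by some instance $qOp$, \thref{distance3lemma} shows that any operation instance accessing \textit{uNode} afterwards is at distance at most $3$ from $qOp$. In both cases the accessing instances lie within distance $b - 2 = 3$ of the instance that inserted the record into a limbo bag, so \thref{b_debra_safe} applies and yields that neither a PredecessorNode nor an UpdateNode is accessed after it is reclaimed.

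The substance of this result lives entirely in the supporting lemmas, so the only real work here is bookkeeping. The point requiring care is matching the reference instance in each distance lemma ($pOp$ in \thref{distance1lemma}, $qOp$ in \thref{distance3lemma}) with the instance that \emph{inserts the record into a limbo bag}, which is exactly what \thref{b_debra_safe} refers to; for PredecessorNodes I would make explicit that the instance that creates \textit{pNode}, inserts and later removes it from the \textit{PALL}, and eventually places it into a limbo bag is one and the same $pOp$, so that the ``distance $1$ from $pOp$'' bound is indeed a bound relative to the inserting instance. No genuine obstacle is expected beyond this.
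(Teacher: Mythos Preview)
Your proposal is correct and follows essentially the same approach as the paper: invoke \thref{distance1lemma} for PredecessorNodes and \thref{distance3lemma} for UpdateNodes to get a distance bound of at most $3$, then apply \thref{b_debra_safe} with $b=5$. The paper's proof is terser (it treats the ``inserted at most once'' condition as already established by the surrounding discussion and does not spell out the $pOp = qOp$ identification for PredecessorNodes), but your added bookkeeping on both points is accurate and harmless.
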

\begin{proof}
    Consider an UpdateNode or a PredecessorNode, $node$, that is inserted into a limbo bag by an operation instance, $qOp$.
    By \thref{distance1lemma} or \thref{distance3lemma}, after $node$ is inserted into a limbo bag, 
    it is only accessed by operation instances that are at most distance 3 from $qOp$ in the contention graph.
    Therefore, by \thref{b_debra_safe} with $b = 5$, no process accesses $node$ after it is reclaimed.
\end{proof}

In our implementation, NotifyNodes are not inserted into limbo bags.
Rather, when reclaiming a PredecessorNode, a process 
will reclaim each of the NotifyNodes in the PredecessorNode's \textit{notifyList}.
By \thref{UpdatePredecessorReclamationSafe}, no NotifyNodes are inserted into 
the PredecessorNode's \textit{notifyList} after it is reclaimed.
\begin{theorem}
    NotifyNodes are not accessed after they are reclaimed.
\end{theorem}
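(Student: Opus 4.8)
The plan is to reduce this to \thref{UpdatePredecessorReclamationSafe} together with the DEBRA distance bound already established in \thref{b_debra_instances_terminated}. Recall that a NotifyNode, \textit{nNode}, is reclaimed only as part of reclaiming some PredecessorNode, \textit{pNode}, whose \textit{notifyList} contains \textit{nNode}: the process that reclaims \textit{pNode} reclaims every NotifyNode in \textit{pNode.notifyList} within the same instance of \textit{rotateAndReclaim}. Since NotifyNodes are never removed from a \textit{notifyList} and, once successfully inserted, belong to exactly one \textit{notifyList}, \textit{nNode} stays in \textit{pNode.notifyList} permanently, and (as already noted) no NotifyNode is inserted into \textit{pNode.notifyList} after \textit{pNode} is reclaimed. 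Hence it suffices to show that every operation instance that ever accesses \textit{nNode} has terminated before \textit{pNode} (and therefore \textit{nNode}) is reclaimed.

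First I would enumerate the ways an operation instance $aOp \neq pOp$ can access a NotifyNode in \textit{pNode.notifyList}, where $pOp$ is the instance that created \textit{pNode}. Either $aOp$ is an update (or helping remove) instance that reads the current head of \textit{pNode.notifyList} while attempting to insert its own NotifyNode there, or $aOp$ is a predecessor (or embedded predecessor) instance traversing \textit{pNode.notifyList}. In the first case $aOp$ must have encountered \textit{pNode} while traversing the \textit{PALL}; in the second case $aOp$ traverses \textit{pNode.notifyList} only because it encountered \textit{pNode} in the \textit{PALL} (either as its own PredecessorNode, or as the \textit{delPredNode} of a DelNode in $D_{ruall}$ that it also encountered in the \textit{PALL}). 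In all of these cases \thref{pall_overlaps_pOp} gives that $aOp$ overlaps $pOp$, so $aOp$ is at distance at most $1$ from $pOp$ in the contention graph. Together with the trivial fact that $pOp$ itself is at distance $0$ from $pOp$, every operation instance that accesses \textit{nNode} is at distance at most $1$ from $pOp$.

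Then I would apply \thref{b_debra_instances_terminated} to the record \textit{pNode}, the instance $pOp$ that inserts \textit{pNode} into a limbo bag, and $b = 5$: every instance at distance at most $3$ from $pOp$ in the contention graph ends before \textit{pNode} is reclaimed. In particular, every instance at distance at most $1$ from $pOp$ — hence every instance that accesses \textit{nNode} — ends before \textit{pNode} is reclaimed. Since \textit{nNode} is reclaimed in the same \textit{rotateAndReclaim} instance as \textit{pNode}, every instance that accesses \textit{nNode} has ended before \textit{nNode} is reclaimed, so \textit{nNode} is not accessed after it is reclaimed. (One could equivalently route the second paragraph through \thref{distance1lemma}, which already isolates the accesses to \textit{pNode} after it enters a limbo bag.)

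The main obstacle I anticipate is making the case analysis of \emph{how} an instance can obtain a pointer to a NotifyNode in \textit{pNode.notifyList} genuinely exhaustive: checking that there is no route to such a NotifyNode that bypasses an access to \textit{pNode} (in particular, that a predecessor instance only ever traverses the \textit{notifyList} of a \emph{foreign} PredecessorNode that it itself encountered in the \textit{PALL}), and that once a process has read a NotifyNode during a traversal it does not later revisit it through a stale local pointer stored elsewhere (the relevant local lists $L_1$, $L_2$, $L_{rem}$ store the \textit{UpdateNodes} pointed to by NotifyNodes, not the NotifyNodes themselves). Everything past that point is a short appeal to the already-proved lemmas.
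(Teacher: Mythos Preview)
Your proposal is correct and follows essentially the same route as the paper: both argue that any access to \textit{nNode} requires the accessing instance to encounter \textit{pNode} in the \textit{PALL}, hence (via \thref{pall_overlaps_pOp}) lies at distance at most $1$ from $pOp$, and then invoke \thref{b_debra_instances_terminated} with $b=5$ to conclude that all such instances finish before \textit{pNode} (and thus \textit{nNode}) is reclaimed. Your version is slightly more explicit in enumerating the access patterns, but the logical skeleton is identical.
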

\begin{proof}
    Let \textit{nNode} be a NotifyNode that is reclaimed during the execution.
    Let $pOp$ be the operation instance that creates the PredecessorNode, \textit{pNode}, 
    into whose $notifyList$ \textit{nNode} is inserted.
    Suppose an operation instance, $aOp$, accesses \textit{nNode} after it was inserted into \textit{pNode.notifyList}.
    In order to access \textit{nNode}, $aOp$ must encounter \textit{pNode} while traversing the \textit{PALL}.
    Thus, by \thref{pall_overlaps_pOp}, $aOp$ is at most distance 1 from $pOp$.
    Recall that $pOp$ is the operation instance that inserts \textit{pNode} into a limbo bag.
    By \thref{b_debra_instances_terminated} with $b = 5$, every instance that is at most distance 3 from $pOp$ 
    ends before \textit{pNode} is reclaimed, so $aOp$ ends before \textit{pNode} is reclaimed.
    Since \textit{nNode} is reclaimed while reclaiming \textit{pNode}, $aOp$ ends before \textit{nNode} is reclaimed as well.
\end{proof}

\chapter{An Experimental Comparison of Dynamic Ordered Set Implementations}
\label{experimentChapter}
This chapter concerns the practical experiments we performed to evaluate the performance 
of implementations of lock-free dynamic ordered sets.
First, we provide the specification of the machine we ran our experiments on 
and give a simplified description of its computational model.
Next, we discuss the implementations we compared against our implementation of Ko's Trie.
Then we explain our experimental procedure in detail and the experiments that we performed.
Finally, we present the results of our experiments and what we learned from them.

\section{Machine Specification}
We performed our experiments on the \href{https://mc.uwaterloo.ca/facilities.html}{Nasus} machine of the 
University of Waterloo's Multicore Lab.
The machine has two sockets, and each socket contains an AMD EPYC 7662 CPU.
Each CPU contains 64 cores and every core has two hardware threads.
The main memory on the system consists of 256GB of DDR4 RAM.
The operating system of the machine is Ubuntu 18.04 LTS with Linux kernel version 5.4.

On this system, a process is an instance of a program that has started but not yet terminated and 
the number of processes running on the machine can change over time.
Every hardware thread can have a single process assigned to execute upon it.
The operating system of the machine has a program called a scheduler,
which assigns processes to execute on hardware threads.
Processes that are assigned to hardware threads perform their instructions in parallel, asynchronously.
If the number of processes executing on the system exceeds the number of hardware threads on the system,
processes not assigned to a hardware thread will have to wait for the scheduler to 
give them a turn to execute.
The scheduler may stop a process assigned to a hardware thread
from running and assign another process to the hardware thread
in its place.
When this happens, the state of the running process is stored into memory 
whereas the state of the process which will now run is loaded into the local memory of the hardware thread.
This storing and loading of process states is called a \textit{context switch}.
The scheduler may, unless otherwise instructed, also reassign a process from executing on one hardware thread to another, which also requires a context switch.
We can command the scheduler to assign certain processes to certain hardware threads, 
which reduces context switches.
This is called \textit{thread-pinning}; we say that the processes have been \textit{pinned} to these hardware threads.
We will discuss other benefits of thread-pinning later.
A process that is running may create other processes and also ask the system 
whether other processes have finished.
When a new process is created, the process that created it specifies the program 
the new process should run and parameters to the program.

A hardware thread has a fixed number of 8 byte local registers which 
it may read and update nearly instantaneously.
A process assigned to a hardware thread may read and modify words in shared 
memory.
While accesses to memory are slow, words in memory vastly outnumber 
the number of local registers a hardware thread has.
Accessing memory also allows a process to communicate with other processes.
The native word size on the system is 8 bytes, but words of 1, 2 or 4 bytes may also be accessed.
Processes assigned to hardware threads have access to the following atomic primitives, where $x1$ and $x2$ are local registers and 
$p$ is the address of a word in memory:
\begin{itemize}
    \item \textit{Read($p$, $x1$)}: Read the word located at $p$ into $x1$,
    \item \textit{Write($p$, $x1$)}: Write $x1$ into the word located at $p$,
    \item \textit{Fetch-And-ADD($p$, $x1$)}: 
    Atomically compute the sum of $x1$ and the word located at $p$,
    storing it in this word. Returns the value of the word
    immediately before the primitive was performed.
    \item \textit{Fetch-And-AND($p$, $x1$)}: Atomically compute the 
    bitwise AND of $x1$ and the word located at $p$,
    storing it in this word. Returns the value of the word 
    immediately before the primitive was performed.
    \item \textit{SWAP($p$, $x1$)}: Atomically write $x1$ into the word 
    at $p$. Returns the value of the word immediately before the primitive was performed.
    \item \textit{CAS($p$, $x1$, $x2$)}: Atomically check if the word at 
    $p$ is equal to $x1$ and if it is, write $x2$ to the word.
    Returns the value of the word immediately before the primitive was performed.
\end{itemize}
A process may request the system to dynamically allocate a contiguous segment of words in memory, receiving a pointer to the first byte in the allocated segment, 
or a value indicating that the system was out of memory to allocate.
When a dynamically allocated segment will no longer be accessed, a 
process may indicate to the system that it may reclaim the segment.
Every process may read the current time 
into one of its registers from a globally accessible clock.

The cache model used by the machine is similar to the cache coherency with write back model:
Main memory is evenly divided into \textit{cache lines} of 64 bytes.
Accessing main memory can be slow, so every hardware thread has access to a series of caches,
which each store a limited number of copies of cache lines from main memory.
These caches are progressively larger but also more remote from the hardware thread
and thus slower for it to access.
Every CPU core has an L1 cache of 32KB and an L2 cache of 512KB. 
The L1 cache of every CPU core is divided evenly into private local caches for the two hardware threads on the core, 
such that every hardware thread has a local cache of 16KB.
The L2 cache of this core is shared between the two hardware threads on the core.
The hardware threads of every core also share an L3 cache of 16MB with the hardware threads of three other 
nearby cores on the same CPU.
The cache lines stored by a hardware thread's local cache may either be in 
\textit{shared} or \textit{exclusive} mode.
To read a word in memory, 
a process must have a shared or exclusive copy of the cache line 
that contains the word in its local cache.
To modify this word, a process must have an 
exclusive copy of this cache line in its local cache.
A \textit{cache miss} happens whenever a process fails to locate a cache line it needs in one of its caches,
forcing the process to access a more remote cache or main memory in order 
to load a copy of this cache line into this cache.
If a cache is full when a new cache line is being loaded into it, 
the least recently accessed cache line in the cache will be overwritten with the new one.
When a process obtains an exclusive copy of a cache line, 
it invalidates copies of the cache line in the caches of other processes,
effectively erasing them.
When a process is assigned to a hardware thread, its local cache is empty
and the process will have to access more remote caches or main memory 
in order to access any words in memory.
Pinning this process to this hardware thread will ensure that the process 
does not have to start from an empty cache again.

Suppose a \textit{cache miss} occurs in the local cache of a process, $p$, as it tries to read a word.
First, if another process, $q$, has an exclusive copy of the cache line containing the word, 
this copy is downgraded to a shared copy and, if modified, is written back to $q$'s caches and memory.
Then the cache line containing the word is loaded from a more remote cache 
or main memory into $p$'s caches that are missing the cache line, 
including $p$'s local cache in shared mode.

Suppose a \textit{cache miss} occurs in the local cache of $p$ as it tries to modify a word.
First, if another process, $q$, has an exclusive copy of this cache line, this copy, if modified, 
is written back to $q$'s caches and main memory. 
Next, all copies of the cache line that do not match the cache line in main memory are invalidated, 
as well as all copies that are in local caches or remote caches that $p$ does not have access to.
Finally, this cache line is loaded from a more remote cache 
or main memory into $p$'s caches that are missing the cache line, 
including $p$'s local cache in exclusive mode.

This machine uses a Non-Uniform Memory Access (NUMA) architecture, 
in which the performance of a memory access by a process depends on 
the memory location relative to the hardware thread the process is assigned to.
Accesses by the process to memory that is more local to this hardware thread 
will have significantly better performance compared to memory that is remote to it. 
Suppose two processes, $p$ and $q$, repeatedly modify words that are in the 
same cache line.
If $p$ and $q$ are assigned on hardware threads on the same core, 
they will share an L2 cache, thus
neither experiences L2 cache misses following their first accesses of 
the cache line.
If $p$ and $q$ are assigned to hardware threads of distinct cores which 
share an L3 cache, neither will experience L3 cache misses 
following their first accesses of the cache line.
However, if they are assigned to hardware threads on 
distinct CPUs or cores that do not share an L3 cache on the same CPU, 
$p$ and $q$ will continue to experience L3 cache misses.
Pinning these processes to nearby hardware threads can prevent 
them from experiencing expensive cache misses.


Many distinct words in memory may be stored in a single cache line.
Suppose two processes, $p_0$ and $p_1$, repeatedly access two distinct words, $A$ and $B$, in the same cache line.
If $p_0$ writes to $A$, then $p_1$ writes to $B$, and $p_0$ again writes to $A$, $p_0$ will have 
a cache miss the second time it writes to $A$, 
even though no other process has modified $A$ since $p_0$ last accessed it.
Despite accessing different words, $p_0$ and $p_1$ were competing over the same cache line,
an effect which is called \textit{false sharing}.
False sharing can have a pronounced detrimental effect on the performance of shared data structures\cite{falsesharing}.
To avoid false sharing, implementations of shared memory algorithms can put extra space between 
distinct variables stored in shared memory, thus ensuring the variables are in different cache lines 
and circumventing this effect.
However, since the cache sizes of processes is limited, placing variables in many different cache lines can 
also increase the number of cache misses that occur, so this strategy must be applied carefully.

\section{Other Data Structures Considered}
In our experiments, we compared our implementation of Ko's Trie against these existing dynamic ordered set implementations from the literature:
\begin{itemize}
    \item Fomitchev and Ruppert's lock-free Skip List
    \item Fatourou and Ruppert's wait-free Augmented Trie
\end{itemize}
We chose these implementations because, like Ko's Trie, they are implemented from \textit{CAS},
they are not based on a universal construction and they offer similar operations.

A sequential skip list is a probabilistic data structure that implements a dynamic ordered 
set over an unbounded universe.
It has expected amortized time complexity $O(\log m)$, where $m$ is the number of elements
in the dynamic set represented by the data structure.
The data structure consists of a number of linked lists that are sorted in ascending order and are stacked upon each other in levels.
Typically the number of levels in the skip list is a fixed number.
The linked list at the bottom level (level 0) contains all elements in the data structure, 
where every element is contained in a node.
The linked list at level $j$, where $j > 0$, contains a subset of the elements stored in the linked list 
at level $j-1$.
Moreover, every node in the linked list in level $j > 0$ stores a pointer to the node containing the same element 
in the linked list in level $j - 1$.

When inserting an element into a data structure, it is first inserted into the linked list at the bottom-most level.
When a node is inserted at a particular level which is not the top level, a node containing the same element 
is inserted into the level above it with probability $\frac{1}{2}$.
This ensures that, in expectation, the linked list at level $j$ in the skip list contains $\frac{m}{2^j}$ nodes.
Searches for an element $e$ through the skip list start from the top level linked list.
If a node containing element $e$ is found on a level $j >0$, the search is successful,
otherwise the search is resumed in the linked list for level $j - 1$, starting from the downward pointer 
of the node containing the latest element is smaller than $e$.
If the search continues to the bottom level linked list and no node containing element $e$ is found,
the search fails.
Since there are fewer nodes at higher levels of the skip list, searches can advance very quickly through the skip list before moving down to the lower levels.
To remove an element from the data structure, first the node containing the element is removed from the linked list at the bottom-most level.
This node is called the \textit{root node}. Then, every node containing the same element located above the \textit{root node} is removed from top to bottom.

Fomitchev and Ruppert\cite{10.1145/1011767.1011776,fomitchevThesis} gave an implementation of a lock-free SkipList from \textit{CAS}.
In their implementation, the linked list on each level of the SkipList uses their lock-free linked list implementation.
Their implementation mirrors the sequential implementation of a skip list, except that an element is in the SkipList 
if and only if an unmarked node containing the element is in the bottom-most linked list. 
In addition, nodes at higher levels of the SkipList 
store a pointer to the \textit{root node} containing the same element in the bottom-most level of the SkipList.
If, while traversing the SkipList, a process encounters a node, $mNode$, whose \textit{root node} is marked, 
the process will remove $mNode$ from the SkipList.
The amortized step complexity of any operation, $op$, on the SkipList 
is $O(L(op) + \dot{c}(op) + H)$
where $L(op)$ is the number of operation instances that precede $op$, $\dot{c}(op)$ is the point contention of $op$ and $H$ is the number of levels in the SkipList.
However, Fomitchev and Ruppert believe that situations in which $op$ takes 
$\omega(\log n(op) + \dot{c}(op))$ steps should be rare, where 
$n(op)$ is the number of elements in the SkipList when $op$ begins.

Fatourou and Ruppert\cite{fatourou_ruppert_augtrie,fatourou2024lockfreeaugmentedtrees} published a wait-free implementation of 
a data structure they call an Augmented Trie.
It implements a dynamic ordered set over the universe $U = \{0, \dots, 2^k-1\}$,
for some $k \ge 0$.
The data structure consists of a binary trie over the same universe, 
which we call the \textit{main trie},
except that the bit of every node in the main trie is replaced with a pointer 
to the root of an immutable version of the subtrie rooted at this node.
We will refer to an immutable version of a subtrie as a \textit{version subtrie}
and we will call nodes in a version subtrie \textit{version nodes}.
Every internal version node stores pointers to 
its left and right children which are also version 
nodes.
Instead of storing the \textit{OR} of the bits 
of the leaves in its subtrie, every version node stores a value 
which is equal to the sum of these bits.
The root of the main trie points to the root of the \textit{version trie} 
over $U$ for the set that is represented by the data structure.
Since the root of the version trie stores the sum of the bits of
the leaves in the version trie, processes can obtain the size of the 
set represented by the Augmented Trie in $O(1)$ steps.

Query operation instances on the Augmented Trie start by reading the root of the main trie to 
obtain a pointer to the root of the 
version trie, before performing a top-down version of their sequential algorithms.
To search for a key, $i$, a process will traverse downwards in the version trie to the leaf for key $i$.
The process will return \textit{True} if the leaf for key $i$ has bit 1, 
otherwise the process will return \textit{False}.
In our implementation, if the process encounters an ancestor of the leaf whose value is 0, 
the bit of the leaf must be 0, so a process may return \textit{False} 
immediately, instead of continuing to traverse down to the leaf.

To perform a predecessor instance for a key, $i$, a process will traverse 
downwards in the version trie towards the leaf for key $i$.
If a version node on this path has a left sibling whose value is at least 1, the process stores a pointer to this left sibling in a local variable, \textit{lNode}.
When the process arrives at a version node with value 0 or the leaf for key $i$,
it returns the key of the rightmost leaf with bit 1 in the subtrie rooted at the version node pointed to by \textit{lNode}, or $-1$ if no pointer 
has been stored in \textit{lNode}.

To insert a key, $i$, into the Augmented Trie, a process, $p$, will first check the bit of the 
version node pointed to by the leaf in the main trie associated with key $i$.
If this version node has bit 0, $p$ will use a \textit{CAS} to attempt to update this leaf in the main trie to 
an allocated version leaf with bit 1.
Next, $p$ will visit the ancestors of this leaf from lowest to highest.
For each ancestor, \textit{aNode}, $p$ will first read the version nodes pointed to by the left and right child of \textit{aNode},
and it will allocate a new internal version node that points to these version nodes 
and stores their sum.
Then $p$ will use a \textit{CAS} to update \textit{aNode} to point to this 
internal version node.
If this \textit{CAS} is unsuccessful, $p$ will try one more time.
It reads the version nodes pointed to by \textit{aNode}'s children again, 
re-initializes the immutable internal node to point to these nodes and store their 
sum, and then $p$ performs another \textit{CAS} on \textit{aNode} to try to update it to point to the internal node.
Once $p$ has visited all of the ancestors of the leaf for key $i$, it has finished.

The steps to remove a key, $i$, are the same as those to insert it except that, in the first step, 
if the leaf for key $i$ in the main trie points to a version node whose bit is 1, 
the process will try to update it to an allocated version node whose bit is 0.

A query operation instance is linearized when it reads the root of the version trie from the root of the 
\textit{main trie}.
Describing when update operation instances are linearized is more complicated. 
Fatourou and Ruppert describe the \textit{arrival point} of update operation instances at nodes in the main trie.
In their algorithm, every update operation instance for key $x$ has an \textit{arrival point} at every node on the path from the leaf for key $x$ to the root in the main trie.
An update operation instance is linearized when its \textit{arrival point} at the root of the main trie occurs.
The \textit{arrival point} of an \textit{insert(x)} operation instance at the leaf for key $x$ is either:
\begin{itemize}
    \item the moment at which \textit{op} performs a successful \textit{CAS} to update the leaf for key $x$ to point to a new version node with bit 1
    \item the moment at which an insert instance concurrent with \textit{op} performs a successful \textit{CAS} on 
    the leaf for key $x$, causing \textit{op}'s \textit{CAS} to fail, or,
    \item the moment at which \textit{op} sees that leaf $x$ is pointing to a version node with bit 1, 
    so \textit{op} does not perform a \textit{CAS} on leaf $x$.
\end{itemize}
The \textit{arrival point} of a \textit{remove(x)} operation instance at the leaf for key $x$ is defined symmetrically.
The \textit{arrival point} of \textit{op} at an internal node is the earliest point following \textit{op}'s
\textit{arrival point} at a child of the internal node in which some process first reads this child 
and then performs a successful \textit{CAS}, updating the internal node to point to a new version node.

We make two improvements in our implementation.
First, in the original implementation, an update operation instance may update an internal node, 
\textit{mNode}, in the main 
trie to point to a new version node whose contents are identical to the version node 
that \textit{mNode} previously pointed to.
In our implementation, if the instance reads \textit{mNode} and sees that it points to a version node whose 
children are the same as those pointed to by the children of \textit{mNode}, 
this read of \textit{mNode} can be considered the instance's \textit{arrival point} at \textit{mNode}.
Second, in our implementation, every version leaf is given a \textit{completed} field
which is initially \textit{False}.
Consider a process that updates a leaf in the main trie to point to a version leaf.
Once this process has had its \textit{arrival point} at the root of the main trie, 
it will set the \textit{completed} field of this version leaf to \textit{True}.
If an \textit{insert(x)} instance sees that the leaf associated with key $x$ in the main trie points to a version leaf with bit 1 whose \textit{completed} field is \textit{True}, 
the instance can return instead of helping to update ancestors of this leaf to point to new version nodes.
Similarly, if a remove instance sees that the leaf associated with key $x$ in the main trie points to a version leaf 
with bit 0 whose \textit{completed} field is \textit{True}, 
the instance returns instead of performing helping.
This helped improve performance by over 40\% in some of our experiments.
Search operation instances are also made more efficient in this way.
A search for key $x$ in the AugmentedTrie will check if the leaf for key $x$ in the main trie 
points to a version leaf whose \textit{completed field} is \textit{True}.
If so, the operation instance which created the version leaf has had its \textit{arrival point} 
at the root.
Therefore, the search can simply return whether the bit of the version leaf is 1 
instead of traversing traversing the entire version trie.
If not, the search instance reads a pointer to the root of the version trie and performs the top-down algorithm as previously described.

The step complexity of insert, remove, search and predecessor is at most $O(\log u)$, making this 
algorithm wait-free.
One aspect of this algorithm that makes it more expensive, however is that an update operation instance allocates up to $\log u + 1$ immutable node in the worst case.
However, because query operations are able to get an immutable version of the 
Augmented Trie by reading the pointer of the root node of the main trie,
the Augmented Trie is able to support a wide variety of order-statistic 
queries which are not supported by Ko's Trie or the SkipList.
Thus, while the implementation is more expensive, it 
is more versatile and it is wait-free.

As in our implementation of Ko's lock-free Trie, we reclaim memory from these
other dynamic ordered set implementations using 5-bag DEBRA.
This table gives a summary of the step complexity of the data structure implementations 
we consider for our experiments.
\begin{figure}[H]
\centering
\begin{tabular}{|l|l|}
\hline 
Name & Step Complexity \\
\hline
Ko Trie & \makecell{Insert: Amortized $O(\log u + c^2)$ \\
                    Delete, Predecessor: Amortized $O(\log u + c^2 +\widetilde{c})$)\\Search: O(1)}  \\
\hline
FR SkipList & All Ops: Amortized $O(L + c + H)$\\
\hline
FR Augmented Trie & All Ops: O($\log u$) \\
\hline
\end{tabular}
\caption{Step complexity of implementations we consider. \textit{u} = number of elements in universe, \textit{c} = point contention, $\widetilde{c}$ = overlapping interval contention, \textit{L} = number of previous operations, \textit{H} = number of levels in SkipList.
 }

\end{figure}

\section{Implementation of Experiments}
In this section, we describe the design of the procedures we used to evaluate these dynamic ordered set implementations.
In our experiments, we would like to assess the performance of these data structures 
under a variety of different workloads and process counts.
Consider a fixed universe, $U$, and let $D$ be an implementation of a dynamic ordered set 
over $U$.
We would like to assess how many operation instances can be performed on $D$ by $N$ processes 
over a duration of $t$ seconds, when every process chooses operation instances according 
to some distribution and selects keys for these instances uniformly from $U$.

We consider operation distributions in which every process performs a ratio of $I > 0$ insertions to $R > 0$ removals to $S \ge 0$ searches to $P \ge 0$ predecessor operation instances, such that $I + R + S + P > 0$.
In particular, whenever a process is about to perform an operation instance, 
it performs an insert instance with probability $\frac{I}{I + R + S + P }$,
a remove instance with probability $\frac{R}{I + R + S + P }$,
a search instance with probability $\frac{S}{I + R + S + P }$ 
and a predecessor instance with probability $\frac{P}{I + R + S + P }$.
Let $C$ be the set represented by $D$.
When an insert operation instance is performed, it adds a key to $C$ with probability $1 - \frac{|C|}{|U|}$
whereas a remove instance on $D$ removes a key from $C$ with probability $\frac{|C|}{|U|}$.
Therefore when an operation instance is performed by a process, it inserts a key into $C$ 
with probability $\frac{I}{I + R + S + P} \cdot (1 - \frac{|C|}{|U|}$)
and removes a key from $C$ with probability $\frac{R}{I + R + S + P} \cdot \frac{|C|}{|U|}$.
When $|C| = \frac{I}{I+R}|U|$, an operation instance will add a key 
or remove a key with equal probability:

When $C$ contains less than $\frac{I}{I+R}|U|$ keys, 
processes will, in expectation, insert more keys into $C$ than they remove from $C$.
When $C$ contains more than $\frac{I}{I+R}|U|$ keys, 
processes will, in expectation, remove more keys from $C$ than they 
insert into $C$.
Intuitively, one would expect that if the experiment runs long enough, 
the number of keys in $C$ would eventually become close to $\frac{I}{I+R}|U|$.
When $C$ deviates from this size, the probability that the size of $C$ 
will become closer to $\frac{I}{I+R}|U|$ is higher than the probability that
it will become further from $\frac{I}{I+R}|U|$ as the result of an operation.
We say that $D$ is in a \textit{steady state} when $C$ contains 
approximately $\frac{I}{I+R}|U|$ keys.
In practice, experiments that run for a sufficient time in which $D$ does not eventually reach a \textit{steady state} or in which $D$ leaves a \textit{steady state} are very rare.

The size of $C$ can have a major impact on the performance of dynamic ordered set implementations.
When $C$ is very close to full, there is a high probability that an insert operation instance tries to insert a key that is already in $C$, since keys are drawn uniformly.
Such an insert operation instance performed on our implementation of Ko's trie or the Augmented Trie could finish in $O(1)$ steps, 
whereas an instance that successfully inserts a key into $C$ 
could require $\Omega(\log |U|)$ steps.
Similarly, when $C$ is very close to empty, there is a high probability that a 
remove operation instance will try to remove a key that is not in $C$.
A remove operation instance that tries to remove a key that is not in $C$
can finish in $O(1)$ steps in our implementation of Ko's trie or 
the Augmented Trie, whereas ones that successfully remove a key 
could require $\Omega(\log |U|)$ steps.
Successful remove operation instances on our implementation of Ko's 
trie could be particularly expensive as they perform up to two 
embedded predecessor instances.

If $D$ is initially not in a \textit{steady state}, 
its size could change significantly during the experiment since 
we expect the size of $C$ to approach $\frac{I}{I+R}|U|$ keys during the experiment.
This could make our results very inconsistent.
Before performing any of our experiments on $D$,
we insert keys that are randomly and uniformly chosen with replacement into $D$
until it contains exactly $\frac{I}{I+R}|U|$ keys.
This ensures that $D$ is in a \textit{steady state} at the very beginning of the 
experiment.

When the program for our experiments is launched, a single process, $p_0$, is created.
First, $p_0$ allocates and initializes memory that will be used in 
the dynamic set data structure, $D$.
Next $p_0$ creates a local set which it fills with uniformly chosen keys until it contains exactly 
$\frac{I}{I+R}|U|$ keys, before inserting all elements of that local set into $D$.
Since $p_0$ allocated and initialized the memory in $D$, 
much of the memory in $D$ could still be in $p_0$'s cache.
This could give $p_0$ an unfair advantage compared to other processes
if it participated in the experiment.
Therefore we do not have $p_0$ participate in the experiment.
Processes assigned to hardware threads that share an $L2$ cache 
or an $L3$ cache with the hardware thread $p_0$ was assigned to could 
still have smaller unfair advantages.
However, only a small number of processes would have these advantages 
and once other processes have the memory in $D$ in their caches, 
the advantages would cease to exist.
Before creating the other processes, $p_0$ sets two words in memory, $readyCount$ and $done$, to 0 and $false$ respectively.
Finally, $p_0$ creates $N$ processes, $p_1, \dots, p_N$, before going to sleep for $t + 0.2$ seconds.
This should ensure that $p_0$ will not be active while the other processes are performing operation instances on $D$.

When $p_i$ becomes active, for $1 \le i \le N$, it performs a \textit{fetch-and-increment} on \textit{readyCount}.
Next $p_i$ busy waits, reading $readyCount$ until $N$ is read and thus until every 
process is ready. 
When this happens, $p_i$ will read the current time $c$ and will write $c + t$ into a 
local variable called $endTime$.
While the current time is less than $endTime$ and $done$ is false, 
$p_i$ will repeat the following 50 times:
\begin{itemize}
    \item It draws a random key $k$ uniformly from $U$.
    \item It performs $D.insert(k)$ with probability $\frac{I}{I + R + S + P}$, 
    $D.remove(k)$ with probability $\frac{R}{I + R + S + P}$, $D.search(k)$ with probability $\frac{S}{I + R + S + P}$ 
    or $D.predecessor(k)$ with probability $\frac{P}{I + R + S + P}$.
\end{itemize}
Processes read $done$ and the current time after every 50 operation instances to 
ensure the overhead of the experiment remains small.
When $endTime$ is reached or $p_i$ reads $true$ from \textit{done},
$p_i$ stops performing operation instances.
If $p_i$ did not read \textit{true} from \textit{done}, 
it writes \textit{true} to done.
Process $p_i$ will write the number of operations on $D$ it performed into memory.
Then $p_i$ will perform a \textit{fetch-and-increment} on \textit{finishedCount}, which 
is initially 0 and stores the number of processes that have finished the experiment.
Next, $p_i$ busy waits until it reads $N$ from \textit{finished count} and thus every process 
has finished performing operation instances on $D$.
Finally, $p_i$ will reclaim every record in its limbo bags before terminating.


When $p_0$ wakes up from sleeping for $t + 0.2$ seconds, it 
waits for the other $N$ processes to terminate.
Once they terminate, $p_0$ calculates the total of the operations performed by these processes.
We call this total the \textit{throughput}, which is used 
as a measure of the performance of $D$.
The parameters of the experiment and the \textit{throughput} are output by $p_i$ 
into a file called \textit{resultData.csv}.
Finally, $p_0$ will reclaim all remaining shared memory belonging to the data structure 
and the records inside it before terminating.


The code that we wrote for the experiments is available at \url{https://github.com/Jakjm/LockFreeDynamicSetExperiments}.
The program for the experiments was compiled using the GNU C++ compiler (G++) Version 13.1.0 with the strongest optimization option ($-O3$), compiling using the C++ 2017 Standard (\textit{std=c++17}), using the POSIX Threads library and the C++ atomic standard library.
We used Microsoft's \textit{mimalloc} memory allocation algorithm which was also compiled with G++ 13.1.0.
Valgrind, Google's AddressSanitizer (ASan) and GDB were used to help detect and correct bugs such as memory leaks and segmentation faults in our implementations.

When implementing Ko's Trie and the other data structures considered, 
some shared variables that encountered a high amount of contention were organized into different cache lines
to reduce the chances of false sharing occurring, while also keeping in mind the limited sizes of these caches.
The fields of an UpdateNode are divided into four different cache lines:
one cache line related to the \textit{LatestLists},
one cache line for the fields related to the \textit{UALL},
one for the fields related to the \textit{RUALL}
and one cache line for any other remaining fields.
Similarly, every PredecessorNode has one cache line for fields related to the \textit{PALL},
one cache line for the head of its \textit{notifyList}
and another cache line for its \textit{ruallPosition} field.
Organizing the fields in this way reduced the proportion of L3 cache misses compared to L3 cache accesses with 4 or more processes.

\section{Experiments and Results}
We compared the performance of Ko's Trie, the Augmented Trie and the SkipList 
using the procedure described
in the previous section.
In each experiment, a fixed number of processes performed operation 
instances on one of these dynamic ordered set data structures for 5 seconds.
We chose a duration of 5 seconds because it was long enough that
a significant number of operations can be performed and because testing
revealed that the performance remained consistent between 5 seconds and tests of longer durations.

In most of our experiments, the data structure being tested
stored a dynamic subset of elements from a universe, $U$, of $2^{20}$ keys.
We chose this universe size because it is 
small enough to ensure our system would not run out of memory during an experiment.
This size also ensures that each data structure would require more memory than the size of the largest cache accessible to a process but less than the amount of memory on the entire system.
For example, when our implementation of Ko's Trie contains exactly half of $U$, it requires 
at least $2^{20}$ TrieNodes, $2^{20}$ LatestLists, $2^{20}$ DelNodes and 
$2^{19}$ InsertNodes to represent this set.
On the system we used for our experiments, a DelNode occupies 256 bytes, an InsertNode occupies 256 bytes, a TrieNode occupies 8 bytes and a LatestList occupies 8 bytes, so at least 419MB is required for our implementation to represent a dynamic set storing half the keys in $U$.
This is large enough to exceed the size of even the L3 cache on our experimental system 
while being small enough to avoid risking exhausting the usable memory.
The number of nodes in the SkipList or version nodes in the Augmented Trie 
needed to represent a set of $\frac{|U|}{2}$ keys ensures that, 
while needing more memory than the amount of L3 cache space,
there is little risk of those implementations of exhausting the amount of usable memory either.

We considered different distributions of operations for our experiments.
Prior to each experiment in which a ratio of $I$ insert instances to $R$ remove instances were 
performed, exactly $\frac{I}{I + R}|U|$ elements were inserted into the data structure 
that the experiment was to be performed upon.
As discussed in the previous section, this would leave the data structure in a \textit{steady state},
from which we would expect its size to not significantly deviate during the experiment.

For each graph displaying the results of a series of experiments, 
we have another graph to the right which focuses on the data points for lower numbers of processes.
This helps highlight the performance of Ko's Trie at these smaller process counts.
We ran experiments five times for each data structure, number of processes, thread-pinning policy, 
universe size and operation distribution considered.
Our graphs plot the average of these five runs.

We first performed experiments in which processes were equally likely to perform 
each kind of operation and the universe contained $2^{20}$ keys.
We considered every number of processes, $N$, in the set 
$\{1,2,3,4,5,6\} \bigcup \{2^3,2^4,2^5,2^6,2^7,2^8\}$
and in these experiments, the $N$ processes were 
pinned to the first $N$ hardware threads of the system.
\begin{figure}[H]    
    \centering
    \begin{subfigure}[b]{1\columnwidth}
    \centering
    \scalebox{0.7}{
    \begin{subfigure}[b]{0.55\columnwidth}
    \centering
        \begin{tikzpicture}
    	\begin{axis}[
    		xlabel={Number of processes},
    		ylabel={Throughput (millions of operations)},
    		xmin=1, xmax=256,
    		xmode=log,
    		log basis x={2},
    		ymin=0, ymax=250,
    		xtick={1,2,4,8,16,32,64,128,256},
    		ytick={25,50,75,100,125,150,175,200,225,250},
    		legend pos=north west,
    		ymajorgrids=true,
    		grid style=dashed,
    	]
    
    	\addplot[
    		color=blue,
    		mark=square,
    		]
    		coordinates {
    		(1,7.971000)(2,13.777460)(3,18.477520)(4,22.288050)(5,15.474200)(6,16.069610)
            (8,18.450310)(16,19.066200)(32,20.203090)(64,19.466700)(128,14.132090)(256,11.104840)
    		};
    	\addplot[
    		color=red,
    		mark=*,
    		]
    		coordinates {
    		(1,2.919680)(2,5.956140)(3,9.353590)(4,12.127190)(5,13.208050)(6,16.387370)(8,22.293970)(16,41.379270)(32,78.919450)(64,142.112180)(128,197.158070)(256,226.240700)
    		};
      	\addplot[
    		color=green,
    		mark=triangle,
    		]
    		coordinates {
    		(1,5.23196)(2,6.95431)
      (3,10.75098)(4,14.09173)(5,10.62854)(6,13.45107)(8,18.48466)(16,30.77288)(32,46.82604)(64,58.79768)(128,34.30858)(256,44.11594)
    		};
        \legend{Ko Trie,FR SkipList,FR AugTrie}
    	\end{axis}
        \end{tikzpicture}
    \end{subfigure}
    }
    \scalebox{0.7}{
    \begin{subfigure}[b]{0.55\columnwidth}
    \centering
        \begin{tikzpicture}
    	\begin{axis}[
    		xlabel={Number of processes},
    		ylabel={Throughput (millions of operations)},
    		xmin=1, xmax=8,
    		log basis x={2},
    		ymin=0, ymax=25,
    		xtick={1,2,3,4,5,6,8},
    		ytick={5,10,15,20,25},
    		legend pos=north west,
    		ymajorgrids=true,
    		grid style=dashed,
    	]
    
    	\addplot[
    		color=blue,
    		mark=square,
    		]
    		coordinates {
    		(1,7.971000)(2,13.777460)(3,18.477520)(4,22.288050)(5,15.474200)(6,16.069610)
            (8,18.450310)(16,19.066200)(32,20.203090)(64,19.466700)(128,14.132090)(256,11.104840)
    		};
    	\addplot[
    		color=red,
    		mark=*,
    		]
    		coordinates {
    		(1,2.919680)(2,5.956140)(3,9.353590)(4,12.127190)(5,13.208050)(6,16.387370)(8,22.293970)(16,41.379270)(32,78.919450)(64,142.112180)(128,197.158070)(256,226.240700)
    		};
      	\addplot[
    		color=green,
    		mark=triangle,
    		]
    		coordinates {
    		(1,5.23196)(2,6.95431)
      (3,10.75098)(4,14.09173)(5,10.62854)(6,13.45107)(8,18.48466)(16,30.77288)(32,46.82604)(64,58.79768)(128,34.30858)(256,44.11594)
    		};
    	\end{axis}
        \end{tikzpicture}
    \end{subfigure}
    }
     \hfill
    \caption*{Experiments for an equal distribution of operations, where the $N$ processes are 
    pinned to the first $N$ hardware threads.}
    \end{subfigure}
\end{figure}
We ran these experiments again for each data structure,
except that now we pinned processes to 
\textit{even} numbered hardware threads,
ensuring that no two processes would be assigned to hardware threads on the same core.
There are only 128 such hardware threads on the system 
so we did not perform experiments with more than $128 = 2^7$ processes in this case.
\begin{figure}[H]
    \begin{subfigure}[b]{1\columnwidth}
    \centering
    \scalebox{0.7}{
    \begin{subfigure}[b]{0.55\columnwidth}
    \centering
        \begin{tikzpicture}
    	\begin{axis}[
    		xlabel={Number of processes},
    		ylabel={Throughput (millions of operations)},
    		xmin=1, xmax=128,
    		xmode=log,
    		log basis x={2},
    		ymin=0, ymax=150,
    		xtick={1,2,4,8,16,32,64,128,256},
    		ytick={25,50,75,100,125,150,175,200,225,250},
    		legend pos=north west,
    		ymajorgrids=true,
    		grid style=dashed,
    	]
    
    	\addplot[
    		color=blue,
    		mark=square,
    		]
    		coordinates {
    		(1,7.642370)(2,13.583410)(3,10.810100)(4,12.336160)(5,11.919790)(6,13.155060)
            (8,13.773000)(16,15.508410)(32,16.251650)(64,11.658040)
            (128,13.2877)
    		};
    	\addplot[
    		color=red,
    		mark=*,
    		]
    		coordinates {
    		(1,3.021780)(2,6.195340)(3,8.064570)(4,11.185750)(5,13.640480)(6,16.422640)(8,21.212270)(16,39.981270)(32,73.516140)(64,104.449360)
            (128,144.39724)
    		};
      	\addplot[
    		color=green,
    		mark=triangle,
    		]
    		coordinates {
    		(1,4.96356)(2,7.04672)(3,6.6786)(4,9.06278)(5,9.74435)(6,12.32015)(8,16.80737)(16,27.90161)(32,36.46022)(64,28.48244)
            (128,37.10866)
    		};
        \legend{Ko Trie,FR SkipList,FR AugTrie}
    	\end{axis}
        \end{tikzpicture}
    \end{subfigure}
    }
    \scalebox{0.7}{
    \begin{subfigure}[b]{0.55\columnwidth}
    \centering
        \begin{tikzpicture}
    	\begin{axis}[
    		xlabel={Number of processes},
    		ylabel={Throughput (millions of operations)},
    		xmin=1, xmax=8,
    		log basis x={2},
    		ymin=0, ymax=25,
    		xtick={1,2,3,4,5,6,8,16,32,64,128,256},
    		ytick={5,10,15,20,25},
    		legend pos=north west,
    		ymajorgrids=true,
    		grid style=dashed,
    	]
    
    	\addplot[
    		color=blue,
    		mark=square,
    		]
    		coordinates {
    		(1,7.642370)(2,13.583410)(3,10.810100)(4,12.336160)(5,11.919790)(6,13.155060)
            (8,13.773000)(16,15.508410)(32,16.251650)(64,11.658040)
            (128,13.2877)
    		};
    	\addplot[
    		color=red,
    		mark=*,
    		]
    		coordinates {
    		(1,3.021780)(2,6.195340)(3,8.064570)(4,11.185750)(5,13.640480)(6,16.422640)(8,21.212270)(16,39.981270)(32,73.516140)(64,104.449360)
            (128,144.39724)
    		};
      	\addplot[
    		color=green,
    		mark=triangle,
    		]
    		coordinates {
    		(1,4.96356)(2,7.04672)(3,6.6786)(4,9.06278)(5,9.74435)(6,12.32015)(8,16.80737)(16,27.90161)(32,36.46022)(64,28.48244)
            (128,37.10866)
    		};
    	\end{axis}
        \end{tikzpicture}
    \end{subfigure}
    }
    \hfill
    \caption*{The same experiments for an equal distribution of operations, 
    except the $N$ processes are pinned to the first $N$ \textit{even} hardware threads.}
    \end{subfigure}
\end{figure}
This policy for pinning threads reduced the performance of all data structures, 
particularly at smaller process counts.
This is likely because the processes share L2 caches and L3 caches with only half of the number of processes 
that they share caches with when using the normal thread-pinning policy.
For the rest of this section, we performed experiments
in which the $N$ processes were pinned to the first $N$ hardware threads on the system.

At small numbers of processes, our implementation of Ko's Trie 
has much better performance than the other two 
implementations considered.
However, at higher numbers of processes, its performance does not continue to scale and 
it is outperformed by the SkipList.
We suspect the performance of Ko's Trie is reduced at higher numbers of processes because 
predecessor and update instances have to allocate more memory and perform more steps 
when many such instances overlap with each other.
Every update operation instance could have to allocate a NotifyNode for every predecessor 
instance it overlaps with, inserting the NotifyNode into the \textit{notifyList} of the PredecessorNode created by this predecessor instance.
Inserting a NotifyNode becomes expensive if other update instances are also inserting NotifyNodes into this \textit{notifyList}.
A remove operation instance can be particularly expensive, since it could perform up to two entire embedded predecessor instances.
If so, it usually has to insert a NotifyNode into each of the \textit{notifyNodes} PredecesorNodes created by 
these instances.
These two PredecessorNodes will remain in the \textit{PALL} throughout the remove instance, 
which could force overlapping update instances to allocate and insert a NotifyNode into the \textit{notifyLists} of both of them.
Even if a remove instance encounters no contention, it may have to allocate a DelNode, two PredecessorNodes and two NotifyNodes.
Predecessor instances on the Trie have to allocate PredecessorNodes,
whereas predecessor instances performed on the other two implementations do not need to allocate memory.
Predecessor instances may need store a significant amount of information about the 
predecessor and update operation instances they overlap with.
For example, they store pointers to UpdateNodes they encounter in the \textit{RUALL} in local sets
to help them determine the correct predecessor for their key.

The Augmented Trie is outperformed by Ko's Trie at smaller numbers of processes 
and by the SkipList at higher numbers of processes.
We suspect one cause for the poor performance of the Augmented Trie
is that update operation instances performed on it
must allocate $\Omega(\log |U|)$ version nodes in the worst case, even at lower numbers of processes,
whereas, at lower numbers of processes, the memory allocated by instances performed on Ko's Trie will be more limited. 

At higher numbers of processes, Fomitchev and Ruppert's SkipList 
vastly outperformed the Augmented Trie and our implementation of Ko's Trie.
This appears to confirm Fomitchev and Ruppert's belief that the upper bound on the amortized step complexity on instances on the SkipList is rather loose.
One likely reason for its superior performance is that the algorithms for the SkipList do 
not allocate as much memory as the two other algorithms.
Search, predecessor and remove operation instances on the SkipList 
do not need to allocate any records.
Insert operation instances that insert an element into the SkipList will, in expectation, allocate only 2 nodes, and it is very improbable that such an instance will allocate more than a constant number of nodes.
One other reason why the SkipList performs well is that it is unlikely for instances on the SkipList to suffer 
performance impacts due to contention.
Since the processes performing operation instances draw their keys uniformly,
when a sufficient number of elements are in the SkipList, 
there is a high probability that operation instances can complete their steps without directly encountering other operation instances. 
In contrast, in our implementation of Ko's Trie, update operation instances may contend with each other while inserting UpdateNodes into the \textit{UALL} or \textit{RUALL}, updating the relaxed binary trie or inserting NotifyNodes into the \textit{notifyLists} of PredecessorNodes created by overlapping Predecessor instances.

To help us better understand the performance of these data structures when updates are more common, 
we performed experiments in which updates were performed more frequently than queries.
First, we ran experiments in which inserts and removes are each performed with probability $\frac{4}{10}$ 
and searches and predecessor operations are each performed with probability $\frac{1}{10}$.
As before, the universe contained $2^{20}$ keys.
\begin{figure}[H]    
    \centering
    \begin{subfigure}[b]{1\columnwidth}
    \centering
    \scalebox{0.7}{
    \begin{subfigure}[b]{0.55\columnwidth}
    \centering
        \begin{tikzpicture}
    	\begin{axis}[
    		xlabel={Number of processes},
    		ylabel={Throughput (millions of operations)},
    		xmin=1, xmax=256,
    		xmode=log,
    		log basis x={2},
    		ymin=0, ymax=200,
    		xtick={1,2,4,8,16,32,64,128,256},
    		ytick={25,50,75,100,125,150,175,200,225,250},
    		legend pos=north west,
    		ymajorgrids=true,
    		grid style=dashed,
    	]
    
        \addplot[
    		color=blue,
    		mark=square,
    		]
    		coordinates {
    		(1,6.635680)(2,11.284220)(3,14.878610)(4,17.773120)(5,13.188290)(6,11.810070)
            (8,13.860490)(16,13.768340)(32,14.020800)(64,13.519810)(128,9.61098)(256,6.427040)
    		};
    	\addplot[
    		color=red,
    		mark=*,
    		]
    		coordinates {
    		(1,2.824170)(2,5.353310)(3,8.878150)(4,11.500990)(5,13.047860)(6,15.806980)(8,20.895500)(16,38.945570)(32,73.180540)(64,132.246250)(128,175.763900)(256,171.957040)
    		};
      	\addplot[
    		color=green,
    		mark=triangle,
    		]
    		coordinates {
    		(1,2.8503)(2,4.19358)(3,6.40828)(4,9.56977)(5,8.55895)(6,9.71285)(8,13.98798)(16,22.68256)(32,33.85894)(64,39.46849)(128,22.83229)(256,26.51203)
    		};
        \legend{Ko Trie,FR SkipList,FR AugTrie}
    	\end{axis}
        \end{tikzpicture}
    \end{subfigure}
    }
    \scalebox{0.7}{
    \begin{subfigure}[b]{0.55\columnwidth}
    \centering
        \begin{tikzpicture}
    	\begin{axis}[
    		xlabel={Number of processes},
    		ylabel={Throughput (millions of operations)},
    		xmin=1, xmax=8,
    		log basis x={2},
    		ymin=0, ymax=25,
    		xtick={1,2,3,4,5,6,8,16,32,64,128,256},
    		ytick={5,10,15,20,25},
    		legend pos=north west,
    		ymajorgrids=true,
    		grid style=dashed,
    	]
    
        \addplot[
    		color=blue,
    		mark=square,
    		]
    		coordinates {
    		(1,6.635680)(2,11.284220)(3,14.878610)(4,17.773120)(5,13.188290)(6,11.810070)
            (8,13.860490)(16,13.768340)(32,14.020800)(64,13.519810)(128,9.61098)(256,6.427040)
    		};
    	\addplot[
    		color=red,
    		mark=*,
    		]
    		coordinates {
    		(1,2.824170)(2,5.353310)(3,8.878150)(4,11.500990)(5,13.047860)(6,15.806980)(8,20.895500)(16,38.945570)(32,73.180540)(64,132.246250)(128,175.763900)(256,171.957040)
    		};
      	\addplot[
    		color=green,
    		mark=triangle,
    		]
    		coordinates {
    		(1,2.8503)(2,4.19358)(3,6.40828)(4,9.56977)(5,8.55895)(6,9.71285)(8,13.98798)(16,22.68256)(32,33.85894)(64,39.46849)(128,22.83229)(256,26.51203)
    		};
    	\end{axis}
        \end{tikzpicture}
    \end{subfigure}
    }
    \hfill
    \caption*{Experiments featuring an update-heavy distribution of operations.}
\end{subfigure}
\end{figure}
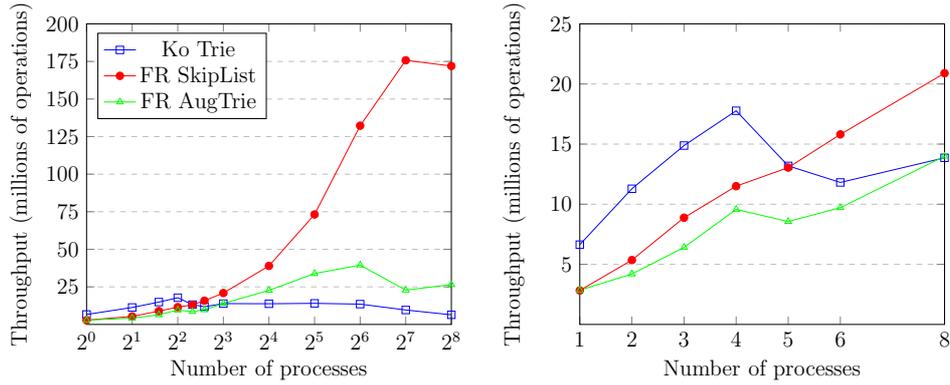
To focus on the performance of these data structures when insertions are more common than removals, 
we performed experiments in which insertions are performed with probability $\frac{4}{9}$,
removals are performed with probability $\frac{1}{9}$, and both searches and 
predecessor instances are performed with probability $\frac{2}{9}$.
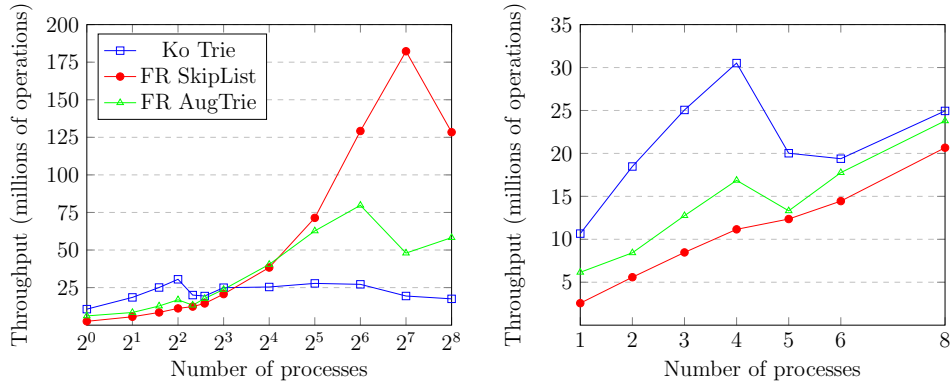
\begin{figure}[H]
    \centering
    \scalebox{0.7}{
    \begin{subfigure}[b]{0.55\columnwidth}
    \centering
        \begin{tikzpicture}
    	\begin{axis}[
    		xlabel={Number of processes},
    		ylabel={Throughput (millions of operations)},
    		xmin=1, xmax=256,
    		xmode=log,
    		log basis x={2},
    		ymin=0, ymax=200,
    		xtick={1,2,4,8,16,32,64,128,256},
    		ytick={25,50,75,100,125,150,175,200,225,250},
    		legend pos=north west,
    		ymajorgrids=true,
    		grid style=dashed,
    	]
    
        \addplot[
    		color=blue,
    		mark=square,
    		]
    		coordinates {
    		(1,10.66161)(2,18.47596)(3,25.062)(4,30.51852)(5,20.01874)(6,19.38482)
            (8,24.94848)(16,25.40199)(32,27.79433)(64,27.16029)(128,19.40175)(256,17.51305)
    		};
    	\addplot[
    		color=red,
    		mark=*,
    		]
    		coordinates {
    		(1,2.56029)(2,5.58354)(3,8.46965)(4,11.15824)(5,12.35982)(6,14.44723)(8,20.66381)(16,38.34709)(32,71.40252)(64,129.2008)(128,182.26422)(256,128.40427)
    		};
      	\addplot[
    		color=green,
    		mark=triangle,
    		]
    		coordinates {
    		(1,6.15761)(2,8.42564)(3,12.75153)(4,16.86428)(5,13.31558)(6,17.76136)(8,23.79129)(16,40.42481)(32,62.72107)(64,79.75481)(128,47.95819)(256,58.26223)
    		};
        \legend{Ko Trie,FR SkipList,FR AugTrie}
    	\end{axis}
        \end{tikzpicture}
    \end{subfigure}
    }
        \scalebox{0.7}{
    \begin{subfigure}[b]{0.55\columnwidth}
    \centering
        \begin{tikzpicture}
    	\begin{axis}[
    		xlabel={Number of processes},
    		ylabel={Throughput (millions of operations)},
    		xmin=1, xmax=8,
    		ymin=0, ymax=35,
    		xtick={1,2,3,4,5,6,8,16,32,64,128,256},
    		ytick={5,10,15,20,25,30,35,40,45,50},
    		ymajorgrids=true,
    		grid style=dashed,
    	]
    
        \addplot[
    		color=blue,
    		mark=square,
    		]
    		coordinates {
    		(1,10.66161)(2,18.47596)(3,25.062)(4,30.51852)(5,20.01874)(6,19.38482)
            (8,24.94848)(16,25.40199)(32,27.79433)(64,27.16029)(128,19.40175)(256,17.51305)
    		};
    	\addplot[
    		color=red,
    		mark=*,
    		]
    		coordinates {
    		(1,2.56029)(2,5.58354)(3,8.46965)(4,11.15824)(5,12.35982)(6,14.44723)(8,20.66381)(16,38.34709)(32,71.40252)(64,129.2008)(128,182.26422)(256,128.40427)
    		};
      	\addplot[
    		color=green,
    		mark=triangle,
    		]
    		coordinates {
    		(1,6.15761)(2,8.42564)(3,12.75153)(4,16.86428)(5,13.31558)(6,17.76136)(8,23.79129)(16,40.42481)(32,62.72107)(64,79.75481)(128,47.95819)(256,58.26223)
    		};
    	\end{axis}
        \end{tikzpicture}
    \end{subfigure}
    }
    \caption*{Experiments featuring an insert-heavy distribution of operations.}
\end{figure}
Now focusing on the performance when removals are performed more frequently than insertions, we 
we performed experiments in which removals are performed with probability $\frac{4}{9}$,
insertions are performed with probability $\frac{1}{9}$, and both searches and 
predecessor instances are performed with probability $\frac{2}{9}$.
\begin{figure}[H]
    \centering
    \scalebox{0.7}{
    \begin{subfigure}[b]{0.55\columnwidth}
    \centering
        \begin{tikzpicture}
    	\begin{axis}[
    		xlabel={Number of processes},
    		ylabel={Throughput (millions of operations)},
    		xmin=1, xmax=256,
    		xmode=log,
    		log basis x={2},
    		ymin=0, ymax=250,
    		xtick={1,2,4,8,16,32,64,128,256},
    		ytick={25,50,75,100,125,150,175,200,225,250},
    		legend pos=north west,
    		ymajorgrids=true,
    		grid style=dashed,
    	]
    
        \addplot[
    		color=blue,
    		mark=square,
    		]
    		coordinates {
    		(1,6.12328)(2,12.85706)(3,19.02618)(4,24.37058)(5,17.44808)(6,18.26204)
            (8,23.1545)(16,26.17512)(32,26.93262)(64,27.14396)(128,19.10533)(256,15.79172)
    		};
    	\addplot[
    		color=red,
    		mark=*,
    		]
    		coordinates {
    		(1,5.31117)(2,10.3041)(3,15.51798)(4,20.75956)(5,22.10382)(6,25.90239)(8,35.81399)(16,63.71257)(32,118.30213)(64,193.80485)(128,246.57704)(256,116.04338)
    		};
      	\addplot[
    		color=green,
    		mark=triangle,
    		]
    		coordinates {
    		(1,5.96637)(2,8.92158)(3,13.30123)(4,17.79034)(5,13.41222)(6,18.06965)(8,23.62774)(16,39.78196)(32,62.12476)(64,80.96441)(128,46.97393)(256,64.2598)
    		};
        \legend{Ko Trie,FR SkipList,FR AugTrie}
    	\end{axis}
        \end{tikzpicture}
    \end{subfigure}
    }
        \scalebox{0.7}{
    \begin{subfigure}[b]{0.55\columnwidth}
    \centering
        \begin{tikzpicture}
    	\begin{axis}[
    		xlabel={Number of processes},
    		ylabel={Throughput (millions of operations)},
    		xmin=1, xmax=8,
    		ymin=0, ymax=40,
    		xtick={1,2,3,4,5,6,8,16,32,64,128,256},
    		ytick={5,10,15,20,25,30,35,40,45,50},
    		ymajorgrids=true,
    		grid style=dashed,
    	]
    
        \addplot[
    		color=blue,
    		mark=square,
    		]
    		coordinates {
    		(1,6.12328)(2,12.85706)(3,19.02618)(4,24.37058)(5,17.44808)(6,18.26204)
            (8,23.1545)(16,26.17512)(32,26.93262)(64,27.14396)(128,19.10533)(256,15.79172)
    		};
    	\addplot[
    		color=red,
    		mark=*,
    		]
    		coordinates {
    		(1,5.31117)(2,10.3041)(3,15.51798)(4,20.75956)(5,22.10382)(6,25.90239)(8,35.81399)(16,63.71257)(32,118.30213)(64,193.80485)(128,246.57704)(256,116.04338)
    		};
      	\addplot[
    		color=green,
    		mark=triangle,
    		]
    		coordinates {
    		(1,5.96637)(2,8.92158)(3,13.30123)(4,17.79034)(5,13.41222)(6,18.06965)(8,23.62774)(16,39.78196)(32,62.12476)(64,80.96441)(128,46.97393)(256,64.2598)
    		};
    	\end{axis}
        \end{tikzpicture}
    \end{subfigure}
    }
    \caption*{Experiments featuring a remove-heavy distribution of operations.}
\end{figure}
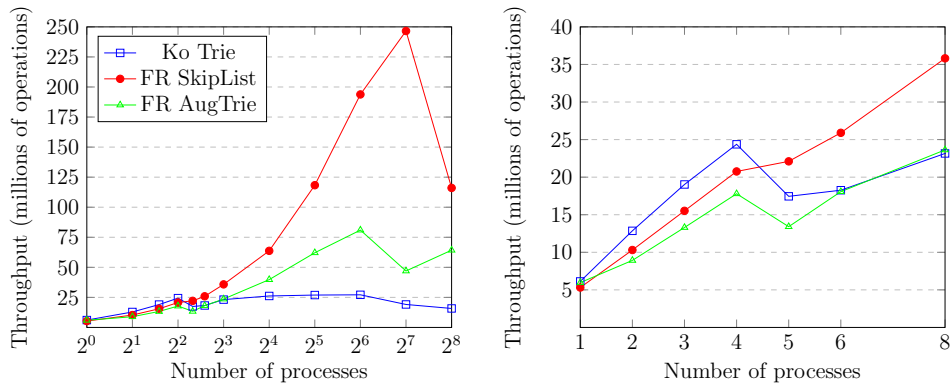

We can see a distinct drop in the performance of Ko's Trie in the update-heavy experiments.
Its performance improves in the experiments in which insertions are performed more frequently than removes compared 
to an equal distribution of all operations.
As expected, however, its performance suffers particularly in the experiments in which removes are performed more often than insertions.
This helps confirm our intuition that remove operation instances are much more expensive than insert operation instances due to their embedded predecessor instances.
The SkipList performed better in the experiments with update-heavy and remove-heavy workloads,
which is probably because processes in those workloads allocate fewer nodes comapared to insert-heavy workloads.
Since the implementation of insertions is similar to that of removals for the AugmentedTrie,
it is not very surprising that its performance is similar across these different workloads.

We suspected that the performance of Ko's Trie and the Augmented Trie would 
improve if query operation instances were more common compared to update operation instances.
We believed this effect would be less pronounced for the SkipList: while operation instances
may perform fewer allocations of nodes and fewer modifications to the SkipList,
most will still have to traverse to the lowest level of the SkipList to complete their operations.
The next few graphs display the results of experiments in which queries were more common compared to 
update operation instances.
First, we performed experiments in which searches and predecessor operations were each performed with probability $\frac{4}{10}$,
whereas inserts and removes were performed with probability $\frac{1}{10}$.
\begin{figure}[H]    
    \centering
    \begin{subfigure}[b]{1\columnwidth}
    \centering
    \scalebox{0.7}{
    \begin{subfigure}[b]{0.55\columnwidth}
    \centering
        \begin{tikzpicture}
    	\begin{axis}[
    		xlabel={Number of processes},
    		ylabel={Throughput (millions of operations)},
    		xmin=1, xmax=256,
    		xmode=log,
    		log basis x={2},
    		ymin=0, ymax=300,
    		xtick={1,2,4,8,16,32,64,128,256},
    		ytick={25,50,75,100,125,150,175,200,225,250,275,300},
    		legend pos=north west,
    		ymajorgrids=true,
    		grid style=dashed,
    	]
    
        \addplot[
    		color=blue,
    		mark=square,
    		]
    		coordinates {
    		(1,10.062150)(2,18.109250)(3,24.859760)(4,30.382910)(5,23.493560)(6,23.096230)
            (8,27.244720)(16,30.024170)(32,33.200240)(64,36.545320)(128,24.945850)(256,25.316780)
    		};
    	\addplot[
    		color=red,
    		mark=*,
    		]
    		coordinates {
    		(1,3.139270)(2,6.756910)(3,10.138440)(4,13.256630)(5,14.704900)(6,18.659640)(8,24.787520)(16,46.405460)(32,86.739740)(64,155.768620)(128,227.439300)(256,275.518620)
    		};
      	\addplot[
    		color=green,
    		mark=triangle,
    		]
    		coordinates {
    		(1,6.15668)(2,9.3268)(3,14.33952)(4,19.24577)(5,16.55103)(6,21.50708)(8,28.79547)(16,49.80019)(32,80.25253)(64,111.40033)(128,76.52083)(256,90.34206)
    		};
        \legend{Ko Trie,FR SkipList,FR AugTrie}
    	\end{axis}
        \end{tikzpicture}
    \end{subfigure}
    }
    \scalebox{0.7}{
    \begin{subfigure}[b]{0.55\columnwidth}
    \centering
        \begin{tikzpicture}
    	\begin{axis}[
    		xlabel={Number of processes},
    		ylabel={Throughput (millions of operations)},
    		xmin=1, xmax=8,
    		log basis x={2},
    		ymin=0, ymax=35,
    		xtick={1,2,3,4,5,6,8,16,32,64,128,256},
    		ytick={5,10,15,20,25,30,35,40,45,50},
    		legend pos=north west,
    		ymajorgrids=true,
    		grid style=dashed,
    	]
    
        \addplot[
    		color=blue,
    		mark=square,
    		]
    		coordinates {
    		(1,10.062150)(2,18.109250)(3,24.859760)(4,30.382910)(5,23.493560)(6,23.096230)
            (8,27.244720)(16,30.024170)(32,33.200240)(64,36.545320)(128,24.945850)(256,25.316780)
    		};
    	\addplot[
    		color=red,
    		mark=*,
    		]
    		coordinates {
    		(1,3.139270)(2,6.756910)(3,10.138440)(4,13.256630)(5,14.704900)(6,18.659640)(8,24.787520)(16,46.405460)(32,86.739740)(64,155.768620)(128,227.439300)(256,275.518620)
    		};
      	\addplot[
    		color=green,
    		mark=triangle,
    		]
    		coordinates {
    		(1,6.15668)(2,9.3268)(3,14.33952)(4,19.24577)(5,16.55103)(6,21.50708)(8,28.79547)(16,49.80019)(32,80.25253)(64,111.40033)(128,76.52083)(256,90.34206)
    		};
    	\end{axis}
        \end{tikzpicture}
    \end{subfigure}
    }
    \hfill
        \caption*{Experiments for a query-heavy distribution of operations.}
    \end{subfigure}
\end{figure}
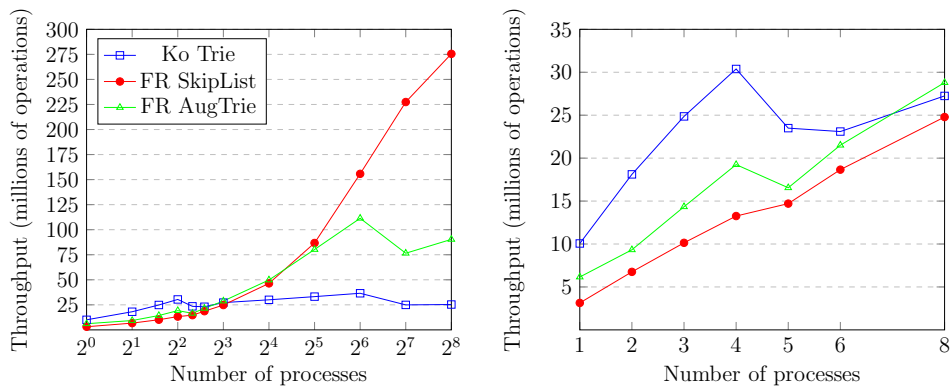
To better understand the performance when searches are more frequent, 
we next tested the data structures under a workload in which searches are the only queries that are performed and updates are performed infrequently.
Processes performed search operations with probability $\frac{8}{10}$ and both insertions and removals are performed with probability $\frac{1}{10}$. 
\begin{figure}[H]
    \centering
    \scalebox{0.7}{
    \begin{subfigure}[b]{0.55\columnwidth}
    \centering
        \begin{tikzpicture}
    	\begin{axis}[
    		xlabel={Number of processes},
    		ylabel={Throughput (millions of operations)},
    		xmin=1, xmax=256,
    		xmode=log,
    		log basis x={2},
    		ymin=0, ymax=225,
    		xtick={1,2,4,8,16,32,64,128,256},
    		ytick={25,50,75,100,125,150,175,200,225,250},
    		legend pos=north west,
    		ymajorgrids=true,
    		grid style=dashed,
    	]
    
        \addplot[
    		color=blue,
    		mark=square,
    		]
    		coordinates {
    		(1,13.01453)(2,25.0047)(3,36.96513)(4,48.1062)(5,36.61737)(6,41.47397)
            (8,51.95273)(16,60.6194)(32,68.17421)(64,62.46083)(128,43.30859)(256,35.09471)
    		};
    	\addplot[
    		color=red,
    		mark=*,
    		]
    		coordinates {
    		(1,3.52384)(2,6.82679)(3,10.08382)(4,13.36501)(5,15.39168)(6,18.69538)(8,24.90587)(16,45.87242)(32,85.83444)(64,153.89077)(128,221.10043)(256,104.71792)
    		};
      	\addplot[
    		color=green,
    		mark=triangle,
    		]
    		coordinates {
    		(1,5.38462)(2,8.7041)(3,15.06431)(4,21.50434)(5,21.52212)(6,28.25102)(8,42.15638)(16,74.37222)(32,120.11073)(64,153.17399)(128,90.53373)(256,118.14371)
    		};
        \legend{Ko Trie,FR SkipList,FR AugTrie}
    	\end{axis}
        \end{tikzpicture}
    \end{subfigure}
    }
        \scalebox{0.7}{
    \begin{subfigure}[b]{0.55\columnwidth}
    \centering
        \begin{tikzpicture}
    	\begin{axis}[
    		xlabel={Number of processes},
    		ylabel={Throughput (millions of operations)},
    		xmin=1, xmax=8,
    		ymin=0, ymax=55,
    		xtick={1,2,3,4,5,6,8,16,32,64,128,256},
    		ytick={5,10,15,20,25,30,35,40,45,50},
    		ymajorgrids=true,
    		grid style=dashed,
    	]
    
        \addplot[
    		color=blue,
    		mark=square,
    		]
    		coordinates {
    		(1,13.01453)(2,25.0047)(3,36.96513)(4,48.1062)(5,36.61737)(6,41.47397)
            (8,51.95273)(16,60.6194)(32,68.17421)(64,62.46083)(128,43.30859)(256,35.09471)
    		};
    	\addplot[
    		color=red,
    		mark=*,
    		]
    		coordinates {
    		(1,3.52384)(2,6.82679)(3,10.08382)(4,13.36501)(5,15.39168)(6,18.69538)(8,24.90587)(16,45.87242)(32,85.83444)(64,153.89077)(128,221.10043)(256,104.71792)
    		};
      	\addplot[
    		color=green,
    		mark=triangle,
    		]
    		coordinates {
    		(1,5.38462)(2,8.7041)(3,15.06431)(4,21.50434)(5,21.52212)(6,28.25102)(8,42.15638)(16,74.37222)(32,120.11073)(64,153.17399)(128,90.53373)(256,118.14371)
    		};
    	\end{axis}
        \end{tikzpicture}
    \end{subfigure}
    }
    \caption*{Experiments featuring an search-heavy distribution of operations.}
\end{figure}
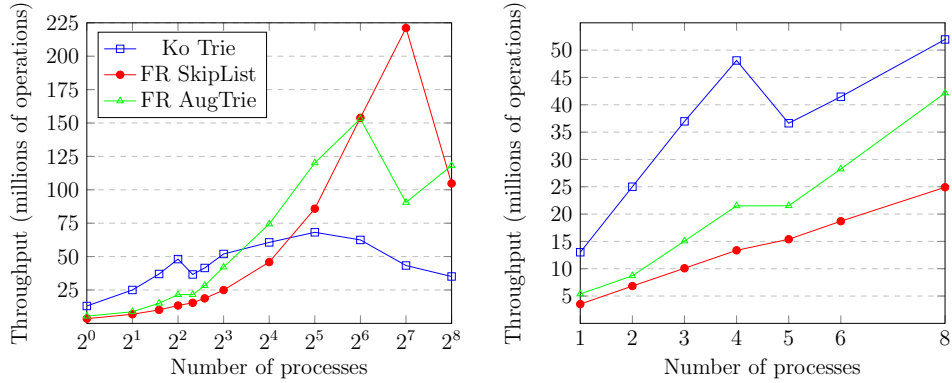
We also considered a distribution in which predecessor instances were performed with probability $\frac{8}{10}$,
searches were not performed, and both insertions and removals were performed with probability $\frac{1}{10}$.
\begin{figure}[H]
    \centering
    \scalebox{0.7}{
    \begin{subfigure}[b]{0.55\columnwidth}
    \centering
        \begin{tikzpicture}
    	\begin{axis}[
    		xlabel={Number of processes},
    		ylabel={Throughput (millions of operations)},
    		xmin=1, xmax=256,
    		xmode=log,
    		log basis x={2},
    		ymin=0, ymax=225,
    		xtick={1,2,4,8,16,32,64,128,256},
    		ytick={25,50,75,100,125,150,175,200,225,250},
    		legend pos=north west,
    		ymajorgrids=true,
    		grid style=dashed,
    	]
    
        \addplot[
    		color=blue,
    		mark=square,
    		]
    		coordinates {
    		(1,8.04417)(2,14.46376)(3,19.06758)(4,22.54988)(5,13.51762)(6,14.71572)
            (8,18.8716)(16,19.64934)(32,22.33547)(64,25.2116)(128,17.65135)(256,18.37143)
    		};
    	\addplot[
    		color=red,
    		mark=*,
    		]
    		coordinates {
    		(1,2.90639)(2,6.13641)(3,10.01733)(4,13.28881)(5,14.19672)(6,17.99629)(8,24.80377)(16,46.53304)(32,86.23938)(64,154.90445)(128,221.23762)(256,134.14849)
    		};
      	\addplot[
    		color=green,
    		mark=triangle,
    		]
    		coordinates {
    		(1,5.08839)(2,7.26068)(3,11.02185)(4,14.80887)(5,12.61614)(6,16.54895)(8,22.86479)(16,39.54315)(32,64.19982)(64,88.63299)(128,67.8434)(256,76.80396)
    		};
        \legend{Ko Trie,FR SkipList,FR AugTrie}
    	\end{axis}
        \end{tikzpicture}
    \end{subfigure}
    }
        \scalebox{0.7}{
    \begin{subfigure}[b]{0.55\columnwidth}
    \centering
        \begin{tikzpicture}
    	\begin{axis}[
    		xlabel={Number of processes},
    		ylabel={Throughput (millions of operations)},
    		xmin=1, xmax=8,
    		ymin=0, ymax=30,
    		xtick={1,2,3,4,5,6,8,16,32,64,128,256},
    		ytick={5,10,15,20,25,30,35,40,45,50},
    		ymajorgrids=true,
    		grid style=dashed,
    	]
    
        \addplot[
    		color=blue,
    		mark=square,
    		]
    		coordinates {
    		(1,8.04417)(2,14.46376)(3,19.06758)(4,22.54988)(5,13.51762)(6,14.71572)
            (8,18.8716)(16,19.64934)(32,22.33547)(64,25.2116)(128,17.65135)(256,18.37143)
    		};
    	\addplot[
    		color=red,
    		mark=*,
    		]
    		coordinates {
    		(1,2.90639)(2,6.13641)(3,10.01733)(4,13.28881)(5,14.19672)(6,17.99629)(8,24.80377)(16,46.53304)(32,86.23938)(64,154.90445)(128,221.23762)(256,134.14849)
    		};
      	\addplot[
    		color=green,
    		mark=triangle,
    		]
    		coordinates {
    		(1,5.08839)(2,7.26068)(3,11.02185)(4,14.80887)(5,12.61614)(6,16.54895)(8,22.86479)(16,39.54315)(32,64.19982)(64,88.63299)(128,67.8434)(256,76.80396)
    		};
    	\end{axis}
        \end{tikzpicture}
    \end{subfigure}
    }
    \caption*{Experiments featuring an predecessor-heavy distribution of operations.}
\end{figure}
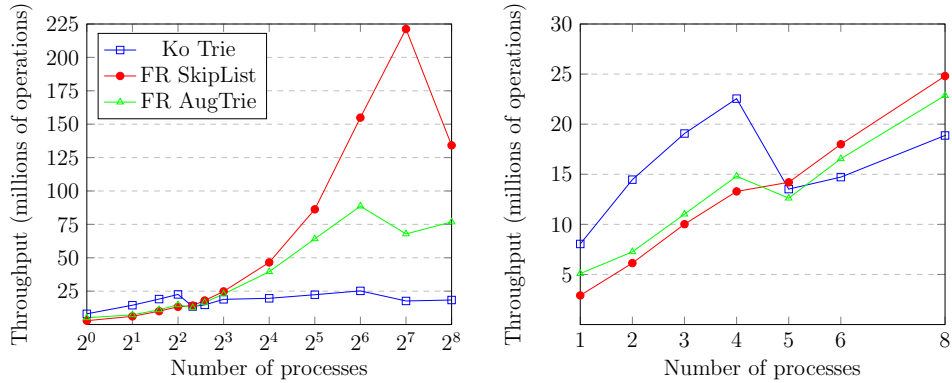

The performance of our implementation of Ko's Trie is poor when 
predecessors are frequently performed and updates are infrequent.
This is understandable, since predecessor instances may have some update operation instances 
that overlap with them, making the predecessor instances and the update instances perform more work.
However, as expected, Ko's Trie performs well in the query-heavy and search-heavy workloads. 
The performance of the Augmented Trie was much better in these sets of experiments, 
especially in experiments with a search-heavy distribution of operations,
in which it outperforms the SkipList 
at $2^4$, $2^5$ and $2^8$ processes 
and has equal performance to that of the SkipList at $2^6$ processes.
The implementations of search and predecessor on the SkipList are quite similar 
so it is unsurprising that the performance of the SkipList remains similar 
among these workloads in which queries were more common.
As in the other results we have seen, the SkipList outperforms the other implementations considered 
at higher numbers of processes and continues to scale extremely well.
 
We wanted to explore the impact of increasing the universe size on the performance of these implementations.
We finally tested the performance of each data structure when the size of the universe was increased to $2^{22}$ keys.
In these experiments, processes were equally likely to perform each kind of operation and the $N$ processes were 
pinned to the first $N$ hardware threads on the system.
\begin{figure}[H]
    \centering
    \scalebox{0.7}{
    \begin{subfigure}[b]{0.55\columnwidth}
    \centering
        \begin{tikzpicture}
    	\begin{axis}[
    		xlabel={Number of processes},
    		ylabel={Throughput (millions of operations)},
    		xmin=1, xmax=256,
    		xmode=log,
    		log basis x={2},
    		ymin=0, ymax=150,
    		xtick={1,2,4,8,16,32,64,128,256},
    		ytick={25,50,75,100,125,150,175,200,225,250},
    		legend pos=north west,
    		ymajorgrids=true,
    		grid style=dashed,
    	]
    
        \addplot[
    		color=blue,
    		mark=square,
    		]
    		coordinates {
    		(1,6.879000)(2,12.185860)(3,16.619770)(4,20.327880)(5,15.784300)(6,14.439500)
            (8,18.100330)(16,20.213930)(32,20.366080)(64,19.608680)(128,14.01375)(256,10.59146)
    		};
    	\addplot[
    		color=red,
    		mark=*,
    		]
    		coordinates {
    		(1,2.218160)(2,4.545260)(3,6.850160)(4,9.156460)(5,10.229890)(6,12.169480)(8,17.014180)(16,31.583700)(32,58.226240)(64,103.810630)(128,127.07871)(256,119.805630)
    		};
      	\addplot[
    		color=green,
    		mark=triangle,
    		]
    		coordinates {
    		(1,4.02804)(2,4.02485)(3,6.55981)(4,9.38498)(5,9.51169)(6,12.14874)(8,17.35578)(16,32.01296)(32,50.57669)(64,61.40207)(128,38.25354)(256,31.58648)
    		};
        \legend{Ko Trie,FR SkipList,FR AugTrie}
    	\end{axis}
        \end{tikzpicture}
    \end{subfigure}
    }
        \scalebox{0.7}{
    \begin{subfigure}[b]{0.55\columnwidth}
    \centering
        \begin{tikzpicture}
    	\begin{axis}[
    		xlabel={Number of processes},
    		ylabel={Throughput (millions of operations)},
    		xmin=1, xmax=8,
    		ymin=0, ymax=25,
    		xtick={1,2,3,4,5,6,8,16,32,64,128,256},
    		ytick={5,10,15,20,25,30,35,40,45,50},
    		ymajorgrids=true,
    		grid style=dashed,
    	]
    
        \addplot[
    		color=blue,
    		mark=square,
    		]
    		coordinates {
    		(1,6.879000)(2,12.185860)(3,16.619770)(4,20.327880)(5,15.784300)(6,14.439500)
            (8,18.100330)(16,20.213930)(32,20.366080)(64,19.608680)(128,14.01375)(256,10.59146)
    		};
    	\addplot[
    		color=red,
    		mark=*,
    		]
    		coordinates {
    		(1,2.218160)(2,4.545260)(3,6.850160)(4,9.156460)(5,10.229890)(6,12.169480)(8,17.014180)(16,31.583700)(32,58.226240)(64,103.810630)(128,127.07871)(256,119.805630)
    		};
      	\addplot[
    		color=green,
    		mark=triangle,
    		]
    		coordinates {
    		(1,4.02804)(2,4.02485)(3,6.55981)(4,9.38498)(5,9.51169)(6,12.14874)(8,17.35578)(16,32.01296)(32,50.57669)(64,61.40207)(128,38.25354)(256,31.58648)
    		};
    	\end{axis}
        \end{tikzpicture}
    \end{subfigure}
    }
    \caption*{Experiments in which an equal distribution of operations was performed, where the universe consisted of $2^{22}$ keys.}
\end{figure}

Surprisingly, the performance of the SkipList drops significantly when the size of the universe is increased. 
With 32 processes and an equal distribution of all operations, the average throughput of the SkipList over 5 runs 
was 78.9 million operations when the universe contained $2^{20}$ keys, 
whereas the average throughput was 58.2 million operations when the universe contained $2^{22}$ keys, a drop of 26\%.
By comparison, the performance of our implementation of Ko's Trie and the Augmented Trie actually increased when the universe size is increased.
This increase is likely due to the reduced contention when updating the relaxed binary trie or the Augmented Trie, respectively.
We suspect that the SkipList performs more poorly because, with the increased size of the universe and, thus, the increased size of the SkipList, 
the nodes that processes frequently visit at higher levels of the SkipList do not fit into the caches of these processes as well.

In conclusion, our implementation of Ko's Trie is the best choice when the number of processes performing operations is expected to be low,
the workload consists of few remove operation instances and many insert and search instances.
Its performance does not suffer greatly when the size of the universe is increased.
The SkipList is the best choice when the number of processes performing operations is expected to be high, 
it scales extremely well compared to the other implementations we considered.
Its performance is reduced slightly as the size of the universe increases.
The Augmented Trie performs very well when query operation instances are more common compared to update instances, 
especially at higher numbers of processes.
It is also wait-free and supports a wide variety of operations like range-queries and size-queries that 
are not supported by the SkipList or Ko's Trie.
Therefore, while its performance is not always as good as that of Ko's Trie or the SkipList, its improved progress guarantees 
and functionality can make it a good option.

\printbibliography[heading=bibintoc]
\end{document}